\newcommand{\AX}[1]{\textnormal{#1}}
\DeclareMathOperator{\lca}{lca}
\DeclareMathOperator{\child}{child}
\DeclareMathOperator{\outdegree}{outdeg}
\def\arrowedvec{\mathaccent"017E}
\newcommand{\G}{\arrowedvec{G}}
\newcommand{\rst}{_{|.}}
\newcommand{\hourglass}{\mathrel{\text{\ooalign{$\searrow$\cr$\nearrow$}}}}
\newtheorem{theorem}{Theorem}
\newtheorem{lemma}{Lemma}
\newtheorem{corollary}{Corollary}
\newtheorem{definition}{Definition}
\newtheorem{proposition}{Proposition}
\newtheorem{fact}[theorem]{Observation}
\newcommand{\cupdot}{\charfusion[\mathbin]{\cup}{\cdot}}
\def\moverlay{\mathpalette\mov@rlay}
\def\mov@rlay#1#2{\leavevmode\vtop{%
    \baselineskip\z@skip \lineskiplimit-\maxdimen
    \ialign{\hfil$\m@th#1##$\hfil\cr#2\crcr}}}
\newcommand{\charfusion}[3][\mathord]{
  #1{\ifx#1\mathop\vphantom{#2}\fi
    \mathpalette\mov@rlay{#2\cr#3}
  }
  \ifx#1\mathop\expandafter\displaylimits\fi}
\DeclareRobustCommand\bigop[1]{%
  \mathop{\vphantom{\sum}\mathpalette\bigop@{#1}}\slimits@
}
\newcommand{\bigop@}[2]{%
  \vcenter{%
    \sbox\z@{$#1\sum$}%
    \hbox{\resizebox{\ifx#1\displaystyle.9\fi\dimexpr\ht\z@+\dp\z@}{!}{$\m@th#2$}}%
  }%
}
\title{Least resolved trees for two-colored best match graphs}
\author[1,2]{David Schaller}
\author[3]{Manuela Gei{\ss}}
\author[4]{Marc Hellmuth}
\author[1,2,5,6,7]{Peter F.\ Stadler}
\affil[1]{Max Planck Institute for Mathematics in the Sciences,
  Inselstra{\ss}e 22, D-04103 Leipzig, Germany
  \authorcr \texttt{sdavid@bioinf.uni-leipzig.de}}
\affil[2]{Bioinformatics Group, Department of Computer Science \&
  Interdisciplinary Center for Bioinformatics, Universit{\"a}t Leipzig,
  H{\"a}rtelstra{\ss}e~16--18, D-04107 Leipzig, Germany
  \authorcr \texttt{studla@bioinf.uni-leipzig.de}}
\affil[3]{Software Competence Center Hagenberg GmbH, Softwarepark 21,
  A-4232 Hagenberg, Austria
  \authorcr \texttt{manuela.geiss@scch.at}}
\affil[4]{Department of Mathematics, Faculty of Science, Stockholm University,
  SE - 106 91 Stockholm, Sweden
  \authorcr \texttt{marc.hellmuth@math.su.se}}
\affil[5]{Institute for Theoretical Chemistry, University of Vienna,
  W{\"a}hringerstra{\ss}e 17, A-1090 Wien, Austria}
\affil[6]{Facultad de
  Ciencias, Universidad Nacional de Colombia, Bogot{\'a}, Colombia}
\affil[7]{The Santa Fe Institute, 1399 Hyde Park Rd., Santa Fe, NM
  87501, USA}
\date{\ }
\begin{document}

\maketitle


\abstract{2-colored best match graphs (2-BMGs) form a subclass of sink-free
  bi-transitive graphs that appears in phylogenetic combinatorics. There,
  2-BMGs describe evolutionarily most closely related genes between a pair
  of species. They are explained by a unique least resolved tree
  (LRT). Introducing the concept of support vertices we derive an
  $O(|V|+|E|\log^2|V|)$-time algorithm to recognize 2-BMGs and to construct
  its LRT. The approach can be extended to also recognize
  binary-explainable 2-BMGs with the same complexity. An empirical
  comparison emphasizes the efficiency of the new algorithm.}

\sloppy

\section{Introduction} 

Best match graphs recently have been introduced in phylogenetic
combinatorics to formalize the notion of a gene $y$ in species $2$ being an
evolutionary closest relative of a gene $x$ in species $1$, i.e., $y$ is a
best match for $x$ \cite{Geiss:19a}. The best matches between genes of two
species form a bipartite directed graph, the 2-colored best match graph or
2-BMG, that is determined by the phylogenetic tree describing the evolution
of the genes. 2-BMGs are characterized by four local properties
\cite{Geiss:19a,Korchmaros:20a} that relate them to previously studied
classes of digraphs:
\begin{definition}\label{def:bmg}
  A bipartite digraph $\G=(L,E)$ is a 2-BMG if it satisfies
  \begin{description}
  \item[\AX{(N0)}] Every vertex has at least one out-neighbor, i.e., $\G$
    is \emph{sink-free}.
  \item[\AX{(N1)}] If $u$ and $v$ are two independent vertices, then
    there exist no vertices $w$ and $t$ such that $(u,t), (v,w), (t,w) \in E$.
  \item[\AX{(N2)}] For any four vertices $u_1,u_2,v_1,v_2$ with
    $(u_1,v_1), (v_1,u_2), (u_2,v_2) \in E$ we have $(u_1,v_2)\in E$, i.e.,
    $\G$ is \emph{bi-transitive}.
  \item[\AX{(N3)}] For any two vertices $u$ and $v$ with a common
    out-neighbor, if there exists no vertex $w$ such that either
    $(u,w), (w,v) \in E$, or $(v,w), (w,u) \in E$, then $u$ and $v$ have
    the same in-neighbors and either all out-neighbors of $u$ are also
    out-neighbors of $v$ or all out-neighbors of $v$ are also out-neighbors
    of $u$.
  \end{description}
\end{definition}
Sink-free graphs have appeared in particular in the context of graph
semigroups \cite{Abrams:10} and graph orientation problems
\cite{Cohn:02}. Bi-transitive graphs were introduced in \cite{Das:20} in
the context of oriented bipartite graphs and investigated in more detail in
\cite{Korchmaros:20a,Korchmaros:20b}. The class of graphs satisfying
\AX{(N1)}, \AX{(N2)}, and \AX{(N3)} are characterized by a system of
forbidden induced subgraphs \cite{Schaller:20y}, see
Thm.~\ref{thm:2BMG-Fx-charac} below.

In general, best match graphs (BMGs) are defined as vertex-colored digraphs
$(\G,\sigma)$, where the vertex coloring $\sigma$ assigns to each gene $x$
the species $\sigma(x)$ in which it resides. The subgraphs of a BMG induced
by vertices of two distinct colors form a 2-BMG. Note that in this context
the vertex coloring is assigned \emph{a priori}, while Def.~\ref{def:bmg}
induces a coloring that is unique only up to relabeling of the colors
independently on each (weakly) connected component of $\G$. For each BMG
$(\G,\sigma)$, there is a unique least resolved leaf-colored tree
$(T^*,\sigma)$ with leaves corresponding to the vertices of $(\G,\sigma)$
such that the arcs in $(\G,\sigma)$ are the best matches w.r.t.\
$(T^*,\sigma)$ (cf.\ Def.\ \ref{def:bm} below).
Fig.~\ref{fig:LRT-2BMG-example} shows an example for a 2-BMG together with
its least resolved tree.  Using certain sets of rooted triples that can be
inferred from the 2-colored induced subgraphs of $(\G,\sigma)$ with three
vertices, it is possible to determine whether $(G,\sigma)$ is a BMG in
polynomial time and, if so, to construct the least resolved tree
$(T^*,\sigma)$ \cite{Geiss:19a,BMG-corrigendum}.  This work also describes
$O(|V|^3)$-time algorithms for the recognition of 2-BMGs and the
construction of the LRT for a given 2-BMG.

\begin{figure}[t]
  \begin{center}
    \includegraphics[width=0.8\textwidth]{./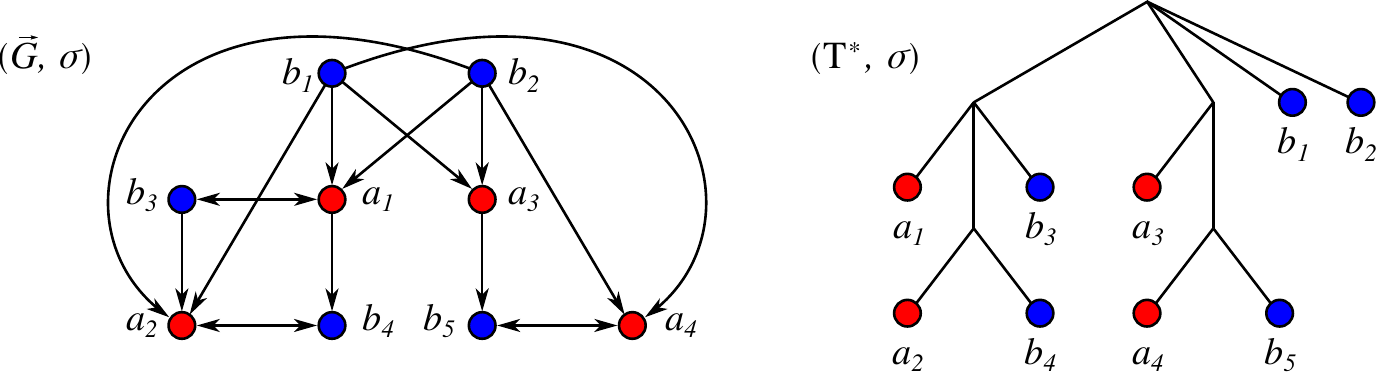}
  \end{center}
  \caption{Example for a 2-BMG $(\G,\sigma)$ and its explaining least
    resolved tree $(T^*,\sigma)$.}
  \label{fig:LRT-2BMG-example}
\end{figure}

In this contribution, we derive an alternative characterization of 2-BMGs
that avoids the use of rooted triples. This will give rise to an
alternative, efficient algorithm for the recognition of 2-BMGs and the
construction of the least resolved tree. The contribution is organized as
follows: In Sec.~\ref{sect:prelim}, we introduce the necessary notation and
review some results from the published literature that are needed later
on. Sec.~\ref{sect:LRT} is concerned with a more detailed analysis of the
least resolved trees (LRTs) of BMGs with an arbitrary number of colors.  We
then turn to the peculiar properties of the LRTs of 2-BMGs in
Sec.~\ref{sect:support}. To this end, we introduce the concept of ``support
leaves'' that uniquely determine the LRT. The main result of this section
is Thm.~\ref{thm:support-leaves-are-S}, which shows that the support leaves
of the root can be identified directly in the 2-BMG. In
Sec.~\ref{sect:algo}, we then turn Thm.~\ref{thm:support-leaves-are-S} into
an efficient algorithm for recognizing 2-BMGs and constructing their LRTs.
Computational experiments demonstrate the performance gain in practise.  In
Sec.~\ref{sect:be-2BMG} we extend the algorithmic approach to
binary-explainable 2-BMGs, a subclass that features an additional forbidden
induced subgraph.

\section{Preliminaries}
\label{sect:prelim}

Let $T=(V,E)$ be a tree with root $\rho$ and leaf set
$L\coloneqq L(T)\subset V$. The set of inner vertices of $T$ is
$V^0(T)\coloneqq V\setminus L$, in particular $\rho$ is an inner vertex. An
edge $e=uv\in E(T)$ is called an \emph{inner} edge of $T$ if $u$ and $v$
are both inner vertices. Otherwise it is called an \emph{outer} edge. We
consider leaf-colored trees $(T,\sigma)$ and write
$\sigma(L')\coloneqq \{\sigma(v)\mid v\in L'\}$ for subsets
$L'\subseteq L$.  A vertex $u\in V$ is an \emph{ancestor} of $v\in V$ in
$T$, in symbols $v\preceq_T u$, if $u$ lies on the path from $\rho$ to
$v$. For the edges $uv\in E(T)$ we use the convention that $uv\in E$,
$v\prec_T u$, $v$ is a child of $u$. We write $\child_T(u)$ for the set of
children of $u$ in $T$ and $T(u)$ for the subtree of $T$ rooted in $u$.
The \emph{least common ancestor} $\lca_{T}(A)$ is the unique
$\preceq_T$-smallest vertex that is an ancestor of all genes in
$A$. Writing $\lca_{T}(x,y)\coloneqq\lca_{T}(\{x,y\})$, we have
\begin{definition}\label{def:bm}
  Let $(T,\sigma)$ be a leaf-colored tree. A leaf $y\in L(T)$ is a
  \emph{best match} of the leaf $x\in L(T)$ if $\sigma(x)\neq\sigma(y)$ and
  $\lca(x,y)\preceq_T \lca(x,y')$ holds for all leaves $y'$ of color
  $\sigma(y')=\sigma(y)$.
  \label{def:BMG}
\end{definition}
Given $(T,\sigma)$, the graph $\G(T,\sigma) = (V,E)$ with vertex set
$V=L(T)$, vertex coloring $\sigma$, and with arcs $(x,y)\in E$ if and only
if $y$ is a best match of $x$ w.r.t.\ $(T,\sigma)$ is called the \emph{best
  match graph} (BMG) of $(T,\sigma)$ \cite{Geiss:19a}.
\begin{definition}\label{def:BestMatchGraph}
  An arbitrary vertex-colored graph $(\G,\sigma)$ is a \emph{best match
  graph (BMG)} if there exists a leaf-colored tree $(T,\sigma)$ such that
  $(\G,\sigma) = \G(T,\sigma)$. In this case, we say that $(T,\sigma)$
  \emph{explains} $(\G,\sigma)$.
\end{definition}

\begin{theorem}[\textnormal{\cite{Geiss:19a}, Thm.~9}]
  If $(\G,\sigma)$ is a BMG, then there is a unique least-resolved
  tree $(T,\sigma)$ that explains  $(\G,\sigma)$.
\end{theorem}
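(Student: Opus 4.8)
Here is how I would approach proving the existence and uniqueness of the least resolved explaining tree.

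I would phrase the whole theorem in terms of the set $\mathcal{T}$ of all leaf-colored trees explaining $(\G,\sigma)$, which is finite and, by Def.~\ref{def:BestMatchGraph}, nonempty. Read ``least resolved'' in the usual way: contracting any nonempty set of inner edges must destroy the property of explaining $(\G,\sigma)$. Identify a tree $T$ on the leaf set $L$ with its cluster system $\mathcal{C}(T)\coloneqq\{L(T(u))\mid u\text{ a vertex of }T\}$, so that $T$ refines $T'$ exactly when $\mathcal{C}(T)\supseteq\mathcal{C}(T')$. The plan is to prove that $\mathcal{T}$ has a least element $(T^*,\sigma)$ in this refinement order, i.e.\ one refined by every explaining tree. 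This already settles the theorem: $T^*$ explains $(\G,\sigma)$ and is least resolved, since a proper contraction of $T^*$ is a proper coarsening and hence not in $\mathcal{T}$; and if $\hat T\in\mathcal{T}$ is least resolved, then $\hat T$ refines $T^*$, so $T^*$ is obtained from $\hat T$ by contracting the edges carrying the clusters in $\mathcal{C}(\hat T)\setminus\mathcal{C}(T^*)$, which by leastness of $\hat T$ forces $\hat T=T^*$.

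To produce $T^*$ I would combine two facts. The first is a routine monotonicity lemma: \emph{contracting an inner edge can only add arcs to the best match graph, never remove them}; equivalently, if $T$ refines $T'$ then the arc set of $\G(T,\sigma)$ is contained in that of $\G(T',\sigma)$. Indeed, by Def.~\ref{def:bm}, if $y$ already fails to be a best match of $x$ in the coarser tree $T'$ because some $y'$ with $\sigma(y')=\sigma(y)$ satisfies $\lca_{T'}(x,y')\prec_{T'}\lca_{T'}(x,y)$, then the cluster $L(T'(v))$ with $v=\lca_{T'}(x,y')$ is again a cluster of the refinement $T$, whence $\lca_T(x,y')\prec_T\lca_T(x,y)$ and $y$ is no best match of $x$ in $T$ either. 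The second, substantial, fact is \emph{closure of $\mathcal{T}$ under meets}: if $(T_1,\sigma),(T_2,\sigma)\in\mathcal{T}$, then the tree $(T_1\wedge T_2,\sigma)$ whose cluster system is $\mathcal{C}(T_1)\cap\mathcal{C}(T_2)$ -- an intersection of two cluster systems is again one, so this is a well-defined tree, namely the meet in the refinement order -- also lies in $\mathcal{T}$. Given this, $\mathcal{T}$ is closed under all finite meets, so $T^*\coloneqq\bigwedge_{T\in\mathcal{T}}T\in\mathcal{T}$ is the required least element.

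For the meet-closure claim, one inclusion is immediate from monotonicity: $T_1$ refines $T_1\wedge T_2$, so every arc of $\G=\G(T_1,\sigma)$ is an arc of $\G(T_1\wedge T_2,\sigma)$. The real task is the reverse inclusion, that $T_1\wedge T_2$ produces \emph{no arc outside $\G$}. Suppose $(x,y)\in\G(T_1\wedge T_2,\sigma)\setminus\G$. Since $(x,y)\notin\G=\G(T_1,\sigma)=\G(T_2,\sigma)$, there are best matches $y_1$ of $x$ in $T_1$ and $y_2$ of $x$ in $T_2$, both of color $\sigma(y)$, with $\lca_{T_i}(x,y_i)\prec_{T_i}\lca_{T_i}(x,y)$. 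Let $C$ be the smallest cluster common to $T_1$ and $T_2$ that contains $\{x,y\}$; from the two strict inequalities one checks $y_1,y_2\in C$, so that already inside the subtrees $T_1|_C$ and $T_2|_C$ the leaves $y_1,y_2$ keep $y$ from being a best match of $x$, whereas $(x,y)$ persists as an arc in their meet $(T_1\wedge T_2)|_C$. This localizes the contradiction to the tops of $T_1|_C$ and $T_2|_C$, where the structure of the competing best matches of $x$ is controlled by property \AX{(N3)} of Def.~\ref{def:bmg} (together with \AX{(N1)}), and I would extract the contradiction there.

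I expect this last step -- the no-spurious-arc half of meet-closure -- to be the crux. It cannot be handled by soft, order-theoretic arguments, because two trees explaining the same BMG may genuinely resolve some leaf-triple differently; one must use the local axioms \AX{(N0)}--\AX{(N3)} (equivalently, the fact that the non-arcs of a 2-BMG are precisely those forced by its informative leaf-triples) to see that the meet still recovers all the non-arcs. This is also where the argument of \cite{Geiss:19a} required the correction of \cite{BMG-corrigendum}. An essentially equivalent alternative, and the route actually taken in \cite{Geiss:19a,BMG-corrigendum}, dispenses with the lattice picture: read off from $(\G,\sigma)$ the set $\mathcal{R}$ of informative rooted triples, note that every explaining tree displays $\mathcal{R}$, construct one candidate tree from $\mathcal{R}$ via the \textsc{BUILD}/Aho algorithm, and then prove directly that this tree explains $(\G,\sigma)$ and is refined by every explaining tree -- the two nontrivial lemmas there taking over the role played here by meet-closure.
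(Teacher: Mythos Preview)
The paper itself does not prove this statement; it is quoted from \cite{Geiss:19a} (with the correction in \cite{BMG-corrigendum}), and the proof there is precisely the one you describe in your final paragraph: extract the informative triple set $\mathscr{R}(\G,\sigma)$, feed it to \textsc{BUILD}/Aho, verify that the resulting tree explains $(\G,\sigma)$, and use (what appears here as) Lemma~\ref{lem:informative_triples} to see that every explaining tree displays $\mathscr{R}$ and hence refines the Aho tree. So your ``alternative'' is in fact the actual argument in the cited source.

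Your primary route---showing that the family $\mathcal{T}$ of explaining trees is closed under meets in the refinement lattice, whence the global meet is the LRT---is an attractive reformulation, but the sketch stops exactly at the hard step. After localizing a hypothetical spurious arc $(x,y)$ of $\G(T_1\wedge T_2,\sigma)$ to a minimal common cluster $C$, you appeal to \AX{(N1)} and \AX{(N3)} to ``extract the contradiction'' without saying how. Two concrete gaps remain. First, the theorem is stated for BMGs with arbitrarily many colors, while \AX{(N0)}--\AX{(N3)} characterize only the 2-colored case; a reduction to the two colors $\sigma(x),\sigma(y)$ is not automatic, because the cluster system $\mathcal{C}(T_1)\cap\mathcal{C}(T_2)$ is determined by all leaves, not just those two color classes. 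Second, even for $\ell=2$ you have not exhibited which forbidden configuration the spurious arc produces, and that is exactly the content the corrigendum \cite{BMG-corrigendum} had to supply in the triple language. As written, the meet-closure lemma is asserted rather than proved, and without it the lattice argument does not close; the triple-based route you defer to is what actually carries the theorem.
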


We say that $(\G,\sigma)$ is an $\ell$-BMG if $\sigma:V(\G)\to S$ is
surjective and $|S|=\ell$.  Given a directed graph $\G=(V,E)$ we denote the
set of out-neighbors of a vertex $x\in V$ by
$N(x)\coloneqq \{y\in V| (x,y)\in E(\G)\}$ and the out-degree $|N(x)|$ of
$x$ by $\outdegree(x)$. Similarly,
$N^{-}(x) \coloneqq \{y\in V| (y,x)\in E(\G)\}$ denotes the set of
in-neighbors. By construction, the coloring $\sigma$ of a BMGs
$(\G,\sigma)$ is \emph{proper}, i.e., $x\in N(y)$ implies
$\sigma(x)\ne\sigma(y)$, and there is at least one best match of $x$ for
every color $s\in\sigma(V)\setminus\{\sigma(x)\}$. In particular,
therefore, we have $N(x)\ne\emptyset$ for every 2-BMG, i.e., every 2-BMG is
sink-free. Note that BMGs will in general have sources, i.e., $N^{-}(x)$
may be empty. We write $\G[W]$ for the subgraph of $\G=(V,E)$ induced by
$W\subseteq V$ and $\G-W$ for $\G[V\setminus W]$.  A directed graph is
(weakly) connected if its underling undirected graph is connected.  A
\emph{connected component} is a maximal connected subgraph of $\G$.
  
Following \cite{Semple:03} we say that $T'$ is \emph{displayed} by $T$, in
symbols $T'\le T$, if the tree $T'$ can be obtained from a subtree of $T$
by contraction of edges. For leaf-colored trees we say that $(T,\sigma)$
\emph{displays} or \emph{is a refinement of} $(T',\sigma')$,
whenever $T'\le T$ and $\sigma(v)=\sigma'(v)$ for all $v\in L(T')$.
\begin{definition}
  An edge $e\in E(T)$ is redundant with respect to $\G(T,\sigma)$ if the
  tree $T_e$ obtained by contracting the edge $e$ satisfies
  $\G(T_e,\sigma)=\G(T,\sigma)$.
\end{definition}
We will need the following characterization of redundant edges:
\begin{lemma}[\textnormal{\cite{Schaller:20x}, Lemma~2.10}]
  \label{lem:redundant_edges}
  Let $(\G,\sigma)$ be a BMG explained by a tree $(T,\sigma)$.  The edge
  $e=uv$ in $(T,\sigma)$ is redundant w.r.t.\ $(\G,\sigma)$ if and only if
  (i) $e$ is an inner edge of $T$ and (ii) there is no arc $(a,b)\in E(\G)$
  such that $\lca_T(a,b)=v$ and
  $\sigma(b)\in \sigma(L(T(u))\setminus L(T(v)))$.
\end{lemma}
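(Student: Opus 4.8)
The plan is to track precisely how least common ancestors change when the edge $e=uv$ (with $v\prec_T u$) is contracted, and to read off both implications of the biconditional from that analysis. Let $T_e$ be the tree obtained by contracting $e$: the children of $v$ become children of $u$, the vertex $v$ disappears, and nothing else is altered. First I would dispose of the ``only if'' part of (i): if $e$ is an \emph{outer} edge then, since $u$ is a parent and hence an inner vertex, $v$ must be a leaf, so contracting $e$ changes the leaf set and $\G(T_e,\sigma)$ and $\G(T,\sigma)$ have different vertex sets; thus $e$ is not redundant. From now on I may therefore assume $e$ is an inner edge, so that $L(T_e)=L(T)$ and only the arc sets are in question.

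For an inner edge $e$, define $\phi\colon V(T)\to V(T_e)$ by $\phi(v)\coloneqq u$ and $\phi(w)\coloneqq w$ for $w\neq v$. The two structural facts to establish are: (a) $\phi$ is monotone for the ancestor orders and $\lca_{T_e}(x,y)=\phi(\lca_T(x,y))$ for all leaves $x,y$; and (b) $\phi$ is injective except that $\phi(u)=\phi(v)=u$. Fact (b) is immediate, and Fact (a) follows from a short case distinction on the position of $\lca_T(x,y)$ relative to $v$ and $u$ (equal to $v$; strictly below $v$; equal to $u$; strictly above $u$; or incomparable to $v$) — the only interesting case being $\lca_T(x,y)=v$, where the two child-subtrees of $v$ containing $x$ and $y$ become distinct child-subtrees of $u$ in $T_e$, so $\lca_{T_e}(x,y)=u=\phi(v)$.

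From Fact (a) I would deduce the inclusion $E(\G(T,\sigma))\subseteq E(\G(T_e,\sigma))$, valid for the contraction of any inner edge. The point is that for fixed $x$ and colour $s$ the candidates $\{\lca_T(x,y')\mid\sigma(y')=s\}$ all lie on the path from $x$ to $\rho$ and hence form a chain, and a monotone map sends the $\preceq_T$-minimum of a chain to the $\preceq_{T_e}$-minimum of its image; so a best match of $x$ in $T$ stays a best match of $x$ in $T_e$. Consequently $e$ is redundant if and only if \emph{no new arc appears}, i.e.\ $E(\G(T_e,\sigma))\subseteq E(\G(T,\sigma))$ as well.

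It remains to characterize when a new arc can appear, and this is exactly where (ii) enters. Suppose $(x,y)\in E(\G(T_e,\sigma))\setminus E(\G(T,\sigma))$, and let $y'$ be a best match of $x$ of colour $\sigma(y)$ in $T$, so $(x,y')\in E(\G(T,\sigma))$ and $\lca_T(x,y')\prec_T\lca_T(x,y)$ (strict, since $(x,y)\notin E(\G(T,\sigma))$ and the candidates form a chain). The arc $(x,y)$ in $T_e$ forces $\phi(\lca_T(x,y))\preceq_{T_e}\phi(\lca_T(x,y'))$, which together with monotonicity is only possible if $\phi$ identifies $\lca_T(x,y')$ and $\lca_T(x,y)$; by (b) these two vertices must then be $v$ and $u$, and by the strict inequality $\lca_T(x,y')=v$ and $\lca_T(x,y)=u$. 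Hence $x\in L(T(v))$ while $y\in L(T(u))\setminus L(T(v))$, so with $a\coloneqq x$, $b\coloneqq y'$ we get an arc $(a,b)\in E(\G)$ with $\lca_T(a,b)=v$ and $\sigma(b)=\sigma(y)\in\sigma(L(T(u))\setminus L(T(v)))$ — precisely the configuration forbidden by (ii); so (i) and (ii) imply $e$ is redundant. For the converse ``redundant $\Rightarrow$ (ii)'' I would run the same computation in reverse: from a forbidden arc $(a,b)$ with $\lca_T(a,b)=v$ and a leaf $b'$ of colour $\sigma(b)$ in $L(T(u))\setminus L(T(v))$ one checks that $\lca_T(a,b')=u$, that no $\sigma(b)$-coloured leaf lies below the child of $v$ containing $a$ (because $b$ is a best match of $a$ in $T$), and hence that $\lca_{T_e}(a,b')=u$ with nothing strictly closer, so $(a,b')$ is an arc of $\G(T_e,\sigma)$ but not of $\G(T,\sigma)$, witnessing that $e$ is not redundant. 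The main thing to be careful about throughout is the bookkeeping behind Fact (a) and the ``nothing strictly closer'' claims — i.e.\ keeping straight that the relevant $\lca$-candidates always lie on a single root-path and that contraction lifts the subtrees below $v$ to $u$ without touching their interiors or the part of the tree above $u$.
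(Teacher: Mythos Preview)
The paper does not actually prove this lemma: it is quoted verbatim from \cite{Schaller:20x} (Lemma~2.10 there) and used as a black box, so there is no in-paper proof to compare against. Your argument is correct and self-contained. The key steps --- that contraction of an inner edge induces a monotone map $\phi$ with $\lca_{T_e}=\phi\circ\lca_T$, that contraction can therefore only \emph{add} arcs, and that any new arc $(x,y)$ forces $\lca_T(x,y')=v$ and $\lca_T(x,y)=u$ for a $T$-best match $y'$ --- are exactly what is needed, and your converse correctly exhibits $(a,b')$ as a witnessing new arc. One cosmetic point: in the outer-edge case you might phrase it as ``contracting $uv$ with $v$ a leaf destroys the phylogenetic tree structure (the merged vertex would have to be both a labeled leaf and an internal vertex), so $T_e$ is not a leaf-colored tree on $L(T)$'' rather than just ``changes the leaf set'', but the conclusion is the same.
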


In the following we will frequently need the restriction of the coloring
$\sigma$ on $\G$ or $L(T)$ to a subset of vertices or leaves. Since in
situations like $(G_i,\sigma_{|V(G_i)})$ the set to which $\sigma$ is
restricted is clear, we will write $\sigma\rst$ to keep the notation less
cluttered.

BMGs can also be understood in terms of their connected components:
\begin{proposition}[\textnormal{\cite{Geiss:19a}, Prop.~1}]
  A digraph $(\G,\sigma)$ is an $\ell$-BMG if and only if all its connected
  components are $\ell$-BMGs.
  \label{prop:same-colorset}
\end{proposition}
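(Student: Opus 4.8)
The plan is to reduce the best-match relation to a statement about rooted triples, since triples are the part of a tree that is inherited by restrictions to subsets of leaves. First I would record that for pairwise distinct leaves $x,y,z$ of a tree $T$ one has $\lca_T(x,y)\preceq_T\lca_T(x,z)$ if and only if the triple $xz|y$ is \emph{not} displayed by $T$ --- indeed $\lca_T(x,y)\succ_T\lca_T(x,z)$ means $xz|y$, $\lca_T(x,y)\prec_T\lca_T(x,z)$ means $xy|z$, and equality means $yz|x$ or a fan on $\{x,y,z\}$. Hence a leaf $y$ is a best match of a leaf $x$ w.r.t.\ $(T,\sigma)$ exactly when $\sigma(x)\neq\sigma(y)$ and no leaf $y'$ with $\sigma(y')=\sigma(y)$ satisfies ``$xy'|y$ displayed by $T$''. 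Since the restriction $T_{|L'}$ of $T$ to a leaf set $L'$ displays a triple $ab|c$ with $a,b,c\in L'$ if and only if $T$ does (a standard fact, see \cite{Semple:03}), this reformulation is exactly what makes both directions go through. Throughout I write $L(C)$ for the vertex (= leaf) set of a connected component $C$, and use that $C$ is closed under out-neighbors: $x\in V(C)$ and $(x,w)\in E(\G)$ imply $w\in V(C)$.

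For the implication ``$(\G,\sigma)$ is an $\ell$-BMG $\Rightarrow$ every component is'', let $(T,\sigma)$ explain $(\G,\sigma)$ and let $C$ be a component; I would show $\G(T_{|L(C)},\sigma\rst)=C$ in two steps. \emph{No arc of $C$ is lost:} if $(x,y)\in E(C)$, then no leaf $y'$ of color $\sigma(y)$ anywhere in $L(T)$ --- in particular none in $L(C)$ --- gives ``$xy'|y$ displayed by $T$'', hence none gives ``$xy'|y$ displayed by $T_{|L(C)}$'', so $y$ remains a best match of $x$. \emph{No arc is created:} if $y$ were a best match of $x$ in $T_{|L(C)}$ but $(x,y)\notin E(C)$, then some $y'$ with $\sigma(y')=\sigma(y)$ has ``$xy'|y$ displayed by $T$'', and necessarily $y'\notin L(C)$; picking a best match $w$ of $x$ of color $\sigma(y)$ w.r.t.\ $(T,\sigma)$ --- which exists because in a BMG every vertex has a best match of each color different from its own --- we get $(x,w)\in E(\G)$, so $w\in L(C)$, and $\lca_T(x,w)\preceq_T\lca_T(x,y')\prec_T\lca_T(x,y)$, i.e.\ $xw|y$ is displayed by $T$ and hence by $T_{|L(C)}$, contradicting that $y$ is a best match of $x$ there. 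Thus $C$ is a BMG; it is an $\ell$-BMG because every $x\in V(C)$ has best matches of all colors in $\sigma(V)\setminus\{\sigma(x)\}$, all lying in $V(C)$, so $\sigma(V(C))=\sigma(V)$.

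For the converse, let $C_1,\dots,C_k$ be the components, each an $\ell$-BMG explained by a tree $(T_i,\sigma\rst)$. For $k=1$ take $T=T_1$; for $k\ge 2$ let $(T,\sigma)$ be obtained by adjoining a new root $\rho$ whose children are the roots of $T_1,\dots,T_k$. For two leaves of the same $L(T_i)$, their least common ancestor in $T$ coincides with that in $T_i$ and is a proper descendant of $\rho$, whereas leaves from different $T_i$ have least common ancestor $\rho$. Since $C_i$ is an $\ell$-BMG, every color occurs in $L(T_i)$, so for $x\in L(T_i)$ and any color $s\neq\sigma(x)$ there is a leaf of color $s$ in $L(T_i)$ whose least common ancestor with $x$ is strictly below $\rho$, while every leaf outside $L(T_i)$ has least common ancestor $\rho$ with $x$; hence all best matches of $x$ w.r.t.\ $(T,\sigma)$ lie in $L(T_i)$ and agree there with those w.r.t.\ $(T_i,\sigma\rst)$. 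Therefore $\G(T,\sigma)$ has no arcs between distinct $L(T_i)$ and restricts to $C_i$ on each, so $\G(T,\sigma)=\G$; as $\sigma$ is surjective onto an $\ell$-element color set, $(\G,\sigma)$ is an $\ell$-BMG.

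The step I expect to be the real obstacle is ``no arc is created'': restricting $T$ to $L(C)$ can in general turn non--best-matches into best matches, and the argument only survives because $V(C)$ is closed under out-neighbors, which lets me play the hypothetical new best match $y$ off against a genuine best match $w\in V(C)$ of the same color. Everything else is routine bookkeeping with least common ancestors, and the degenerate cases ($\ell=1$, or $k=1$) are trivial.
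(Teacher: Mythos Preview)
The paper does not prove this proposition; it merely cites it from \cite{Geiss:19a} (Prop.~1) and uses it as a black box, so there is no in-paper argument to compare against. Your proof stands on its own and is correct: the forward direction works because weakly connected components are closed under out-arcs, so a genuine best match $w$ of the relevant color always lands inside the component and blocks any spurious new best match in the restricted tree; the converse is the standard ``join the explaining trees under a fresh root'' construction, which succeeds precisely because each component already carries all $\ell$ colors. The triple reformulation you use at the start is a clean way to transport the best-match condition along tree restrictions, and your identification of the ``no arc created'' step as the only non-trivial point is accurate.
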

As a simple consequence of Prop.\ \ref{prop:same-colorset} and by
definition of $\ell$-BMGs, all connected components $(G_i,\sigma\rst)$ and
$(G_j,\sigma\rst)$ of an $\ell$-BMG satisfy $\sigma(V(G_i))=\sigma(V(G_j))$
and $|\sigma(V(G_j))| = \ell$.  For our purposes it will also be important
to relate the structure of a tree $(T,\sigma)$ to the connectedness of the
BMG $\G(T,\sigma)$ that it explains.
\begin{proposition}[\textnormal{\cite{Geiss:19a}, Thm.~1}]
  \label{prop:bmg-connected}
  Let $(T,\sigma)$ be a leaf-labeled tree and $\G(T,\sigma)$ its BMG. Then
  $\G(T,\sigma)$ is connected if and only if there is a child $v$ of the root
  $\rho$ such that $\sigma(L(T(v)))\ne\sigma(L(T))$.
  Furthermore, if $\G(T,\sigma)$ is not connected, then for every connected
  component $\G_i$ of $\G(T,\sigma)$ there is a child $v$ of the root $\rho$ 
  such that $V(G_i)\subseteq L(T(v))$.
\end{proposition}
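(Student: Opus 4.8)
The plan is to prove the two implications of the equivalence separately and then obtain the ``furthermore'' statement as a by-product of the forward direction. Throughout I rely on the standing convention that every inner vertex of $T$ --- and in particular the root $\rho$ --- has at least two children, so that the leaf sets $L(T(v))$ with $v\in\child_T(\rho)$ partition $L(T)$ into at least two nonempty blocks.

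For the implication ``$\Leftarrow$'' I would argue as follows. Suppose some child $v_0$ of $\rho$ satisfies $\sigma(L(T(v_0)))\subsetneq\sigma(L(T))$, pick a colour $s^*\in\sigma(L(T))\setminus\sigma(L(T(v_0)))$, and fix an arbitrary leaf $z_0\in L(T(v_0))$; note that $\sigma(z_0)\ne s^*$. The crucial point is that \emph{every} leaf $x$ of colour $s^*$ is a best match of $z_0$: since $T(v_0)$ contains no $s^*$-coloured leaf, every such $x$ lies outside $L(T(v_0))$, so $\lca_T(z_0,x)=\rho$; as this equality holds for all $s^*$-coloured leaves simultaneously, the minimality requirement of Def.~\ref{def:bm} is met by each of them, whence $(z_0,x)\in E(\G(T,\sigma))$. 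Thus all $s^*$-coloured leaves lie in the connected component $C$ of $z_0$. It then remains to put every other leaf into $C$: a leaf $y$ with $\sigma(y)\ne s^*$ has a best match $x$ of colour $s^*$ (because $s^*\in\sigma(L(T))\setminus\{\sigma(y)\}$ and $s^*$-coloured leaves exist), and $(y,x)\in E$ with $x\in C$ forces $y\in C$. Hence every leaf lies in $C$ and $\G(T,\sigma)$ is connected.

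For the implication ``$\Rightarrow$'' I would prove the contrapositive and at the same time set up the refinement claim. Assume $\sigma(L(T(v)))=\sigma(L(T))$ for every child $v$ of $\rho$. I first show that $\G(T,\sigma)$ has no arc, in either direction, between two distinct blocks $L(T(v))$ and $L(T(v'))$: if $(x,y)\in E$ with $x\in L(T(v))$ and $y\in L(T(v'))$, then $\lca_T(x,y)=\rho$, whereas $\sigma(y)\in\sigma(L(T(v)))$ provides a leaf $y'\in L(T(v))$ of the same colour with $\lca_T(x,y')\preceq_T v\prec_T\rho$, contradicting that $y$ is a best match of $x$ (the case $(y,x)\in E$ follows by exchanging the roles of $v$ and $v'$). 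Consequently the blocks $L(T(v))$ are unions of connected components of $\G(T,\sigma)$; since there are at least two nonempty blocks, $\G(T,\sigma)$ is disconnected, which is exactly the contrapositive of ``$\Rightarrow$''. For the ``furthermore'' part, if $\G(T,\sigma)$ is disconnected then, by the implication ``$\Leftarrow$'' just proved, every child $v$ of $\rho$ satisfies $\sigma(L(T(v)))=\sigma(L(T))$, so the no-arc statement applies and any connected component --- being connected --- must lie inside a single block $L(T(v))$.

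I expect the only step that is not a mechanical least-common-ancestor comparison to be the identification of the right ``hub'' in the ``$\Leftarrow$'' direction: recognizing that the colour $s^*$ absent from $L(T(v_0))$ ties the whole graph together, because via any $z_0\in L(T(v_0))$ it links all $s^*$-coloured leaves into one component and every remaining leaf hangs directly off one of those leaves. Once this star-like configuration is spotted, both implications and the component refinement follow from comparing least common ancestors.
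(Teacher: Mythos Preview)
Your argument is correct. Note, however, that the paper does not supply its own proof of this proposition: it is quoted as Thm.~1 of \cite{Geiss:19a} and used as a black box, so there is no in-paper proof to compare against. Your approach---using a missing colour $s^*$ below some child $v_0$ to build a star through a fixed leaf $z_0\in L(T(v_0))$ for the ``$\Leftarrow$'' direction, and a direct $\lca$ comparison to exclude cross-block arcs for the contrapositive of ``$\Rightarrow$''---is the standard one and matches the original proof in \cite{Geiss:19a}. Your explicit reliance on the root having at least two children is appropriate; the proposition tacitly assumes phylogenetic trees, as does the rest of the paper.
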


Moreover, 2-BMGs can be characterized by three types of forbidden subgraphs
\cite{Schaller:20y}. To this end we will need the following classes of
small bipartite graphs:
\begin{definition}[F1-, F2-, and F3-graphs]\par\noindent
  \begin{description}
  \item[\AX{(F1)}] A properly 2-colored graph on four distinct vertices
    $V=\{x_1,x_2,y_1,y_2\}$ with coloring
    $\sigma(x_1)=\sigma(x_2)\ne\sigma(y_1)=\sigma(y_2)$ is an
    \emph{F1-graph} if $(x_1,y_1),(y_2,x_2),(y_1,x_2)\in E$ and
    $(x_1,y_2),(y_2,x_1)\notin E$.
  \item[\AX{(F2)}] A properly 2-colored graph on four distinct vertices
    $V=\{x_1,x_2,y_1,y_2\}$ with coloring
    $\sigma(x_1)=\sigma(x_2)\ne\sigma(y_1)=\sigma(y_2)$ is an
    \emph{F2-graph} if $(x_1,y_1),(y_1,x_2),(x_2,y_2)\in E$ and
    $(x_1,y_2)\notin E$.
  \item[\AX{(F3)}] A properly 2-colored graph on five distinct vertices
    $V=\{x_1,x_2,y_1,y_2,y_3\}$ with coloring
    $\sigma(x_1)=\sigma(x_2)\ne\sigma(y_1)=\sigma(y_2)=\sigma(y_3)$ is an
    \emph{F3-graph} if\newline
    $(x_1,y_1),(x_2,y_2),(x_1,y_3),(x_2,y_3)\in E$ and
    $(x_1,y_2),(x_2,y_1)\notin E$.
  \end{description}
  \label{def:forbidden-subgraphs}
\end{definition}

\begin{theorem}[\textnormal{\cite{Schaller:20y}, Thm.~3.4}]
  \label{thm:2BMG-Fx-charac}
  A properly 2-colored graph is a 2-BMG if and only if it is sink-free and
  does not contain an induced F1-, F2-, or F3-graph.
\end{theorem}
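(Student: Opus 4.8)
The plan is to route everything through the characterization of 2-BMGs by the local axioms: by Definition~\ref{def:bmg} a properly 2-colored graph is a 2-BMG if and only if it satisfies \AX{(N0)}--\AX{(N3)}, and \AX{(N0)} is precisely sink-freeness. So it suffices to prove that, for any properly 2-colored graph, the conjunction \AX{(N1)} $\wedge$ \AX{(N2)} $\wedge$ \AX{(N3)} is equivalent to the absence of induced F1-, F2-, and F3-subgraphs (sink-freeness then contributes exactly \AX{(N0)}). Both implications I would establish by contraposition, treating the three axioms and the three forbidden graphs essentially in parallel, since the F$i$-graph is tailored to be the minimal obstruction to \AX{(N$i$)}.

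For the direction ``2-BMG $\Rightarrow$ no forbidden subgraph'': an induced F2-graph is literally a bi-transitivity violation, since $(x_1,y_1),(y_1,x_2),(x_2,y_2)\in E$ while $(x_1,y_2)\notin E$, contradicting \AX{(N2)}; an induced F1-graph exhibits the pattern forbidden by \AX{(N1)} with $u=x_1$, $v=y_2$ (independent because $(x_1,y_2),(y_2,x_1)\notin E$), $t=y_1$, $w=x_2$. Because the subgraph is induced, these patterns survive in $\G$. For an induced F3-graph I would apply \AX{(N3)} to $x_1,x_2$, which share the out-neighbor $y_3$: if some $w$ satisfies $(x_1,w),(w,x_2)\in E$, then bi-transitivity applied to $x_1,w,x_2,y_2$ forces $(x_1,y_2)\in E$, a contradiction; symmetrically a $w$ with $(x_2,w),(w,x_1)\in E$ forces $(x_2,y_1)\in E$; and if no such $w$ exists, \AX{(N3)} gives $N^-(x_1)=N^-(x_2)$ together with $N(x_1)\subseteq N(x_2)$ or $N(x_2)\subseteq N(x_1)$, either of which produces one of the forbidden arcs $(x_2,y_1)$ or $(x_1,y_2)$. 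The auxiliary vertex $w$ is automatically of the $y$-color, and any coincidence of $w$ with a vertex of the F3-graph is either harmless or yields a forbidden arc outright.

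For the converse, ``no forbidden subgraph $\Rightarrow$ \AX{(N1)} $\wedge$ \AX{(N2)} $\wedge$ \AX{(N3)}'', again by contraposition. A failure of \AX{(N2)} gives $u_1,u_2,v_1,v_2$ with $(u_1,v_1),(v_1,u_2),(u_2,v_2)\in E$ and $(u_1,v_2)\notin E$; a short argument using the proper $2$-coloring shows these four vertices are distinct and split $2+2$ between the colors, hence induce an F2-graph. A failure of \AX{(N1)} similarly yields an induced F1-graph after a little color bookkeeping. For a failure of \AX{(N3)} there are $u,v$ with a common out-neighbor $c$ and no connecting $w$, but either (i) $N(u)$ and $N(v)$ are incomparable, or (ii) they are comparable yet $N^-(u)\neq N^-(v)$. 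In case (i) one takes $a\in N(u)\setminus N(v)$ and $b\in N(v)\setminus N(u)$ and checks that $\{u,v,a,b,c\}$ induces an F3-graph. In case (ii), say $z\to u$ but $z\not\to v$ (the opposite situation is symmetric): the no-connecting-$w$ hypothesis forces $v\not\to z$ (else $v\to z\to u$), and also excludes the coincidence $z=c$; one then verifies that $\{z,u,v,c\}$ induces an F1-graph with $x_1=z$, $y_1=u$, $x_2=c$, $y_2=v$.

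I expect the main obstacle to be case (ii) of the \AX{(N3)}-failure: the ``natural'' obstruction coming from incomparable out-neighborhoods is unavailable there, and one has to recognize that an F1-graph, not an F3-graph, is the correct witness, extracting it from an in-neighbor that distinguishes $u$ and $v$ while using the no-connecting-$w$ hypothesis twice and carefully ruling out degenerate coincidences ($z=c$, $z=u$, $u=v$). The forward F3-case is the second most delicate point, because eliminating F3 needs \AX{(N2)} and \AX{(N3)} in combination rather than any single axiom. Everything else reduces to routine color bookkeeping; in particular, working directly with the local axioms of Definition~\ref{def:bmg} avoids any reduction to connected components or construction of an explaining tree.
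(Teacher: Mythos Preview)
The paper does not supply its own proof of this theorem: it is quoted from \cite{Schaller:20y}, and the only remark the present paper adds is that ``the forbidden induced F1-, F2-, and F3-subgraphs characterize exactly the class of bipartite directed graphs satisfying the Axioms \AX{(N1)}, \AX{(N2)}, and \AX{(N3)}''. Your plan is precisely to establish this equivalence and combine it with \AX{(N0)} $=$ sink-freeness via Definition~\ref{def:bmg}, so you are reconstructing the argument the paper merely alludes to rather than diverging from it.

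The outline is sound. Both contrapositive directions are handled correctly: each F$i$-graph is indeed a minimal witness to a failure of \AX{(N$i$)} (with the F3 case requiring \AX{(N2)} as an auxiliary, exactly as you note), and conversely each axiom failure can be localised to an induced forbidden subgraph. Your identification of the delicate point is accurate: in case~(ii) of the \AX{(N3)}-failure, with $N(u),N(v)$ comparable but $N^-(u)\neq N^-(v)$, the obstruction is an F1-graph rather than an F3-graph, built on $\{z,u,c,v\}$ with $z\in N^-(u)\setminus N^-(v)$; the ``no connecting $w$'' hypothesis is used once to get $v\not\to z$ and once more (via $v\to c$) to rule out $z=c$. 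The distinctness and color checks you list all go through. One small point worth making explicit in a write-up: in the \AX{(N1)}-failure direction, the arc $(t,w)\in E$ forces $\sigma(t)\neq\sigma(w)$, which together with $(u,t),(v,w)\in E$ pins down $\sigma(u)=\sigma(w)\neq\sigma(t)=\sigma(v)$; this is what guarantees the four vertices split $2{+}2$ and that $u\neq w$, $v\neq t$ follow from the independence of $u,v$.
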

As noted in \cite{Schaller:20y}, the forbidden induced F1-, F2-, and
F3-subgraphs characterize exactly the class of bipartite directed graphs
satisfying the Axioms \AX{(N1)}, \AX{(N2)}, and \AX{(N3)} mentioned in the
introduction.

Although we aim at avoiding the use of triples in the final results, we
will need them during our discussion. A triple $ab|c$ is a rooted tree $t$
on three pairwise distinct vertices $\{a,b,c\}$ such that
$\lca_{t}(a,b)\prec_t\lca_{t}(a,c)=\lca_{t}(b,c)=\rho$, where $\rho$
denotes the root of $t$. A set $\mathscr{R}$ of triples is
\emph{consistent} if there is a tree $T$ that displays all triples in
$\mathscr{R}$. Given a vertex-colored graph $(\G,\sigma)$, we define its
set of \emph{informative triples} \cite{Geiss:19a,Schaller:20x} as
\begin{equation}
  \mathscr{R}(\G,\sigma) \coloneqq
  \left\{ab|b' \colon
    \sigma(a)\neq\sigma(b)=\sigma(b'),\,
    (a,b)\in E(\G); 
    (a,b')\notin E(\G) \right\}.
  \label{eq:informative-triples}
\end{equation}

\begin{lemma}[\textnormal{\cite{Schaller:20x}, Lemma~2.8 and~2.9}]
  \label{lem:informative_triples}
  If $(\G,\sigma)$ is a BMG, then every tree $(T,\sigma)$ that explains
  $(\G,\sigma)$ displays all triples $t\in \mathscr{R}(\G,\sigma)$.\newline
  Moreover, if the triples $ab|b'$ and $cb'|b$ are informative for
  $(\G,\sigma)$, then every tree $(T,\sigma)$ that explains $(\G,\sigma)$
  contains two distinct children $v_1,v_2\in \child_T(\lca_T(a,c))$ such
  that $a,b\prec_T v_1$ and $b',c\prec_T v_2$.
\end{lemma}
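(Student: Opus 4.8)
The plan is to derive the first statement directly from Definition~\ref{def:bm} and then obtain the second statement by applying the first to the two given triples simultaneously. For the first statement, fix any tree $(T,\sigma)$ that explains $(\G,\sigma)$ and an informative triple $ab|b'\in\mathscr{R}(\G,\sigma)$, so that $\sigma(a)\neq\sigma(b)=\sigma(b')$, $(a,b)\in E(\G)$, and $(a,b')\notin E(\G)$. I would first show the strict inequality $\lca_T(a,b)\prec_T\lca_T(a,b')$. Since $b$ is a best match of $a$, we have $\lca_T(a,b)\preceq_T\lca_T(a,y')$ for every leaf $y'$ with $\sigma(y')=\sigma(b)$, and in particular $\lca_T(a,b)\preceq_T\lca_T(a,b')$. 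Since $b'$ is \emph{not} a best match of $a$, there is a leaf $y''$ with $\sigma(y'')=\sigma(b')$ and $\lca_T(a,y'')\prec_T\lca_T(a,b')$ (two ancestors of $a$ are always $\preceq_T$-comparable, so ``not a best match'' yields a strictly smaller $\lca$); combining with the inequality for $b$ applied to $y''$ gives $\lca_T(a,b)\preceq_T\lca_T(a,y'')\prec_T\lca_T(a,b')$, hence the claim. Next I would show $\lca_T(a,b')=\lca_T(b,b')$: writing $u\coloneqq\lca_T(a,b')$, the leaves $a$ and $b'$ lie in distinct child subtrees $T(v_a)$ and $T(v_{b'})$ of $u$; as $\lca_T(a,b)$ is a proper descendant of $u$ lying on the path from $a$ to $u$, it satisfies $\lca_T(a,b)\preceq_T v_a$, whence $b\prec_T v_a$ and therefore $\lca_T(b,b')=u$. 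These two facts together are exactly the statement that $T$ displays $ab|b'$.

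For the second statement, assume in addition that $cb'|b\in\mathscr{R}(\G,\sigma)$, and let $(T,\sigma)$ be any tree explaining $(\G,\sigma)$. Applying the first statement to both triples yields $\lca_T(a,b)\prec_T\lca_T(a,b')=\lca_T(b,b')$ and $\lca_T(c,b')\prec_T\lca_T(c,b)=\lca_T(b,b')$. Put $m\coloneqq\lca_T(b,b')$. From $\lca_T(a,b)\prec_T m$ and the argument of the previous paragraph there is a child $v_1\in\child_T(m)$ with $a,b\prec_T v_1$, and likewise a child $v_2\in\child_T(m)$ with $b',c\prec_T v_2$. These children are distinct, since $v_1=v_2$ would force $\lca_T(b,b')\preceq_T v_1\prec_T m$, contradicting the definition of $m$. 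Finally $a$ and $c$ lie in the distinct child subtrees $T(v_1)$ and $T(v_2)$ of $m$, so $\lca_T(a,c)=m$; hence $v_1,v_2\in\child_T(\lca_T(a,c))$ with $a,b\prec_T v_1$ and $b',c\prec_T v_2$, as required.

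Most of this is routine manipulation of least common ancestors. The one step that needs a little care --- and that does the real work --- is the identity $\lca_T(a,b')=\lca_T(b,b')$ in the first statement: it is this, rather than the mere ancestor inequality $\lca_T(a,b)\preceq_T\lca_T(a,b')$, that promotes an informative arc/non-arc pair to a genuine displayed triple, after which the second statement follows with essentially no extra effort.
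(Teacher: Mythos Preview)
The paper does not prove this lemma; it is quoted from \cite{Schaller:20x} (Lemmas~2.8 and~2.9) and used as a black box. Your proof is correct and is essentially the standard argument: the strict inequality $\lca_T(a,b)\prec_T\lca_T(a,b')$ follows directly from the best-match/non-best-match conditions, and the identity $\lca_T(a,b')=\lca_T(b,b')$ then promotes this to a displayed triple; the second part is an immediate consequence of applying the first to both triples and reading off the common vertex $m=\lca_T(b,b')$. There is nothing to compare against in the present paper, and nothing to fix in your argument.
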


\begin{fact}
  \label{obs:full-color-subtree}
  Let $(T,\sigma)$ be a tree explaining the BMG $(\G,\sigma)$, and
  $v\in V(T)$ a vertex such that $\sigma(L(T(v)))=\sigma(L(T))$.  Then
  $(a,b)\in E(\G)$ and $a\in L(T(v))$ implies $b\in L(T(v))$.
\end{fact}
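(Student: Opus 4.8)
The plan is to use the defining ``closest relative'' property of best matches directly. First I would fix an arc $(a,b)\in E(\G)$ with $a\in L(T(v))$. Since $(T,\sigma)$ explains $(\G,\sigma)$, the leaf $b$ is a best match of $a$ in $(T,\sigma)$, so by Definition~\ref{def:bm} we have $\sigma(a)\neq\sigma(b)$ and $\lca_T(a,b)\preceq_T \lca_T(a,b')$ for every leaf $b'$ with $\sigma(b')=\sigma(b)$.

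Next I would invoke the hypothesis $\sigma(L(T(v)))=\sigma(L(T))$: since $\sigma(b)\in\sigma(L(T))=\sigma(L(T(v)))$, there is a leaf $b''\in L(T(v))$ with $\sigma(b'')=\sigma(b)$. Because both $a$ and $b''$ lie in $L(T(v))$, their least common ancestor is a vertex of the subtree $T(v)$, i.e.\ $\lca_T(a,b'')\preceq_T v$.

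Combining the two facts, the best-match inequality applied with $b'=b''$ yields $\lca_T(a,b)\preceq_T \lca_T(a,b'')\preceq_T v$. Hence $b\preceq_T \lca_T(a,b)\preceq_T v$, which means $b\in L(T(v))$, as claimed.

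The proof is essentially a single application of Definition~\ref{def:bm}; the only point that needs a moment's care is keeping the direction of the ancestor order straight, namely that ``$b$ is a best match of $a$'' forces $\lca_T(a,b)$ to sit no higher than $\lca_T(a,b'')$, and that lying in $T(v)$ is inherited downward along $\preceq_T$. Beyond that bookkeeping there is no real obstacle.
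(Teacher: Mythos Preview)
Your argument is correct and is exactly the natural verification the paper has in mind; the paper in fact states this as an Observation without proof, and your use of Definition~\ref{def:bm} together with the existence of a same-colored leaf $b''\in L(T(v))$ is the intended one-line justification.
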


Finally, there is a close connection between subtrees of $T$ and subgraphs
of $\G(T,\sigma)$. 
We have
\begin{lemma}[\textnormal{\cite{BMG-corrigendum}, Lemma~22 and~23}]
  \label{lem:subgraph}
  Let $(T,\sigma)$ be a tree explaining an BMG $(\G,\sigma)$.  Then
  $\G(T(u),\sigma\rst) = (\G[L(T(u))],\sigma\rst)$ holds for every
  $u\in V(T)$.  Moreover, if $(T,\sigma)$ is least resolved for
  $(\G,\sigma)$, then the subtree $T(u)$ is least resolved for
  $\G(T(u),\sigma\rst)$.
\end{lemma}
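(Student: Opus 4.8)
The plan is to establish the two assertions separately; both rely on the elementary fact that $T(u)$ inherits ancestry and least common ancestors from $T$: if $a,b\in L(T(u))$ then $a,b\preceq_T u$, so $\lca_T(a,b)$ is a vertex of $T(u)$ and $\lca_{T(u)}(a,b)=\lca_T(a,b)$, and moreover $\preceq_{T(u)}$ is the restriction of $\preceq_T$ to $V(T(u))$. Likewise, for every $w\in V(T(u))$ the subtree of $T(u)$ rooted at $w$ coincides with $T(w)$, so $\child_{T(u)}(w)=\child_T(w)$ and the leaf set $L(T(w))$ is the same whether read in $T$ or in $T(u)$.

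For the first assertion note that $\G(T(u),\sigma\rst)$ and $\G[L(T(u))]$ have the same vertex set $L(T(u))$ and the same restricted coloring, so only the arc sets need to be compared. Fix $x,y\in L(T(u))$ with $\sigma(x)\neq\sigma(y)$. If $(x,y)\in E(\G[L(T(u))])$, then, since $(T,\sigma)$ explains $(\G,\sigma)$, the leaf $y$ is a best match of $x$ in $(T,\sigma)$; restricting the defining inequality $\lca_T(x,y)\preceq_T\lca_T(x,y')$ to those leaves $y'$ of color $\sigma(y)$ that lie in $L(T(u))$, and using the inheritance of $\lca$, shows that $y$ is a best match of $x$ in $(T(u),\sigma\rst)$, i.e.\ $(x,y)\in E(\G(T(u),\sigma\rst))$. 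Conversely, suppose $(x,y)\in E(\G(T(u),\sigma\rst))$ and let $y'\in L(T)$ with $\sigma(y')=\sigma(y)$ be arbitrary. If $y'\in L(T(u))$, the best-match inequality holds in $T(u)$ and transfers to $T$ by inheritance of $\lca$. If $y'\notin L(T(u))$, then $y'\not\preceq_T u$, whereas $x\preceq_T u$ forces $\lca_T(x,y')$, being an ancestor of $x$, to be a strict ancestor of $u$, so that $\lca_T(x,y')\succ_T u\succeq_T\lca_T(x,y)$. In both cases $\lca_T(x,y)\preceq_T\lca_T(x,y')$, hence $y$ is a best match of $x$ in $(T,\sigma)$, so $(x,y)\in E(\G)$ and, as $x,y\in L(T(u))$, $(x,y)\in E(\G[L(T(u))])$. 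Thus the arc sets coincide; in particular $(T(u),\sigma\rst)$ explains $\G[L(T(u))]$.

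For the second assertion I will use that a tree explaining a BMG is least resolved exactly when it contains no redundant edge, together with the characterization of redundant edges in Lemma~\ref{lem:redundant_edges}. Let $(T,\sigma)$ be least resolved and assume, for contradiction, that $T(u)$ has an edge $e=ab$ with $b\in\child_{T(u)}(a)$ that is redundant w.r.t.\ $\G(T(u),\sigma\rst)=\G[L(T(u))]$. By Lemma~\ref{lem:redundant_edges}, $e$ is an inner edge of $T(u)$; since children agree on $V(T(u))$, both $a$ and $b$ are inner vertices of $T$, hence $e$ is an inner edge of $T$. I claim that $e$ is redundant w.r.t.\ $\G$, contradicting least resolvedness of $T$. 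If not, Lemma~\ref{lem:redundant_edges} provides an arc $(p,q)\in E(\G)$ with $\lca_T(p,q)=b$ and $\sigma(q)\in\sigma(L(T(a))\setminus L(T(b)))$. Then $p,q\preceq_T b\preceq_T u$, so $p,q\in L(T(u))$ and $(p,q)\in E(\G[L(T(u))])$; furthermore $\lca_{T(u)}(p,q)=b$ and the set $L(T(a))\setminus L(T(b))$ is the same in $T$ and in $T(u)$. Applying Lemma~\ref{lem:redundant_edges} to $T(u)$ and $\G[L(T(u))]$, the arc $(p,q)$ now witnesses that $e$ is \emph{not} redundant in $T(u)$ w.r.t.\ $\G[L(T(u))]$, contradicting the choice of $e$. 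Hence $T(u)$ has no redundant edge and is least resolved for $\G(T(u),\sigma\rst)$.

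I expect the only genuinely delicate step to be the ``$\supseteq$''-direction of the first assertion, namely excluding that a leaf of color $\sigma(y)$ \emph{outside} $L(T(u))$ could spoil the best-match property of $y$ for $x$ in the full tree, which the observation $\lca_T(x,y')\succ_T u$ resolves. The second assertion is then mainly bookkeeping, its one idea being that any witness arc for non-redundancy of an inner edge of $T(u)$ in $T$ has both endpoints below $u$ and hence already lies in $\G[L(T(u))]$.
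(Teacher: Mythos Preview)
The paper does not prove this lemma; it is quoted from \cite{BMG-corrigendum} (Lemmas~22 and~23 there) and used as a black box, so there is no in-paper argument to compare against. Your proof is correct and self-contained. The first assertion is a direct unpacking of Def.~\ref{def:bm}, and the one nontrivial point is exactly the one you isolate: a leaf $y'\notin L(T(u))$ with $\sigma(y')=\sigma(y)$ satisfies $\lca_T(x,y')\succ_T u\succeq_T\lca_T(x,y)$, so it can never invalidate a best match already found inside $T(u)$. For the second assertion your reduction to Lemma~\ref{lem:redundant_edges} is clean: any witness arc $(p,q)$ certifying non-redundancy of an inner edge $ab$ in $T$ has both endpoints below $b\preceq_T u$, hence lies in $\G[L(T(u))]$ and, since $L(T(a))\setminus L(T(b))$ is the same set whether read in $T$ or in $T(u)$, certifies non-redundancy in $T(u)$ as well. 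The only step you use without comment is the equivalence ``least resolved $\Leftrightarrow$ no redundant edge''; this is standard (it is how the LRT is obtained in \cite{Geiss:19a}) but is not restated explicitly in the present paper, so a one-line remark would make the argument fully self-contained.
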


\section{Properties of Least Resolved Trees}
\label{sect:LRT}

In this short section we derive some helpful properties of LRTs which we
will use repeatedly throughout this work.
\begin{lemma}
  \label{lem:LRT-subtree-connected}
  Let $(\G,\sigma)$ be a BMG and $(T,\sigma)$ its least resolved tree.
  Then the BMG $\G(T(v), \sigma\rst)$ is connected for every $v\in V(T)$
  with $v\prec_{T}\rho_{T}$.
\end{lemma}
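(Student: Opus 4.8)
My plan is to argue by contradiction and reduce to Lemma~\ref{lem:redundant_edges}: if $\G(T(v),\sigma\rst)$ were disconnected for some $v\prec_T\rho_T$, I would show that the edge joining $v$ to its parent is redundant w.r.t.\ $(\G,\sigma)$, which is impossible for a least resolved tree.

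First I would dispose of the degenerate case $v\in L(T)$, where $\G(T(v),\sigma\rst)$ is a single vertex and thus connected. So assume $v$ is an inner vertex and that $\G(T(v),\sigma\rst)$ is disconnected; let $u$ be the parent of $v$ (it exists because $v\neq\rho_T$, and it is an inner vertex since it has the child $v$), and put $e\coloneqq uv$. Then $e$ is an inner edge, which is condition~(i) of Lemma~\ref{lem:redundant_edges}. Applying Prop.~\ref{prop:bmg-connected} to the subtree $(T(v),\sigma\rst)$ --- whose root is $v$ --- the disconnectedness of $\G(T(v),\sigma\rst)$ yields $\sigma(L(T(w)))=\sigma(L(T(v)))$ for every $w\in\child_T(v)$.

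The core step is to establish condition~(ii) of Lemma~\ref{lem:redundant_edges}, i.e., that no arc $(a,b)\in E(\G)$ satisfies $\lca_T(a,b)=v$. Assuming such an arc existed, $a$ and $b$ would lie in the subtrees $T(w_1)$ and $T(w_2)$ of two distinct children $w_1,w_2$ of $v$. Since $b\in L(T(v))$ we have $\sigma(b)\in\sigma(L(T(v)))=\sigma(L(T(w_1)))$, so there is a leaf $b'\in L(T(w_1))$ with $\sigma(b')=\sigma(b)$; moreover $b'\neq b$ because $w_1\neq w_2$, and $b'\neq a$ because $\sigma(b')=\sigma(b)\neq\sigma(a)$ (the coloring of a BMG is proper). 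But then $\lca_T(a,b')\preceq_T w_1\prec_T v=\lca_T(a,b)$, so $b$ is not a best match of $a$ in $(T,\sigma)$, contradicting $(a,b)\in E(\G)$. Hence no such arc exists at all; in particular, there is none with the additional property $\sigma(b)\in\sigma(L(T(u))\setminus L(T(v)))$, so condition~(ii) holds as well. Thus $e=uv$ is redundant, contradicting the least-resolvedness of $(T,\sigma)$.

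I do not anticipate a real difficulty. The only points requiring care are the degenerate case $v\in L(T)$, the disjointness bookkeeping ensuring $b'\notin\{a,b\}$, and the (essentially definitional) fact that a least resolved tree contains no redundant edge. The single genuine idea is that a same-colored copy $b'$ of $b$ on $a$'s side of $v$ --- available precisely because disconnectedness of $\G(T(v),\sigma\rst)$ forces every child subtree of $v$ to carry the full color set $\sigma(L(T(v)))$ --- would strictly beat $b$ as a best-match candidate for $a$.
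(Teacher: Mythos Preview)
Your proof is correct and follows essentially the same approach as the paper: both arguments hinge on Lemma~\ref{lem:redundant_edges} together with Prop.~\ref{prop:bmg-connected}, and the core observation that a same-colored leaf $b'$ on $a$'s side of $v$ would undercut $(a,b)$ as a best match. The only difference is orientation: the paper argues directly (non-redundancy of $uv$ $\Rightarrow$ existence of an arc with $\lca=v$ $\Rightarrow$ some child of $v$ misses a color $\Rightarrow$ connectedness), whereas you run the contrapositive.
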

\begin{proof}
  By Lemma \ref{lem:subgraph}, $\G(T(v), \sigma\rst)$ is a BMG.  First
  observe that the BMG $\G(T(v), \sigma\rst)$ is trivially connected if $v$
  is a leaf.  Now let $v\prec_T \rho_{T}$ be an arbitrary inner vertex of
  $T$.  Thus, there exists a vertex $u\succ_T v$ such that $uv$ is an inner
  edge.  Since $(T,\sigma)$ is least resolved, it does not contain any
  redundant edges.  Hence, by contraposition of
  Lemma~\ref{lem:redundant_edges}, there is an arc $(a,b)\in E(\G)$ such
  that $\lca_T(a,b)=v$ and $\sigma(b)\in \sigma(L(T(u))\setminus L(T(v)))$.
  Since $a,b\in L(T(v))$, Lemma~\ref{lem:subgraph} implies that $(a,b)$ is
  also an arc in $\G(T(v), \sigma\rst)$.  Moreover, $\lca_{T(v)}(a,b)=v$
  clearly also holds in the subtree rooted at $v$.  Now consider the child
  $w\in\child_{T(v)}(v)$ such that $a\preceq_{T(v)} w$.  There cannot be a
  leaf $b'\in L(T(w))$ with $\sigma(b')=\sigma(b)$ since otherwise
  $\lca_{T(v)}(a,b')\preceq_{T(v)}w\prec_{T(v)}v$ would contradict that
  $(a,b)$ is an arc in $\G(T(v), \sigma\rst)$.  Thus
  $\sigma(b)\notin\sigma(L(T(w)))$. Since $\sigma(b)\in\sigma(L(T(v)))$, we
  thus conclude $\sigma(L(T(w)))\ne\sigma(L(T(v)))$.  The latter together
  with Prop.~\ref{prop:bmg-connected} implies that $\G(T(v), \sigma\rst)$
  is connected.
\end{proof}

The converse of Lemma~\ref{lem:LRT-subtree-connected}, however, is not
true, i.e., a tree $(T,\sigma)$ for which $\G(T(v), \sigma\rst)$ is
connected for every $v\in V(T)$ with $v\prec_{T}\rho_{T}$ is not
necessarily least resolved.  To see this, consider the caterpillar tree
$(T,\sigma)$ given by $(x'',(x',(x,y)))$ with
$\sigma(x)=\sigma(x')=\sigma(x'')\ne\sigma(y)$ and $u=\lca_T(x,x')$. It is
an easy task to verify that the BMG of each subtree of $T$ is connected.
However, the edge $\rho_T u$ is redundant.

\begin{lemma}
  \label{lem:single-color-is-leaf}
  Let $(T,\sigma)$ be the least resolved tree of some BMG $(\G,\sigma)$.
  Then every vertex $v\prec_T\rho_T$ with $|\sigma(L(T(v)))|=1$ is a leaf.
\end{lemma}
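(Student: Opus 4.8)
The plan is to argue by contradiction. Suppose $v\prec_T\rho_T$ is an inner vertex with $|\sigma(L(T(v)))|=1$, say $\sigma(L(T(v)))=\{s\}$. Since $v$ is not the root, there is a parent $u=\AX{parent}_T(v)$ and the edge $uv$ is an inner edge of $T$. Because $(T,\sigma)$ is least resolved, $uv$ is not redundant, so by the contrapositive of Lemma~\ref{lem:redundant_edges} there is an arc $(a,b)\in E(\G)$ with $\lca_T(a,b)=v$ and $\sigma(b)\in\sigma(L(T(u))\setminus L(T(v)))$. In particular $a,b\in L(T(v))$, hence $\sigma(a)=\sigma(b)=s$. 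But $(a,b)\in E(\G)$ forces $\sigma(a)\ne\sigma(b)$ since the coloring of a BMG is proper — a contradiction.

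The key steps, in order: first, note $v$ inner and $v\prec_T\rho_T$ gives an inner edge $uv$ above $v$; second, invoke that an LRT has no redundant edges and apply the contrapositive of Lemma~\ref{lem:redundant_edges} to produce the arc $(a,b)$ with $\lca_T(a,b)=v$; third, observe that $\lca_T(a,b)=v$ forces both endpoints into $L(T(v))$, so both have color $s$; fourth, derive the contradiction from properness of $\sigma$. One might alternatively phrase the last step via Lemma~\ref{lem:subgraph}: $\G(T(v),\sigma\rst)=(\G[L(T(v))],\sigma\rst)$ would be a $1$-colored BMG on $L(T(v))$, which by sink-freeness is impossible unless $L(T(v))$ is empty — but it is cleaner to just cite properness directly.

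I do not expect a genuine obstacle here; the only subtlety is making sure the hypothesis "$v\prec_T\rho_T$" is actually used, namely to guarantee the existence of the parent $u$ and hence of an inner edge to which Lemma~\ref{lem:redundant_edges} can be applied. (The root itself can of course carry a single color on its leaf set only in the degenerate single-leaf tree, which is why the statement excludes it.) A small point worth stating explicitly is why $\lca_T(a,b)=v$ implies $a,b\preceq_T v$: by definition $\lca_T(a,b)$ is an ancestor of both $a$ and $b$, so both lie in $L(T(v))$.
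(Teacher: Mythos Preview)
Your proof is correct, but it takes a different route from the paper. The paper argues via Lemma~\ref{lem:LRT-subtree-connected}: if $v\prec_T\rho_T$ were an inner vertex with $|\sigma(L(T(v)))|=1$, then $|L(T(v))|>1$ and, by Lemma~\ref{lem:subgraph}, $\G(T(v),\sigma\rst)$ is a properly colored BMG on at least two vertices of a single color, hence has no arcs and is disconnected --- contradicting Lemma~\ref{lem:LRT-subtree-connected}.

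You instead bypass Lemma~\ref{lem:LRT-subtree-connected} entirely and appeal directly to Lemma~\ref{lem:redundant_edges}. This is a legitimate shortcut: in effect you are unfolding the relevant part of the proof of Lemma~\ref{lem:LRT-subtree-connected} (which itself invokes Lemma~\ref{lem:redundant_edges} to produce an arc $(a,b)$ with $\lca_T(a,b)=v$) and then immediately deriving the color contradiction from properness, rather than passing through connectedness. Your argument is therefore marginally more self-contained, while the paper's version is shorter once Lemma~\ref{lem:LRT-subtree-connected} is already in hand. Both are clean; the alternative you sketch at the end (via Lemma~\ref{lem:subgraph} and sink-freeness) is essentially the paper's route with ``disconnected'' replaced by ``not sink-free''.
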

\begin{proof}
Let $v\prec_T\rho_T$ with $|\sigma(L(T(v)))|=1$ and assume, for
contradiction, that $v$ is not a leaf.
Hence, $|L(T(v))|>1$. By Lemma \ref{lem:subgraph} $\G(T(v), \sigma\rst)$
is a BMG and, therefore, properly colored. But then
$\G(T(v), \sigma\rst)$ is disconnected; a contradiction to Lemma 
\ref{lem:LRT-subtree-connected}.
\end{proof}
As a consequence we find
\begin{corollary}
  \label{cor:inner-vertex-mult-col}
  Let $(T,\sigma)$ be the least resolved tree of some BMG $(\G,\sigma)$.
  Then any vertex $v\in V(T)$ with $v\prec_T\rho_T$ is an inner vertex if
  and only if $|\sigma(L(T(v)))|>1$.
\end{corollary}
\begin{proof}
  If $|\sigma(L(T(v)))|=1$, Lemma \ref{lem:single-color-is-leaf} implies
  that $v$ is a leaf. Otherwise, if $|\sigma(L(T(v)))|>1$, $T(v)$ clearly
  must contain at least two leaves and thus $v$ cannot be a leaf.
\end{proof}

\section{Support Leaves} 
\label{sect:support}
In this section we introduce ``support leaves'' as a means to recursively
construct the LRT of a 2-BMG. The main result of this section shows that
these leaves can be inferred directly from the BMG without any further
knowledge of the corresponding LRT. We start with a technical result
similar to Cor.~3 in \cite{Geiss:19a}; here we use a much simpler, more
convenient notation.
\begin{lemma}\label{lem:support-leaves}
  Let $(T,\sigma)$ be the least resolved tree of a 2-colored BMG
  $(\G,\sigma)$.  Then, for every vertex $u\in V^0(T)\setminus \{\rho_T\}$,
  it holds $\child_T(u)\cap L(T)\ne\emptyset$. If $(\G,\sigma)$ is
  connected, then $\child_T(u)\cap L(T)\ne\emptyset$ holds for \emph{every}
  $u\in V^0(T)$.
\end{lemma}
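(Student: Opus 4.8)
The plan is to argue by contradiction: suppose some inner vertex $u$ (first treating the generic case $u\prec_T\rho_T$, then the root $\rho_T$ under the connectedness hypothesis) has no leaf among its children, so every child of $u$ is itself an inner vertex. Since $(T,\sigma)$ is least resolved, Corollary~\ref{cor:inner-vertex-mult-col} then tells us that each child $v$ of $u$ satisfies $|\sigma(L(T(v)))|>1$, and since we are in the 2-colored setting this forces $\sigma(L(T(v)))=\sigma(L(T))$; that is, every child's subtree already carries both colors. I will then aim to show that the edge from $u$ to (at least one of) these children is redundant, contradicting that $T$ is least resolved.

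First I would set up the edge to be contracted. Pick a child $v\in\child_T(u)$. To apply Lemma~\ref{lem:redundant_edges} I need $uv$ to be an inner edge: in the case $u\prec_T\rho_T$ this is immediate because $u$ is an inner vertex distinct from the root, hence $u$ itself has an inner-edge ancestor — wait, more carefully, $uv$ is an inner edge precisely because $v$ is an inner vertex (we assumed all children of $u$ are inner) and $u$ is an inner vertex; condition (i) of Lemma~\ref{lem:redundant_edges} holds. So redundancy of $uv$ reduces to showing condition (ii): there is no arc $(a,b)\in E(\G)$ with $\lca_T(a,b)=v$ and $\sigma(b)\in\sigma(L(T(u))\setminus L(T(v)))$. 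The key step is to rule out such an arc using the fact that $\sigma(L(T(v)))=\sigma(L(T))$ already contains both colors. Indeed, if $(a,b)\in E(\G)$ with $\lca_T(a,b)=v$, then $a,b\in L(T(v))$; now $\sigma(b)$ is one of the two colors, and since $\sigma(L(T(v)))$ contains both colors, $\sigma(L(T(u))\setminus L(T(v)))\subseteq\sigma(L(T(u)))\subseteq\{\text{both colors}\}$ doesn't immediately give a contradiction — so this naive attempt fails, and I need the best-match structure.

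The real argument for condition (ii): suppose $(a,b)\in E(\G)$ with $\lca_T(a,b)=v$. Then $b$ is a best match of $a$ in $(T,\sigma)$, so no leaf $b'$ with $\sigma(b')=\sigma(b)$ has $\lca_T(a,b')\prec_T v$. Let $w$ be the child of $v$ with $a\preceq_T w$. If there were a leaf $b'\in L(T(w))$ with $\sigma(b')=\sigma(b)$ then $\lca_T(a,b')\preceq_T w\prec_T v$, contradicting that $b$ is a best match — so $\sigma(b)\notin\sigma(L(T(w)))$. But then $\sigma(L(T(w)))\subsetneq\sigma(L(T(v)))=\sigma(L(T))$, and since $|\sigma(L(T(v)))|=2$ this means $|\sigma(L(T(w)))|\le 1$. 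If $w$ is not a leaf, this contradicts Lemma~\ref{lem:single-color-is-leaf} (applied to $w\prec_T\rho_T$); hence $w$ is a leaf and $a=w$, so $L(T(w))=\{a\}$ and $\sigma(a)\ne\sigma(b)$ — fine so far. Now I still need to derive a contradiction. The point is that $v$ has another child $w'\ne w$; since $\sigma(L(T(v)))=\{\sigma(a),\sigma(b)\}$ and $\sigma(b)\notin\sigma(L(T(w)))=\{\sigma(a)\}$, the color $\sigma(b)$ must appear in some other child subtree $T(w')$. This child $w'$ together with $w$ are both children of $v$, so I can try to find an arc $(a,b)$ witnessing non-redundancy of $uv$ only if $\sigma(b)$ also appears outside $T(v)$ in $T(u)$ — but $\sigma(L(T(u))\setminus L(T(v)))$: since $v$'s sibling subtrees (other children of $u$) all have full color set by the opening paragraph, $\sigma(L(T(u))\setminus L(T(v)))$ does contain $\sigma(b)$ as soon as $u$ has a sibling child of $v$, which it does since $u$ is inner with $\ge 2$ children. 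So condition (ii) is actually \emph{not} automatically violated, and $uv$ need not be redundant directly.

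\textbf{The main obstacle and the fix.}
The difficulty, as the above shows, is that contracting a single edge $uv$ need not suffice — the obstruction to redundancy can be pushed around among the children. The right move, which I would carry out in the actual proof, is to pick the child $v$ of $u$ \emph{minimizing} something, or better, to work one level down: apply Lemma~\ref{lem:subgraph}, which says $T(u)$ is least resolved for the connected (by Lemma~\ref{lem:LRT-subtree-connected}) BMG $\G(T(u),\sigma\rst)$; so without loss of generality I may assume $u$ is the root of a least resolved tree of a \emph{connected} 2-BMG and every child of $u$ has the full color set $\{1,2\}$. But Prop.~\ref{prop:bmg-connected} says $\G(T(u),\sigma\rst)$ is connected if and only if some child of $u$ has color set $\ne\sigma(L(T(u)))$ — which directly contradicts that \emph{every} child of $u$ has the full color set. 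This is the clean contradiction for the root case (and, via Lemma~\ref{lem:subgraph}, for every inner vertex of a connected $(\G,\sigma)$). For a non-connected $(\G,\sigma)$ and an inner vertex $u\prec_T\rho_T$: by Lemma~\ref{lem:LRT-subtree-connected} $\G(T(u),\sigma\rst)$ is still connected (since $u\prec_T\rho_T$), so the very same Prop.~\ref{prop:bmg-connected} argument applies. Thus the whole lemma follows from: all children inner $\Rightarrow$ all children have full color set (via Cor.~\ref{cor:inner-vertex-mult-col} plus $\ell=2$) $\Rightarrow$ $\G(T(u),\sigma\rst)$ disconnected, contradicting Lemma~\ref{lem:LRT-subtree-connected}. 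I expect the only subtlety to be the careful invocation of Lemma~\ref{lem:subgraph}/Lemma~\ref{lem:LRT-subtree-connected} to guarantee connectedness of the relevant induced sub-BMG in each of the two cases, and noting that the root itself is covered only under the stated connectedness hypothesis because Lemma~\ref{lem:LRT-subtree-connected} excludes $v=\rho_T$.
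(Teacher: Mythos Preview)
Your final argument (the ``fix'') is correct and is essentially the paper's proof in contrapositive form: the paper applies Lemma~\ref{lem:LRT-subtree-connected} to get that $\G(T(u),\sigma\rst)$ is connected, then Prop.~\ref{prop:bmg-connected} to obtain a child $v$ with $\sigma(L(T(v)))\subsetneq\sigma(L(T(u)))$, and concludes via Cor.~\ref{cor:inner-vertex-mult-col} that $v$ is a leaf---exactly the chain you arrive at, just stated directly rather than by contradiction. The detour through Lemma~\ref{lem:redundant_edges} and Lemma~\ref{lem:subgraph} is unnecessary; you can delete everything before ``the fix'' and present the argument in a few lines.
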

\begin{proof}
  Suppose first that $(\G,\sigma)$ is disconnected and let
  $u\in V^0(T)\setminus \{\rho_T\}$.  Since $(T,\sigma)$ is least resolved,
  Lemma~\ref{lem:LRT-subtree-connected} implies that $\G(T(u), \sigma\rst)$
  is connected for every $u\in V(T)$ with $u\prec_T \rho_T$.  Hence, we can
  apply Prop.~\ref{prop:bmg-connected} to $\G(T(u), \sigma\rst)$ and
  conclude that there is a child $v\in \child_{T(u)}(u)$ such that
  $\sigma(L(T(v)))\ne\sigma(L(T(u)))$, hence in particular
  $\sigma(L(T(v)))\subsetneq\sigma(L(T(u)))$.  Since $(T,\sigma)$ is
  2-colored, the latter immediately implies $|\sigma(L(T(v)))|=1$ and, by
  Cor.~\ref{cor:inner-vertex-mult-col}, $v$ is a leaf.  Thus every
  $u\in V^0(T)\setminus\{\rho_T\}$ has a leaf $v$ among its children, i.e.\
  $\child_T(u)\cap L(T)\ne\emptyset$. If in addition $(\G,\sigma)$ is
  connected, we can apply the same argumentation to $u=\rho_T$ and conclude
  that a leaf $v$ is attached to $\rho_T$.
\end{proof}
Lemma~\ref{lem:support-leaves} states that, in the least resolved tree of a
connected 2-colored BMG, every inner vertex $u$ is adjacent to at least one
leaf, and thus in a way ``supported'' by it.  
\begin{definition}[Support Leaves]
  For a given tree $T$, the set $S_{u} \coloneqq \child_T(u)\cap L(T)$ is
  the set of all \emph{support leafs} of vertex $u\in V(T)$.
\end{definition}
Note that Lemma~\ref{lem:support-leaves} is in general not true for
$\ell$-BMGs with $\ell\ge3$, as exemplified by the (least-resolved) tree
$((a,b),(c,a'))$ with three distinct leaf colors
$\sigma(a)=\sigma(a')\neq \sigma(b)\neq \sigma(c)$.

As a simple consequence of Prop.~\ref{prop:bmg-connected} and
Cor.~\ref{cor:inner-vertex-mult-col}, we find
\begin{corollary}
  \label{cor:S_rho_empty}
  Let $(T,\sigma)$ be the least resolved tree (with root $\rho$) of some
  2-colored BMG $\G(T,\sigma)$. Then, $\G(T,\sigma)$ is connected if and
  only if $S_{\rho}\neq \emptyset$.
\end{corollary}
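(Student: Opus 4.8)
The plan is to prove both directions of the equivalence in Corollary~\ref{cor:S_rho_empty} by combining the structural results already established. Recall $S_\rho = \child_T(\rho)\cap L(T)$, so $S_\rho\neq\emptyset$ means precisely that $\rho$ has a leaf among its children.

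First I would prove the implication ``$\G(T,\sigma)$ connected $\Rightarrow$ $S_\rho\neq\emptyset$''. This is essentially the second half of Lemma~\ref{lem:support-leaves}: when $(\G,\sigma)$ is connected, the argument there applied to $u=\rho_T$ shows $\child_T(\rho)\cap L(T)\neq\emptyset$. Concretely, by Prop.~\ref{prop:bmg-connected} connectedness of $\G(T,\sigma)$ gives a child $v$ of $\rho$ with $\sigma(L(T(v)))\neq\sigma(L(T))$; since the tree is 2-colored this forces $|\sigma(L(T(v)))|=1$, and then Cor.~\ref{cor:inner-vertex-mult-col} (note $v\prec_T\rho_T$) tells us $v$ is a leaf, i.e.\ $v\in S_\rho$.

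For the converse, ``$S_\rho\neq\emptyset$ $\Rightarrow$ $\G(T,\sigma)$ connected'', suppose $v\in\child_T(\rho)\cap L(T)$, so $v$ is a single leaf of some color, say $\sigma(v)=s$. Then $\sigma(L(T(v)))=\{s\}$, which has size~$1$. Since $(\G,\sigma)$ is a 2-BMG it uses two colors (if $T$ has only one color the claim is trivial, as $\G$ is then edgeless but sink-free, which cannot happen — or one simply notes $(\G,\sigma)$ with a single color and more than one leaf would have isolated vertices, contradicting sink-freeness; in any case $|\sigma(L(T))|=2$), so $\sigma(L(T(v)))=\{s\}\subsetneq\sigma(L(T))$ and in particular $\sigma(L(T(v)))\neq\sigma(L(T))$. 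Now Prop.~\ref{prop:bmg-connected} directly yields that $\G(T,\sigma)$ is connected.

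The only real point requiring care is the degenerate situation where $(\G,\sigma)$ might use fewer than two colors; since the statement speaks of a ``2-colored BMG'', the intended reading is $|\sigma(L(T))|=2$, and under that convention the argument above is complete. I expect this to be the main (and minor) obstacle: making sure the inequality $\sigma(L(T(v)))\subsetneq\sigma(L(T))$ is genuinely strict, which is exactly where the 2-colored hypothesis and $|\sigma(L(T(v)))|=1$ enter. Everything else is a direct appeal to Prop.~\ref{prop:bmg-connected} and Cor.~\ref{cor:inner-vertex-mult-col}.
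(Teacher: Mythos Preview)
Your proposal is correct and follows essentially the same route as the paper: both arguments combine Prop.~\ref{prop:bmg-connected} (connectedness iff some child of $\rho$ misses a color) with the observation that in the 2-colored setting ``missing a color'' means $|\sigma(L(T(v)))|=1$, and then invoke Cor.~\ref{cor:inner-vertex-mult-col} to conclude that such a $v$ is exactly a leaf. The paper just phrases this as a single biconditional chain rather than splitting into two directions, and does not linger on the degenerate single-color case.
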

\begin{proof}
  By Prop.~\ref{prop:bmg-connected}, $\G(T,\sigma)$ is connected if and
  only if there exists a child $v$ of the root $\rho$ of $T$,
  $v\in\child_T(\rho)$, such that $T(v)$ does not contain all colors. Thus
  $|\sigma(L(T(v)))|=1$. By Cor.~\ref{cor:inner-vertex-mult-col}, we have
  $|\sigma(L(T(v)))|=1$ if and only if $v$ is a leaf, i.e.\ $v\in
  S_\rho$. Hence, $\G(T,\sigma)$ is connected if and only if
  $S_{\rho}\neq \emptyset$.
\end{proof}

\begin{lemma}
  \label{lem:remove-support-leaves}
  Let $(T,\sigma)$ be the least resolved tree of a 2-BMG $(\G,\sigma)$, and
  $S_{\rho}$ the set of support leaves of the root $\rho$. Then the
  connected components of $(\G-S_{\rho},\sigma\rst)$ are exactly the BMGs
  $\G(T(v),\sigma\rst)$ with $v\in\child(\rho)\setminus S_{\rho}$.
\end{lemma}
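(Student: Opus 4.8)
The plan is to establish two complementary facts: that for each non-support child $v$ of $\rho$ the induced subgraph $\G[L(T(v))]$ coincides with the connected BMG $\G(T(v),\sigma\rst)$, and that in $\G-S_{\rho}$ there are no arcs between $L(T(v))$ and $L(T(v'))$ for distinct such children $v\neq v'$, so that these vertex sets are exactly the connected components. First I would observe that $\child(\rho)\setminus S_{\rho}$ is precisely the set of \emph{inner} children of $\rho$; hence by Cor.~\ref{cor:inner-vertex-mult-col} every $v\in\child(\rho)\setminus S_{\rho}$ satisfies $|\sigma(L(T(v)))|>1$, which in a 2-colored BMG forces $\sigma(L(T(v)))=\sigma(L(T))$. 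By Lemma~\ref{lem:subgraph} we have $\G(T(v),\sigma\rst)=(\G[L(T(v))],\sigma\rst)$, and since $v\prec_T\rho$, Lemma~\ref{lem:LRT-subtree-connected} shows this graph is connected. This disposes of the ``connectedness'' half and identifies the candidate components.

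The crux is the claim that no arc of $\G$ leaves the set $L(T(v))$ when $v$ is an inner child of $\rho$. This is immediate from Observation~\ref{obs:full-color-subtree} applied to the vertex $v$: since $\sigma(L(T(v)))=\sigma(L(T))$, any arc $(a,b)\in E(\G)$ with $a\in L(T(v))$ satisfies $b\in L(T(v))$. Now fix $a\in L(T(v))$ with $v\in\child(\rho)\setminus S_{\rho}$. Its out-neighbors in $\G-S_{\rho}$ lie in $L(T(v))$ directly. For its in-neighbors: if $(b,a)\in E(\G)$ with $b\in L(T)\setminus S_{\rho}$, then $b\in L(T(v''))$ for some inner child $v''$ of $\rho$, and applying the same fact to $v''$ gives $a\in L(T(v''))$; since the sets $L(T(w))$ with $w\in\child(\rho)$ are pairwise disjoint, $v''=v$ and hence $b\in L(T(v))$. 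Therefore both $N_{\G-S_{\rho}}(a)$ and $N^{-}_{\G-S_{\rho}}(a)$ are contained in $L(T(v))$, so no path of $\G-S_{\rho}$ leaves $L(T(v))$; each $L(T(v))$ is thus a union of connected components of $\G-S_{\rho}$, and being connected it is a single component. Since $L(T)\setminus S_{\rho}=\bigcup_{v\in\child(\rho)\setminus S_{\rho}}L(T(v))$ is a disjoint union covering all vertices of $\G-S_{\rho}$, these are exactly the components, which is the assertion.

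The only point needing a word of care is the degenerate situation in which $\rho$ has no inner child at all, i.e.\ $T$ is a star and $S_{\rho}=L(T)$; then $\G-S_{\rho}$ is the empty graph and the claimed family of components is empty as well, so the statement holds trivially. I do not expect any real obstacle beyond being careful with the in-neighbor case above: an arc \emph{into} $L(T(v))$ from a vertex of $S_{\rho}$ can certainly exist in $\G$, and it is exactly the deletion of $S_{\rho}$ that makes the decomposition clean. Notably, no triples, no Thm.~\ref{thm:2BMG-Fx-charac}, and no analysis of forbidden subgraphs are needed --- the argument rests only on Observation~\ref{obs:full-color-subtree}, Lemma~\ref{lem:subgraph}, Lemma~\ref{lem:LRT-subtree-connected}, and Cor.~\ref{cor:inner-vertex-mult-col}.
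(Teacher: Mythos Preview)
Your proposal is correct and follows essentially the same approach as the paper's proof: both invoke Lemma~\ref{lem:subgraph} and Lemma~\ref{lem:LRT-subtree-connected} to identify each $\G(T(v),\sigma\rst)$ as a connected induced subgraph, then use Cor.~\ref{cor:inner-vertex-mult-col} together with Obs.~\ref{obs:full-color-subtree} to rule out arcs between distinct $L(T(v))$ and $L(T(w))$. Your treatment is slightly more explicit about the in-neighbor direction and the degenerate star case, whereas the paper dispatches the former with a one-word symmetry remark, but these are presentational rather than substantive differences.
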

\begin{proof}
  Let $v\in\child_T(\rho)\cap V^0(T)=\child_T(\rho)\setminus S_{\rho}$ and
  consider the BMG $\G(T(v),\sigma\rst)$. By
  Lemma~\ref{lem:LRT-subtree-connected} and Lemma~\ref{lem:subgraph},
  $\G(T(v),\sigma\rst)$ is connected and we have
  $\G(T(v),\sigma\rst)=(\G[L(T(v))],\sigma\rst)$.  Moreover, it holds
  $((\G-S_{\rho})[L(T(v))],\sigma\rst)=(\G[L(T(v))],\sigma\rst)$ since
  $L(T(v)) = V(\G[L(T(v))]) = V(H[L(T(v))) ]$ for $H \coloneqq \G-S_{\rho} = 
  \G[V(\G)\setminus S_{\rho}]$.  

  If $\child_T(\rho)\setminus S_{\rho} = \{v\}$, then the statement is 
  trivially satisfied. Therefore, suppose that $|\child_T(\rho)\setminus 
  S_{\rho}|>1$.
  Hence, it remains to show that there are no arcs
  between $\G(T(v),\sigma\rst)$ and $\G(T(w),\sigma\rst)$ for any
  $w\in\child_T(\rho)\setminus S_{\rho}$, $w\neq v$.
  Cor.~\ref{cor:inner-vertex-mult-col} and $v\prec_T\rho$ imply that $T(v)$
  contains both colors.  Thus, by Obs.~\ref{obs:full-color-subtree}, there
  are no out-arcs to any vertex in $L(T)\setminus L(T(v))$, hence in
  particular there are no out-arcs $(x,y)$ with $x\preceq_T v$,
  $y\preceq_T w$. By symmetry, the same holds for $w$, thus we can conclude
  that there are no arcs $(y,x)$.  From the observation that each
  $x\in L(T)\setminus S_{\rho}$ must be located below some
  $v\in\child_T(\rho)\cap V^0(T)$, it now immediately follows that
  $(\G-S_{\rho},\sigma\rst)$ consists exactly of these connected components
  as stated.
\end{proof}

As a consequence, we have
\begin{corollary}
  \label{cor:lrt-recurse}
  Let $(T,\sigma)$ with root $\rho$ be the LRT of a 2-BMG $(\G,\sigma)$.
  Then each child of $\rho$ is either one of the support leaves $S_{\rho}$
  of $\rho$ or the root of the LRT for a connected component of
  $(\G-S_{\rho},\sigma\rst)$.
\end{corollary}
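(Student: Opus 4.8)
The plan is to obtain the statement as an almost immediate consequence of Lemma~\ref{lem:remove-support-leaves} together with Lemma~\ref{lem:subgraph}. First I would observe that, by definition, the children of $\rho$ split into the support leaves $S_\rho = \child_T(\rho)\cap L(T)$ and the inner children $\child_T(\rho)\setminus S_\rho = \child_T(\rho)\cap V^0(T)$; there is nothing to prove for the support leaves, so it suffices to treat an arbitrary inner child $v\in\child_T(\rho)\setminus S_\rho$. The task is then to show two things about such a $v$: that $\G(T(v),\sigma\rst)$ is (the vertex set of) a connected component of $(\G-S_\rho,\sigma\rst)$, and that $T(v)$ is the least resolved tree explaining that component.

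The first point is exactly the content of Lemma~\ref{lem:remove-support-leaves}, which identifies the connected components of $(\G-S_\rho,\sigma\rst)$ precisely as the graphs $\G(T(v),\sigma\rst)$ for $v\in\child_T(\rho)\setminus S_\rho$; so $v$ is the root of a tree explaining the connected component $\G(T(v),\sigma\rst)$. For the second point I would invoke the ``moreover'' part of Lemma~\ref{lem:subgraph}: since $(T,\sigma)$ is least resolved for $(\G,\sigma)$, the subtree $T(v)$ is least resolved for $\G(T(v),\sigma\rst)$. Combining these, $v$ is the root of the LRT of a connected component of $(\G-S_\rho,\sigma\rst)$, which is what the corollary asserts. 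Putting the two cases together — $v\in S_\rho$ or $v$ an inner child — every child of $\rho$ is of one of the two stated types.

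I do not anticipate a genuine obstacle here, since all the heavy lifting has already been done; the only mild subtlety worth spelling out is that the component of $(\G-S_\rho,\sigma\rst)$ whose LRT is $T(v)$ is genuinely a \emph{connected} BMG in its own right (so that speaking of ``the LRT'' is legitimate), which follows from Lemma~\ref{lem:LRT-subtree-connected} applied at $v\prec_T\rho$ exactly as in the proof of Lemma~\ref{lem:remove-support-leaves}. One should also note in passing that $S_\rho$ may be empty, in which case $(\G,\sigma)$ is disconnected by Cor.~\ref{cor:S_rho_empty} and $\G-S_\rho=\G$, so the statement still reads correctly: the children of $\rho$ are then exactly the roots of the LRTs of the connected components of $(\G,\sigma)$ itself.
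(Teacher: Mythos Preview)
Your proposal is correct and follows essentially the same route as the paper: you invoke Lemma~\ref{lem:remove-support-leaves} to identify the connected components of $(\G-S_\rho,\sigma\rst)$ with the $\G(T(v),\sigma\rst)$ for the inner children $v$, and then the ``moreover'' clause of Lemma~\ref{lem:subgraph} to conclude that each $T(v)$ is the LRT of its component. Your additional remarks on connectedness via Lemma~\ref{lem:LRT-subtree-connected} and on the case $S_\rho=\emptyset$ are sound but go slightly beyond what the paper's proof makes explicit.
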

\begin{proof}
  Let $(T,\sigma)$ with root $\rho$ be the least resolved tree for
  $(\G,\sigma)$.  The support leaves $S_{\rho}$ are children of $\rho$ by
  definition.  By Lemma~\ref{lem:remove-support-leaves}, the connected
  components of $(\G-S_{\rho},\sigma\rst)$ are exactly the BMGs
  $\G(T(v),\sigma\rst)$ with $v\in\child_T(\rho)\setminus S_{\rho}$.
  Moreover, by Lemma~\ref{lem:subgraph}, the subtrees $T(v)$ with
  $v\in\child_T(\rho)\setminus S_{\rho}$ are exactly the unique LRTs for
  these BMGs.
\end{proof}

In order to use this property as a means of constructing the LRT in a
recursive manner, we need to identify the support leaves of the root
$S_{\rho}$ directly from the 2-BMG $(\G,\sigma)$ without constructing the
LRT first. To this end, we consider the set of \emph{umbrella vertices}
$U(\G,\sigma)$ comprising all vertices $x$ for which $N(x)$ consists of
\emph{all} vertices of $V(\G)$ that have the color distinct from
$\sigma(x)$. 
\begin{definition}[Umbrella Vertices]
For an arbitrary 2-colored graph $(\G,\sigma)$, the set 
\begin{equation*}
  U(\G,\sigma) \coloneqq  \left\{ x\in V(\G) \mid y\in N(x)
    \textrm{ if } \sigma(y)\ne \sigma(x) \textrm{ and } y\in V(\G) \right\}
\end{equation*}  
  is the set \emph{umbrella vertices} of  $(\G,\sigma)$.
\end{definition}
The intuition behind this definition is that every support leaf of
the root of the LRT of a 2-BMG must have all differently
colored vertices as out-neighbors, i.e., they are umbrella vertices.
We now define ``support sets'' of graphs as particular subsets
  of umbrella vertices. As we shall see later, support sets are closely
  related to support vertices in $S_{\rho}$.
\begin{definition}[Support Set of $(\G,\sigma)$]
  Let $(\G,\sigma)$ be a 2-colored graph. A \emph{support set}
  $S:=S(\G,\sigma)$ of $(\G,\sigma)$ is a maximal subset
  $S\subseteq U(\G,\sigma)$ of umbrella vertices such that $x\in S$ implies
  $N^-(x)\subseteq S$.
\end{definition}
\begin{lemma}
  Every 2-colored graph $(\G,\sigma)$ has a unique support set
  $S(\G,\sigma)$.
\end{lemma}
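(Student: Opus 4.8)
The plan is to show that the collection of all subsets $S\subseteq U(\G,\sigma)$ satisfying the closure condition ``$x\in S\Rightarrow N^{-}(x)\subseteq S$'' is closed under arbitrary unions; its union is then automatically the unique maximal such set, i.e.\ the support set.

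First I would fix notation: call a set $S\subseteq V(\G)$ \emph{admissible} if $S\subseteq U(\G,\sigma)$ and $N^{-}(x)\subseteq S$ for every $x\in S$, and let $\mathcal{A}$ be the family of all admissible sets. Note that $\emptyset\in\mathcal{A}$ (both conditions hold vacuously), so $\mathcal{A}\neq\emptyset$ and $S^{\ast}\coloneqq\bigcup_{S\in\mathcal{A}}S$ is well defined. The key step is to check that $S^{\ast}$ is itself admissible. The inclusion $S^{\ast}\subseteq U(\G,\sigma)$ is immediate since every member of $\mathcal{A}$ lies in $U(\G,\sigma)$. For the closure condition, pick $x\in S^{\ast}$; then $x\in S$ for some $S\in\mathcal{A}$, and admissibility of $S$ gives $N^{-}(x)\subseteq S\subseteq S^{\ast}$. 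Hence $S^{\ast}\in\mathcal{A}$.

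It remains to derive existence and uniqueness. Since $S^{\ast}\in\mathcal{A}$ and $S^{\ast}$ contains every member of $\mathcal{A}$, it is a maximal admissible set, so a support set exists. Conversely, if $S$ is any support set then $S\in\mathcal{A}$, hence $S\subseteq S^{\ast}$; as $S^{\ast}\in\mathcal{A}$ and $S$ is maximal in $\mathcal{A}$, this forces $S=S^{\ast}$. Thus $S(\G,\sigma)=S^{\ast}$ is the unique support set.

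I do not expect a genuine obstacle here; the only point requiring a moment's care is that the closure condition really does survive arbitrary unions --- membership of $x$ in $S^{\ast}$ is always witnessed by a single admissible $S$ that already absorbs all of $N^{-}(x)$ --- together with the routine observation that ``maximal admissible'' must coincide with ``the union of all admissible sets''. If a more explicit handle on $S^{\ast}$ is wanted for the later algorithmic parts, one can equivalently show $S^{\ast}=\{x\in V(\G)\mid A(x)\subseteq U(\G,\sigma)\}$, where $A(x)$ denotes the set of vertices reachable from $x$ by traversing arcs backwards (including $x$ itself); admissibility of the right-hand side and the inclusion $S\subseteq S^{\ast}$ for every admissible $S$ then follow by a straightforward induction along backward arcs.
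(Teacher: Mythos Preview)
Your proof is correct and rests on the same key observation as the paper's: the family of ``admissible'' subsets of $U(\G,\sigma)$ is closed under union, so a unique maximal element exists. The paper phrases this by contradiction (two distinct maximal sets would have a strictly larger admissible union), whereas you construct the maximum directly as the union of all admissible sets and thereby also make existence explicit; the underlying idea is identical.
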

\begin{proof}
  Assume, for contradiction, that $(\G,\sigma)$ has (at least) two distinct
  support sets $S,S' \subseteq U(\G,\sigma)$.  Clearly neither of them can
  be a subset of the other, since supports sets are maximal.  We have
  $N^-(x)\subseteq S$ for all $x\in S$ and and $N^-(x')\subseteq S'$ for
  all $x'\in S'$, which implies that $N^-(z)\subseteq S\cup S'$ for all
  $z\in S\cup S'$. Together with the fact that $S$, $S'$, and thus
  $S\cup S'$, are all subsets of $U(\G,\sigma)$, this contradicts the
  maximality of both $S$ and $S'$.
\end{proof}

For the construction of the support set $S:=S(\G,\sigma)$, we
consider the following sequence of sets, defined recursively by
\begin{equation}
  S^{(k)}:= \{x\in S^{(k-1)}\mid N^-(x)\subseteq S^{(k-1)}\}
  \textrm{ for }k\ge 1 \textrm{ and } S^{(0)}=U(\G,\sigma).
\end{equation}
By construction $S^{(k+1)}\subseteq S^{(k)}$.  Furthermore, there is a
$k<|V(\G)|$ such that $S^{(k+1)}=S^{(k)}$.  Next we show that in a 2-BMG,
$S$ is obtained in a single iteration.
\begin{lemma}
  \label{lem:S-equals-Sprime}
  If $(\G,\sigma)$ is a 2-BMG, then $S=S^{(1)}$.
\end{lemma}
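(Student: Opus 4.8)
The plan is to establish the two inclusions $S\subseteq S^{(1)}$ and $S^{(1)}\subseteq S$ separately. The inclusion $S\subseteq S^{(1)}$ holds for \emph{every} $2$-colored graph and is immediate from the definitions: a support set satisfies $S\subseteq U(\G,\sigma)=S^{(0)}$ and $N^{-}(x)\subseteq S\subseteq S^{(0)}$ for all $x\in S$, hence every $x\in S$ lies in $S^{(1)}$. For the converse it suffices to show that $S^{(1)}$ is a subset of $U(\G,\sigma)$ that is closed under taking in-neighbors, i.e.\ $x\in S^{(1)}$ implies $N^{-}(x)\subseteq S^{(1)}$: since $S$ is a maximal subset of $U(\G,\sigma)$ closed under in-neighbors and we already know $S\subseteq S^{(1)}\subseteq U(\G,\sigma)$, this forces $S^{(1)}=S$. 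Note that $S^{(1)}\subseteq U(\G,\sigma)$ holds by construction, so only the closure property is at issue.

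So fix $x\in S^{(1)}$ and $y\in N^{-}(x)$; the goal is $y\in S^{(1)}$, i.e.\ $y\in S^{(0)}$ and $N^{-}(y)\subseteq S^{(0)}$. The first is automatic, since $x\in S^{(1)}$ gives $N^{-}(x)\subseteq S^{(0)}=U(\G,\sigma)$ and hence $y\in U(\G,\sigma)$. For the second I would fix an arbitrary $z\in N^{-}(y)$ and prove $z\in U(\G,\sigma)$. We then have a directed path $z\to y\to x$, and because $(\G,\sigma)$ is properly colored with exactly two colors, $\sigma(z)\neq\sigma(y)$ and $\sigma(y)\neq\sigma(x)$ force $\sigma(z)=\sigma(x)$; write $a\coloneqq\sigma(z)=\sigma(x)$ and $b\coloneqq\sigma(y)$. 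Since $x,y\in U(\G,\sigma)$, properness gives $N(x)=\{v\mid\sigma(v)=b\}$ and $N(y)=\{v\mid\sigma(v)=a\}$.

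The crucial step is to rule out $z\notin U(\G,\sigma)$ using the forbidden-subgraph characterization. Suppose some vertex $w$ with $\sigma(w)=b$ satisfies $(z,w)\notin E(\G)$. Then $z\neq x$ (otherwise $(z,w)=(x,w)\in E(\G)$ because $x$ is an umbrella vertex and $\sigma(w)=b$) and $y\neq w$ (otherwise $(z,w)=(z,y)\in E(\G)$). Thus $z,x$ are distinct vertices of color $a$ and $y,w$ are distinct vertices of color $b$, and the induced subgraph $\G[\{z,x,y,w\}]$ carries the arcs $(z,y),(y,x),(x,w)\in E(\G)$ and lacks the arc $(z,w)$. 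With $x_1=z$, $y_1=y$, $x_2=x$, $y_2=w$ this is precisely an induced F2-graph in the sense of Def.~\ref{def:forbidden-subgraphs}, contradicting Thm.~\ref{thm:2BMG-Fx-charac}. Hence $(z,w)\in E(\G)$ for every $w$ with $\sigma(w)\neq\sigma(z)$, i.e.\ $z\in U(\G,\sigma)$. As $z\in N^{-}(y)$ was arbitrary, $N^{-}(y)\subseteq S^{(0)}$, so $y\in S^{(1)}$; as $y\in N^{-}(x)$ was arbitrary, $N^{-}(x)\subseteq S^{(1)}$, which is the closure property we needed.

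The main obstacle is the third paragraph: one has to recognize that a two-arc path $z\to y\to x$ whose last two vertices are umbrella vertices, together with one hypothetical missing out-arc $(z,w)$, assembles exactly into the F2 pattern; this is where the assumption that $(\G,\sigma)$ is a $2$-BMG (and not just $2$-colored) enters, via Thm.~\ref{thm:2BMG-Fx-charac}. The only additional care needed is verifying that $z,x,y,w$ are pairwise distinct, which is dispatched by the two trivial special cases $z=x$ and $y=w$; everything else is unwinding the definitions of $U(\G,\sigma)$, of the sets $S^{(k)}$, and of a support set.
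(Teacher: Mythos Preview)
Your proof is correct and follows essentially the same approach as the paper: both arguments reduce to showing that a path $z\to y\to x$ with $x\in S^{(1)}$, $y\in U$, and $z\notin U$ would produce an induced F2-graph (on the vertices $x_1=z$, $y_1=y$, $x_2=x$, $y_2=w$), contradicting Thm.~\ref{thm:2BMG-Fx-charac}. The only cosmetic difference is that the paper phrases this as a contradiction from $S^{(2)}\subsetneq S^{(1)}$, whereas you verify the closure property $N^{-}(S^{(1)})\subseteq S^{(1)}$ directly and then invoke maximality of $S$; the underlying construction of the forbidden subgraph is identical.
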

\begin{proof}
  Let $(\G=(V,E),\sigma)$ be a 2-BMG and $U=U(\G,\sigma)$. Assume for
  contradiction that $S\ne S^{(1)}$, and thus $S^{(2)}\subsetneq S^{(1)}$.
  We will show that this implies the existence of a forbidden F2-graph.  By
  assumption, there is a vertex $x_2\in S^{(1)}\setminus S^{(2)}$.  Thus,
  there must be a vertex $y_1\in N^-(x_2)$ (and thus $(y_1,x_2)\in E$) with
  $\sigma(y_1)\ne\sigma(x_2)$ such that $y_1\notin S^{(1)}$.  However, by
  definition, $y_1\in N^-(x_2)$ and $x_2\in S^{(1)}$ implies $y_1\in U$.
  Now, it follows from $y_1\in U\setminus S^{(1)}$ that there is a vertex
  $x_1\in N^-(y_1)$ with $\sigma(x_1)=\sigma(x_2)\ne\sigma(y_1)$ such that
  $x_1\notin U$.  The latter together with $x_2\in S^{(1)}\subseteq U$
  implies $x_1\ne x_2$. In particular, since $x_1\notin U$, the vertex
  $x_1$ does not have an out-arc to every differently colored vertex, thus
  there must be a vertex $y_2$ with $\sigma(y_2)=\sigma(y_1)$ such that
  $(x_1,y_2)\notin E$. Since $x_1\in N^-(y_1)$, we have $(x_1,y_1)\in E$
  and $y_1\neq y_2$.  Finally, $x_2\in U$ and
  $\sigma(y_2)=\sigma(y_1)\ne\sigma(x_2)$ implies that $(x_2,y_2)\in E$.
  In summary, we have four distinct vertices $x_1,x_2,y_1,y_2$ with
  $\sigma(x_1)=\sigma(x_2)\ne\sigma(y_1)=\sigma(y_2)$ and (non-)arcs
  $(x_1,y_1),(y_1,x_2),(x_2,y_2)\in E$ and $(x_1,y_2)\notin E$, and hence
  an induced F2-graph in $(\G,\sigma)$.  By Thm.~\ref{thm:2BMG-Fx-charac},
  we can conclude that $(\G,\sigma)$ is not a BMG; a contradiction.
\end{proof}

In general, $S=S^{(0)}=U(\G,\sigma)$ is not satisfied. To see this consider
the BMG $(\G,\sigma)$ that is explained by the triple $x_1 y|x_2$ with
$\sigma(x_1)=\sigma(x_2)\neq \sigma(y)$. One easily verifies that
$U(\G,\sigma)=\{x_1,x_2\}$ but $S=\{x_2\}$.

\begin{theorem}
  \label{thm:support-leaves-are-S}
  Let $(T,\sigma)$ be the least resolved tree of a 2-BMG
  $(\G,\sigma)$.  Then, the set of support leaves $S_{\rho}$ of the root
  $\rho$ equals the support set $S$ of $(\G,\sigma)$. In particular
  $S\ne\emptyset$ if and only if $(\G,\sigma)$ is connected.
\end{theorem}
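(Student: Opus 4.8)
The plan is to prove the two inclusions $S_\rho \subseteq S$ and $S \subseteq S_\rho$ separately, using the tree structure of the LRT. For the first inclusion, let $x \in S_\rho$, i.e.\ $x$ is a leaf attached directly to the root. I would first check that $x$ is an umbrella vertex: since $x \prec_T \rho$ is a leaf, every differently colored leaf $y$ satisfies $\lca_T(x,y) = \rho$, which is already the $\preceq_T$-maximal element, so $y$ is automatically a best match of $x$; hence $N(x)$ contains every vertex of the opposite color, i.e.\ $x \in U(\G,\sigma)$. Next I would show $N^-(x) \subseteq S_\rho$. If $(w,x) \in E$ then $\lca_T(w,x) = \rho$ (since $x$ is a child of $\rho$), and $x$ being a best match of $w$ of color $\sigma(x)$ forces that there is no leaf of color $\sigma(x)$ strictly below any vertex on the $w$-side, i.e.\ $\sigma(x) \notin \sigma(L(T(v)))$ for the child $v$ of $\rho$ with $w \preceq_T v$. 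By Fact~\ref{obs:full-color-subtree} (applied contrapositively) and Cor.~\ref{cor:inner-vertex-mult-col}, a child of $\rho$ that is an inner vertex carries both colors; hence $w$ cannot lie below an inner child of $\rho$, so $w$ itself must be a child of $\rho$, and since $\sigma(w) \neq \sigma(x)$ it is single-colored, hence a leaf by Lemma~\ref{lem:single-color-is-leaf}; thus $w \in S_\rho$. Combining, $S_\rho$ is a subset of $U(\G,\sigma)$ that is closed under taking in-neighbors, so by maximality $S_\rho \subseteq S$.

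For the reverse inclusion $S \subseteq S_\rho$, I would argue by contradiction: suppose some $x \in S$ is \emph{not} a child of $\rho$. Then $x$ lies strictly below some child $v \in \child_T(\rho) \setminus S_\rho$, which by Cor.~\ref{cor:inner-vertex-mult-col} is an inner vertex carrying both colors. The idea is that since $x$ is an umbrella vertex it must reach, in particular, differently colored leaves outside $T(v)$; but $T(v)$ already contains both colors, so by Fact~\ref{obs:full-color-subtree} there are no arcs from $x$ to leaves outside $T(v)$. This is only a contradiction if there actually exists a leaf of color $\neq \sigma(x)$ outside $L(T(v))$ — which holds precisely when $\G(T,\sigma)$ is disconnected or has another child, so I need to handle the case $\child_T(\rho) = \{v\}$ (and $v$ the unique child) and, within the connected case, the situation $\child_T(\rho)\setminus S_\rho = \{v\}$ with $S_\rho \neq \emptyset$ separately. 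In the latter sub-case there is a support leaf $z \in S_\rho$ of color, say, $\sigma(z)$; if $\sigma(z) \neq \sigma(x)$ then $z \notin L(T(v))$ gives the needed outside vertex directly; if $\sigma(z) = \sigma(x)$, I descend one level inside $T(v)$ using Lemma~\ref{lem:support-leaves} to find a support leaf of the appropriate other color, or I invoke that $T(v)$ has at least two leaves of that color so that best-match considerations inside $T(v)$ already obstruct $x$ from having \emph{all} same-colored targets. The cleanest route is probably: if $x$ is an umbrella vertex of $(\G,\sigma)$ then it is in particular an umbrella vertex of the connected component $\G(T(v),\sigma\rst)$ (by Lemma~\ref{lem:subgraph}), and within that component $x$ being a leaf not attached to $\rho_{T(v)} = v$ leads to a violation of the best-match property via a same-colored leaf sitting below $\lca_{T(v)}(x, \cdot)$ — essentially re-running the $S_\rho$-membership argument from the first half in the subtree.

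The main obstacle I anticipate is the bookkeeping in the reverse inclusion when $(\G,\sigma)$ is connected and $\rho$ has exactly one non-support child: here the existence of an "outside" differently colored leaf is not immediate and must be extracted either from $S_\rho \neq \emptyset$ (using Cor.~\ref{cor:S_rho_empty}) together with a color argument, or from the multi-leaf structure inside $T(v)$ guaranteed by Lemma~\ref{lem:support-leaves} and Cor.~\ref{cor:inner-vertex-mult-col}. A tempting shortcut — induction on $|V(T)|$ via Cor.~\ref{cor:lrt-recurse} — is circular here, since that corollary is what we are building toward using this theorem, so the argument must stay purely structural. Finally, the last sentence "$S \neq \emptyset$ iff $(\G,\sigma)$ is connected" is then immediate: by the equality $S = S_\rho$ just proved and Cor.~\ref{cor:S_rho_empty}, $S \neq \emptyset \iff S_\rho \neq \emptyset \iff (\G,\sigma)$ is connected.
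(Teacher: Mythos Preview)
Your argument for $S_\rho \subseteq S$ is correct and actually cleaner than the paper's: you show directly that $S_\rho \subseteq U(\G,\sigma)$ and that $N^-(x)\subseteq S_\rho$ for every $x\in S_\rho$, whence $S_\rho$ is a subset of $U$ closed under in-neighbors and so $S_\rho \subseteq S$ by maximality. The paper instead reaches the same conclusion via informative triples and Lemma~\ref{lem:informative_triples}, so your route is a genuine simplification there.

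The reverse inclusion, however, has a real gap in precisely the sub-case you flag as the obstacle: $x\notin S_\rho$ lies below an inner child $v$ of $\rho$, and \emph{every} leaf outside $L(T(v))$ has color $\sigma(x)$. In this situation $x$ may well remain an umbrella vertex of $(\G,\sigma)$ (all opposite-colored leaves sit in $T(v)$), so you must produce an in-neighbor $y$ of $x$ with $y\notin U$. None of your proposed routes does this. Lemma~\ref{lem:support-leaves} guarantees \emph{some} support leaf of $v$, but says nothing about its color; and your ``cleanest route'' via $\G(T(v),\sigma\rst)$ only yields information about umbrella or support status \emph{inside} that subgraph, which does not translate back to $S(\G,\sigma)$, and in any case says nothing when $x$ happens to be a child of $v$.

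The missing ingredient is the least-resolved hypothesis on $T$, which you never invoke in this direction. Since the inner edge $\rho v$ is non-redundant, Lemma~\ref{lem:redundant_edges} forces an arc $(a,b)\in E(\G)$ with $\lca_T(a,b)=v$ and $\sigma(b)\in\sigma(L(T)\setminus L(T(v)))=\{\sigma(x)\}$, hence $\sigma(a)\neq\sigma(x)$. A short contradiction argument (if every opposite-colored leaf below $v$ lay in a proper two-colored subtree of $v$, no such arc could exist) then produces a support leaf $y$ of $v$ with $\sigma(y)\neq\sigma(x)$. For this $y$ one has $(y,x)\in E(\G)$, while $(y,x')\notin E(\G)$ for any $x'\in S_\rho$ (nonempty by Cor.~\ref{cor:S_rho_empty}, and entirely of color $\sigma(x)$ in this case), giving $y\in N^-(x)\setminus U$ and hence $x\notin S$.
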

\begin{proof}
  Let $(T,\sigma)$ be the LRT of a 2-BMG $(\G=(V,E),\sigma)$.
  We set $U\coloneqq U(\G,\sigma)$ and note first that $S=S^{(1)}$ by
  Lemma~\ref{lem:S-equals-Sprime}.

  First, suppose that $(\G,\sigma)$ is not connected. Then it immediately
  follows from Prop.~\ref{prop:bmg-connected} that
  $\sigma(L(T(v)))=\sigma(L(T))$ and thus $|\sigma(L(T(v)))|>1$ for any
  $v\in\child_T(\rho)$. The latter together with
  Cor.~\ref{cor:inner-vertex-mult-col} implies that any child of $\rho$
  must be an inner vertex in $T$. Hence, $S_{\rho}=\emptyset$. On the other
  hand, since $(\G,\sigma)$ is not connected, each of its connected 
  components is a 2-BMG (cf.\ Prop.~\ref{prop:same-colorset}),
  and thus, contains both colors.
  Therefore, for each vertex $x$ in $\G$, we
  can find a vertex $y$ with $\sigma(x)\neq\sigma(y)$ such that
  $(x,y),(y,x)\notin E$, and thus $x\notin S$. Since this is true for any
  vertex in $\G$, we can conclude $S=\emptyset=S_{\rho}$.

  Now, suppose that $(\G,\sigma)$ is connected. By
  Cor.~\ref{cor:S_rho_empty}, we have $S_{\rho}\neq \emptyset$. We first
  show $S_{\rho} \subseteq S$.  Let $x\in S_{\rho}$. By definition, $x$
  satisfies $\lca_{T}(x,y)=\rho$ and therefore $(x,y)\in E$ for all
  $y\in L(T)$ with $\sigma(y)\ne\sigma(x)$, i.e., $x$ has an out-arc to
  every differently colored vertex in $\G$.  By definition, we thus have
  $x\in U$.  Now assume for contradiction that
  $x\notin S=S^{(1)}= \{ z\in U \mid N^{-}(z)\subseteq U\}$.  The latter
  implies that there exists a vertex $y\in N^-(x)$ such that $y\notin U$.
  In particular, $(y,x)\in E$. Since $y\notin U$, there is some vertex $x'$
  with $\sigma(x')=\sigma(x)$ such that $(y,x')\notin E$.  Together this
  implies that $xy|x'$ is an informative triple. By
  Lemma~\ref{lem:informative_triples}, we obtain
  $\lca_T(x,y)\prec_T\lca_{T}(x,x')=\lca_{T}(x',y)\preceq_{T}\rho$; a
  contradiction to the assumption that $x$ is a support leaf of $\rho$.
  Thus $x\in S$.

  Next, we show by contraposition that $S\subseteq S_{\rho}$.  To this end,
  suppose that $x$ is not a support leaf of $\rho$, i.e.\
  $x\notin S_{\rho}$.  Hence, there is an inner vertex
  $v\in\child_T(\rho)\cap V^0(T)$ such that $x\prec_{T}v$.  By
  Cor.~\ref{cor:inner-vertex-mult-col}, we conclude that
  $|\sigma(L(T(v)))|=2$, i.e., the subtree $T(v)$ contains both colors.  We
  now distinguish two cases: (i) there is a leaf
  $y'\in L(T)\setminus L(T(v))$ with $\sigma(y')\ne\sigma(x)$, and (ii)
  there is no leaf $y'\in L(T)\setminus L(T(v))$ with
  $\sigma(y')\ne\sigma(x)$.

\smallskip
\textit{Case(i):} Since $T(v)$ contains both colors,
  there is a leaf $y\in L(T(v))$, with $y\ne y'$ and
  $\sigma(y)=\sigma(y')\ne\sigma(x)$.  Since, by construction, we have
  $\lca_{T}(x,y)\preceq_{T}v\prec_{T}\rho=\lca_{T}(x,y')$, it follows
  $(x,y')\notin E$.  Together with $\sigma(x)\ne\sigma(y')$, this
  immediately implies $x\notin U$. From $S^{(2)}\subseteq
  S^{(1)}\subseteq U$, we conclude $x\notin S^{(1)}=S$.

  \smallskip \textit{Case(ii):} Suppose that there is no leaf
  $y'\in L(T)\setminus L(T(v))$ with $\sigma(y')\ne\sigma(x)$.  We will
  continue by showing that there is a support leaf $y$ of vertex $v$ with
  $\sigma(y)\ne\sigma(x)$.  Assume, for contradiction, that the latter is
  not the case.  Since $(T,\sigma)$ is least resolved, the inner edge
  $\rho v$ is not redundant.  Hence, by Lemma~\ref{lem:redundant_edges},
  there must be an arc $(a,b)\in E$ such that $\lca_T(a,b)=v$ and
  $\sigma(b)\in \sigma(L(T)\setminus L(T(v)))$.  Since there is no leaf
  $y'\in L(T)\setminus L(T(v))$ with $\sigma(y')\ne\sigma(x)$, we conclude
  that $\sigma(b)=\sigma(x)$ and $\sigma(a)\ne\sigma(x)$.  Clearly, it
  holds $a,b\in L(T(v))$.  Now consider an arbitrary $a'\in L(T(v))$ with
  $\sigma(a')\ne\sigma(x)$.  Since, by assumption, every such $a'$ is not a
  support leaf of $v$, there must be an inner vertex $w\in\child_{T(v)}(v)$
  with $a'\prec_{T}w$.  By Cor.~\ref{cor:inner-vertex-mult-col} and since
  $w\prec_{T}v\prec_{T}\rho$, we conclude that $|\sigma(L(T(w)))|=2$, i.e.,
  the subtree $T(w)$ contains both colors.  Thus there is some $b'$ with
  $\sigma(b')=\sigma(x)$ and $\lca_T(a',b')\preceq_{T}w\prec_T v$.  Since
  $a'$ was chosen arbitrarily, we conclude that there cannot be an arc
  $(a,b)\in E$ such that $\lca_T(a,b)=v$; a contradiction.  It follows that
  there is a support leaf $y$ of vertex $v$ with $\sigma(y)\ne\sigma(x)$.
  Hence, $\lca_{T}(x,y)=v\preceq_{T}\lca_{T}(x'',y)$ for all $x''\in L(T)$
  with $\sigma(x'')=\sigma(x)$, and thus $(y,x)\in E$ and $y\in N^-(x)$.
  Since $S_{\rho}\ne\emptyset$ and
  $\sigma(y)\notin\sigma(L(T)\setminus L(T(v)))$, there must be a leaf
  $x'\in S_{\rho}$ with $\sigma(x')=\sigma(x)$.  The fact that
  $\lca_T(x,y)=v\prec_T\rho=\lca_T(x',y)$ implies $(y,x')\notin E$.
  Therefore and since $\sigma(x')\ne\sigma(y)$, it follows $y\notin U$.
  Together with $y\in N^-(x)$, we conclude that $x\notin S^{(1)}=S$.

  In summary, we have shown $S=S_{\rho}$ for any BMG $(\G,\sigma)$.
  Finally, $S=S_{\rho}$ together with Cor.\ \ref{cor:S_rho_empty} implies
  that $S\neq \emptyset$ if and only if $(\G,\sigma)$ is connected, which
  completes the proof.
\end{proof}

\section{Algorithmic Considerations} 
\label{sect:algo}

Thm.~\ref{thm:support-leaves-are-S} provides not only a convenient
necessary condition for connected 2-BMGs but also a fast way of determining
the support set $S=S_{\rho}$ and thus also a fast recursive approach to
construct the LRT for a 2-BMG. It is formalized in
Alg.~\ref{alg:2-col-BMG} and illustrated in
Fig.~\ref{fig:algo-2BMG-LRT-example}.

\begin{algorithm}
  \caption{LRT for connected 2-colored BMGs $(\G,\sigma)$.}
  \label{alg:2-col-BMG}
  \DontPrintSemicolon
  \SetKwFunction{FRecurs}{void FnRecursive}%
  \SetKwFunction{FRecurs}{Build2ColLRT}
  \SetKwProg{Fn}{Function}{}{}
  \KwIn{Connected properly 2-colored digraph $(\G=(L,E),\sigma)$, vertex
  $\rho$}
  \KwOut{LRT of $(\G,\sigma)$ if $(\G,\sigma)$ is a BMG}
  \Fn{\FRecurs{$\G, \sigma, \rho$}}{
     $U \leftarrow \left\{ x\in L \mid \outdegree(x) =
     |L|-|L[\sigma(x)]| \right\}$\label{line:start}\tcp*{umbrella vertices}
     $S^{(1)}\leftarrow \{ x\in U \mid N^{-}(x)\subseteq U\}$\tcp*{all
     in-neighbors in $U$}
     $S^{(2)}\leftarrow \{ x\in S^{(1)} \mid N^{-}(x)\subseteq S^{(1)}\}$
     \label{line:S}\tcp*{all in-neighbors in $S^{(1)}$}
     \uIf{$S^{(1)}=\emptyset$ or $S^{(2)}\ne S^{(1)}$\label{line:exit-cond}}%
     {\textbf{exit false}}
     \Else{
       \ForEach{$x\in S^{(2)}$}{\label{line:child-1}
         add $x$ as a child of $\rho$\label{line:child-2}\;
       }
       \ForEach{connected
       component $\G_v$ of $\G-S^{(2)}$\label{line:wcc}}{
         \If{$|V(\G_v)|=1$\label{line:exit-singleton}}{\textbf{exit false}}
         create vertex $v$\;
         $T_v\leftarrow$\FRecurs{$\G_v, \sigma\rst, v$}\label{line:rec_step}\;
         connect the root $v$ of $T_v$ as a child to $\rho$\label{line:end}\;
       }
     }
  }
\end{algorithm}

\begin{figure}[t]
  \begin{center}
    \includegraphics[width=0.85\textwidth]{./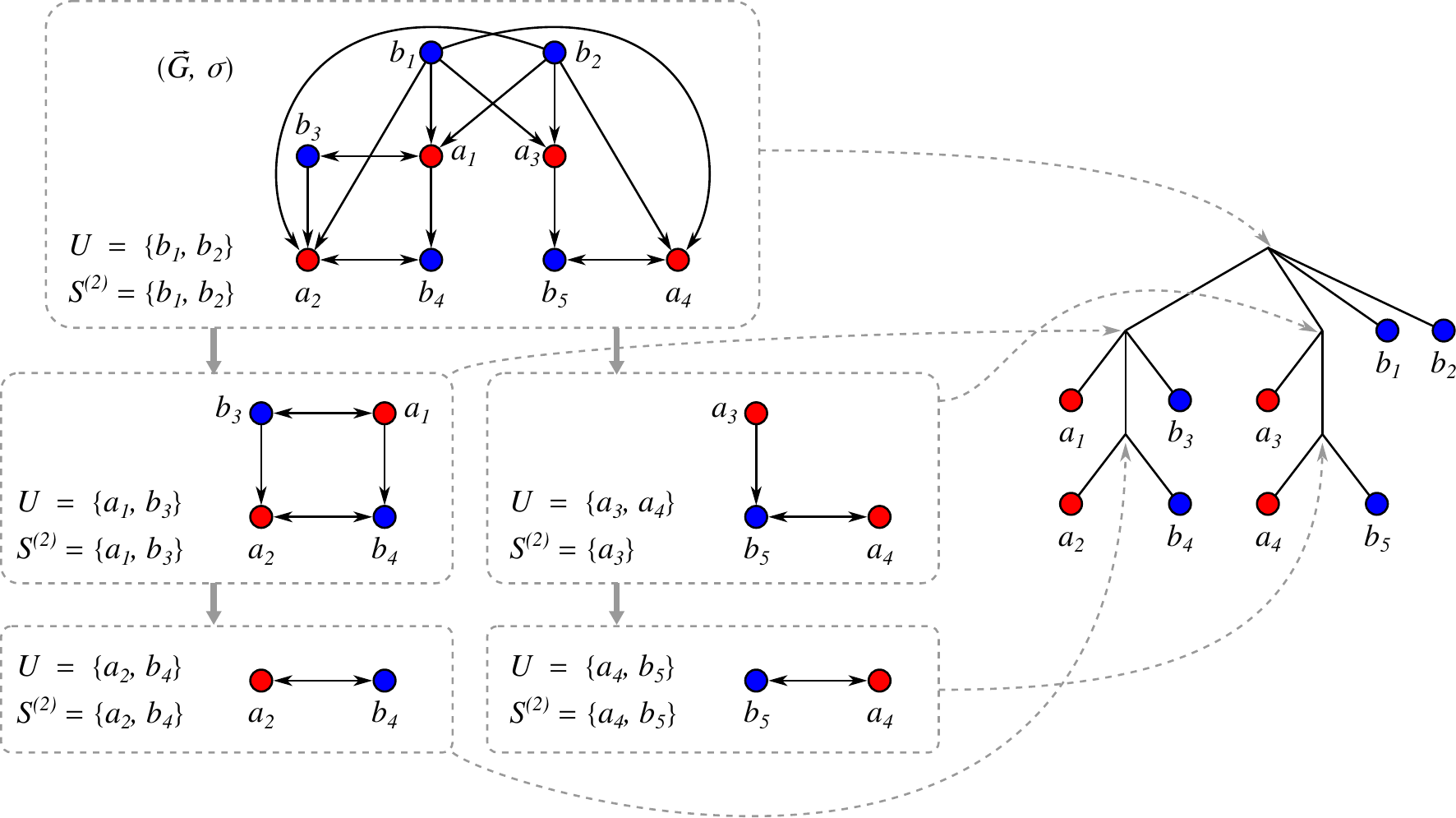}
  \end{center}
  \caption{Illustration of Alg.~\ref{alg:2-col-BMG} with input
    $(\G,\sigma)$ (uppermost box). The boxes indicate the five recursion
    steps that are necessary to decompose $(\G,\sigma)$, and correspond to
    the five inner vertices of the LRT shown on the right.  Note that, in
    the recursion step on $(\G[\{a_3,a_4,b_5\}], \sigma\rst)$, we have
    $U\ne S^{(2)}$.}
  \label{fig:algo-2BMG-LRT-example}
\end{figure}

\begin{lemma}
  \label{lem:build-2col-LRT}
  Let $(\G,\sigma)$ be a connected 2-BMG.
  Then Alg.~\ref{alg:2-col-BMG} returns the least resolved tree for
  $(\G,\sigma)$.
\end{lemma}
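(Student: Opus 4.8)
The plan is to prove correctness of Algorithm~\ref{alg:2-col-BMG} by induction on the number of vertices of the connected 2-BMG $(\G,\sigma)$, using Corollary~\ref{cor:lrt-recurse} and Theorem~\ref{thm:support-leaves-are-S} as the structural backbone.

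First I would set up the induction. The base case is a connected 2-BMG on two vertices; here one checks directly that $U$ consists of both vertices, $S^{(1)}=S^{(2)}=U$, so the algorithm attaches both leaves to $\rho$ and the loop over connected components of $\G-S^{(2)}=\emptyset$ is vacuous, producing the (necessarily least resolved) tree with a single inner vertex and two leaves. For the inductive step, let $(\G,\sigma)$ be a connected 2-BMG with LRT $(T,\sigma)$ and root $\rho$. I would first argue that the sets computed in lines~\ref{line:start}--\ref{line:S} agree with those in the paper's notation: the quantity $|L|-|L[\sigma(x)]|$ is exactly the number of vertices whose color differs from $\sigma(x)$, so the computed $U$ equals $U(\G,\sigma)$, and the computed $S^{(1)}$, $S^{(2)}$ match the recursively defined sets. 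By Lemma~\ref{lem:S-equals-Sprime}, $S^{(2)}=S^{(1)}=S(\G,\sigma)$, so the algorithm does not exit at line~\ref{line:exit-cond}; and by Theorem~\ref{thm:support-leaves-are-S}, since $(\G,\sigma)$ is connected, $S^{(2)}=S_{\rho}\neq\emptyset$, confirming again that the exit condition fails. Thus the algorithm enters the \textbf{else}-branch and attaches precisely the support leaves $S_{\rho}$ as children of $\rho$.

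Next I would handle the recursive calls. By Lemma~\ref{lem:remove-support-leaves}, the connected components of $(\G-S^{(2)},\sigma\rst)=(\G-S_{\rho},\sigma\rst)$ are exactly the BMGs $\G(T(v),\sigma\rst)$ with $v\in\child_T(\rho)\setminus S_{\rho}$, and by Corollary~\ref{cor:lrt-recurse} together with Lemma~\ref{lem:subgraph}, each such subtree $T(v)$ is the unique LRT for the corresponding component. Each of these components is again a connected 2-BMG (by Lemma~\ref{lem:LRT-subtree-connected} it is connected, and it is a 2-BMG by Proposition~\ref{prop:same-colorset}), and it has strictly fewer vertices than $(\G,\sigma)$ because $S_{\rho}\neq\emptyset$. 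Moreover, by Corollary~\ref{cor:inner-vertex-mult-col} any such $v\prec_T\rho$ with $v$ an inner vertex has $|\sigma(L(T(v)))|>1$, hence $|L(T(v))|>1$, so the guard $|V(\G_v)|=1$ in line~\ref{line:exit-singleton} is never triggered on a genuine component. Therefore the induction hypothesis applies to each component: the recursive call \texttt{Build2ColLRT}$(\G_v,\sigma\rst,v)$ returns $T(v)$. Connecting each returned root $v$ as a child of $\rho$ then reconstructs exactly the children of $\rho$ in $T$ together with their subtrees, so the output tree equals $(T,\sigma)$.

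The main obstacle I expect is the careful bookkeeping in the inductive step: one must verify that \emph{no} spurious exits occur (neither at line~\ref{line:exit-cond} nor at line~\ref{line:exit-singleton}) when the input is genuinely a connected 2-BMG, and that the set of children produced — the support leaves plus the recursively built roots — matches $\child_T(\rho)$ exactly, with no double-counting and no omissions. This hinges entirely on Theorem~\ref{thm:support-leaves-are-S} (identifying $S^{(2)}$ with $S_{\rho}$) and Corollary~\ref{cor:lrt-recurse}/Lemma~\ref{lem:remove-support-leaves} (identifying the components of $\G-S_{\rho}$ with the non-leaf children's subtrees), so once those are invoked the argument is essentially a matter of assembling the pieces. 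I would also briefly note that termination is guaranteed since each recursive call strictly decreases the vertex count.
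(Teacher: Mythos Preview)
Your proof is correct and follows essentially the same approach as the paper's: both invoke Theorem~\ref{thm:support-leaves-are-S} and Lemma~\ref{lem:S-equals-Sprime} to identify $S^{(2)}=S^{(1)}=S_\rho\ne\emptyset$, then use Lemma~\ref{lem:remove-support-leaves} and Corollary~\ref{cor:lrt-recurse} (together with Corollary~\ref{cor:inner-vertex-mult-col} to rule out the singleton exit) to justify recursing on the components of $\G-S_\rho$. Your version is simply more explicit, casting the argument as a formal induction on $|V(\G)|$ with a stated base case rather than appealing directly to the recursive structure as the paper does.
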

\begin{proof}
  Let $(T,\sigma)$ be the (unique) least resolved tree of $(\G,\sigma)$
  with root $\rho$.  The latter is supplied to Alg.~\ref{alg:2-col-BMG} to
  initialize the tree.  By Thm.~\ref{thm:support-leaves-are-S}, Lemma 
  \ref{lem:S-equals-Sprime} and since
  $(\G,\sigma)$ is connected, the set of support leaves
  $S_{\rho}=S^{(2)}=S^{(1)}\ne\emptyset$ for the root $\rho$ is correctly
  identified in the top-level recursion of Alg.~\ref{alg:2-col-BMG}
  (Line~\ref{line:start}-\ref{line:S}) and attached to the root $\rho$
  (Line~\ref{line:child-1}-\ref{line:child-2}). According to
  Cor.~\ref{cor:lrt-recurse}, one can now proceed to recursively construct
  the LRTs for the connected components of $(\G-S_{\rho},\sigma\rst)$,
  which is done in Line~\ref{line:wcc}-\ref{line:end}.  By
  Lemma~\ref{lem:remove-support-leaves}, these connected components
  $(\G_v,\sigma\rst)$ are exactly the BMGs $\G(T(v),\sigma\rst)$ with
  $v\in\child_T(\rho)\setminus\{S_\rho\}$ (Line~\ref{line:rec_step}).  In
  particular, therefore, we have $V(\G_v)=L(T(v))$. Since $v\notin S_\rho$,
  i.e., $v$ is an inner vertex, Cor.~\ref{cor:inner-vertex-mult-col} and
  $v\prec_T\rho$ imply $|\sigma(L(T(v)))|>1$.  Hence, in particular, the
  condition $|V(\G_v)|>1$ (cf.\ Line~\ref{line:exit-singleton}) to proceed
  recursively is satisfied for each connected component.  
\end{proof}

\begin{theorem}
  \label{thm:2BMG-algo-works}
  Given a connected properly 2-colored digraph $(\G,\sigma)$ as input,
  Alg.~\ref{alg:2-col-BMG} returns a tree $T$ if and only if $(\G,\sigma)$
  is a 2-colored BMG.  In particular, $T$ is the unique least resolved tree
  for $(\G,\sigma)$.
\end{theorem}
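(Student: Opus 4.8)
The plan is to prove Theorem~\ref{thm:2BMG-algo-works} by establishing both directions of the equivalence, relying heavily on the machinery already built up in Sections~\ref{sect:LRT} and~\ref{sect:support}. For the ``only if'' direction (if the algorithm returns a tree, then $(\G,\sigma)$ is a 2-BMG), I would argue by induction on $|V(\G)|$ that the tree $T$ returned by a successful run of Alg.~\ref{alg:2-col-BMG} actually explains $(\G,\sigma)$, and moreover is least resolved. The induction hypothesis applies to each connected component $\G_v$ of $\G - S^{(2)}$ processed in the recursive calls (Line~\ref{line:rec_step}): since the algorithm did not exit \textbf{false}, each such component has $|V(\G_v)|>1$ and the recursion returned a tree $T_v$ explaining $(\G_v,\sigma\rst)$ that is least resolved for it. One then needs to verify that gluing the support leaves $S^{(2)}$ and the roots of the $T_v$ as children of $\rho$ produces a tree $T$ with $\G(T,\sigma)=(\G,\sigma)$: the fact that every $x\in S^{(2)}\subseteq U$ has all differently-colored vertices as out-neighbors makes these leaves best-matched to everything of the other color, which is exactly what attaching them directly below $\rho$ yields; and the absence of arcs between distinct components of $\G-S^{(2)}$ (each of which contains both colors, so forms a ``full-color subtree'' in the sense of Obs.~\ref{obs:full-color-subtree}) ensures the best-match relation across components is empty, matching the tree. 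Least-resolvedness of $T$ then follows from checking via Lemma~\ref{lem:redundant_edges} that no edge $\rho v$ is redundant, using that each $T_v$ is least resolved by induction and that the exit conditions of the algorithm (notably $S^{(1)}=\emptyset$ or $S^{(2)}\ne S^{(1)}$ causing failure) rule out the configurations that would make $\rho v$ contractible.

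For the ``if'' direction, assume $(\G,\sigma)$ is a connected 2-BMG; I would show the algorithm does not exit \textbf{false} and returns its (unique) LRT. This is essentially the content of Lemma~\ref{lem:build-2col-LRT}, so the bulk of the argument can simply cite it. What remains is to confirm the algorithm terminates and never hits an erroneous \textbf{exit false}: by Thm.~\ref{thm:support-leaves-are-S} and Lemma~\ref{lem:S-equals-Sprime}, since $(\G,\sigma)$ is connected we get $S^{(2)}=S^{(1)}=S_\rho\ne\emptyset$, so the exit condition on Line~\ref{line:exit-cond} is not triggered; and by Cor.~\ref{cor:inner-vertex-mult-col} together with Lemma~\ref{lem:remove-support-leaves}, every connected component $\G_v$ of $\G-S_\rho$ equals $\G(T(v),\sigma\rst)$ for an inner child $v$ of $\rho$, hence has $|\sigma(L(T(v)))|>1$ and thus $|V(\G_v)|>1$, so Line~\ref{line:exit-singleton} never fires. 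Each such $\G_v$ is again a connected 2-BMG by Lemma~\ref{lem:subgraph} and Lemma~\ref{lem:LRT-subtree-connected}, so the recursion is well-founded and by induction (on $|V(\G)|$, the components being strictly smaller since $S_\rho\ne\emptyset$) returns the LRT of $\G_v$; assembling these as in Cor.~\ref{cor:lrt-recurse} reconstructs $(T,\sigma)$.

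I expect the main obstacle to be the ``only if'' direction, specifically verifying that the reconstructed tree $T$ is genuinely least resolved --- not merely that it explains $(\G,\sigma)$. The delicate point is that a priori, a successful algorithm run only guarantees local conditions ($S^{(1)}=S^{(2)}\ne\emptyset$ at each level, all components nontrivial), and one must translate these into the non-redundancy criterion of Lemma~\ref{lem:redundant_edges} for the inner edges incident to each recursion's root. Showing that $S^{(2)}=S^{(1)}$ and the structure of the umbrella/support sets precisely preclude a redundant edge $\rho v$ requires exhibiting, for each inner child $v$, an arc $(a,b)$ with $\lca_T(a,b)=v$ and $\sigma(b)\in\sigma(L(T)\setminus L(T(v)))$ --- essentially running the Case~(ii) analysis of the proof of Thm.~\ref{thm:support-leaves-are-S} in reverse. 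A clean way around this may be to observe that if $T$ explains $(\G,\sigma)$, then the (unique) LRT $T^*$ of $(\G,\sigma)$ is obtained from $T$ by contracting redundant edges, and then argue that $T$ and $T^*$ must coincide because both are produced by the same deterministic recursion: the support-leaf/component decomposition of $(\G,\sigma)$ is canonical and agrees at the top level with the LRT decomposition of Cor.~\ref{cor:lrt-recurse} whenever $(\G,\sigma)$ is a 2-BMG, and then one descends into the components by induction. This reduces the whole ``only if'' direction to the observation that the algorithm's decomposition of a 2-BMG is forced, which dovetails neatly with the ``if'' direction and avoids a separate redundancy computation.
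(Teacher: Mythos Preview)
Your proposal is correct and matches the paper's approach; in particular, the paper adopts precisely what you call the ``clean way around'': it verifies directly (via an arc-by-arc analysis equivalent to your induction) that the returned tree $T$ satisfies $\G(T,\sigma)=(\G,\sigma)$, concludes that $(\G,\sigma)$ is a 2-BMG, and then simply invokes Lemma~\ref{lem:build-2col-LRT} to get that $T$ is the LRT---no separate non-redundancy argument is carried out. One small point to make explicit in your arc verification is that there are no arcs from $V(\G)\setminus S^{(2)}$ \emph{into} $S^{(2)}$ in $(\G,\sigma)$, which the paper extracts from the successful check $S^{(1)}=S^{(2)}$ (so $N^{-}(x)\subseteq S^{(2)}$ for every $x\in S^{(2)}$).
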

\begin{proof}
  By Lemma~\ref{lem:build-2col-LRT}, Alg.~\ref{alg:2-col-BMG} returns the
  unique least resolved tree $T$ if $(\G,\sigma)$ is a connected 2-colored
  BMG.  To prove the converse, suppose that Alg.~\ref{alg:2-col-BMG}
  returns a tree $T$ given the connected properly 2-colored digraph
  $(\G,\sigma)$ as input.  We will show that $(\G,\sigma)=\G(T,\sigma)$,
  and thus $(\G,\sigma)$ is a BMG.

  It is easy to see that $L(T)=V(\G)$ must hold since, in each step of
  Alg.~\ref{alg:2-col-BMG} every vertex is either attached to some inner
  vertex or passed down to a deeper-level recursion as part of some
  connected component.  Therefore, every vertex of $\G$ eventually appears
  in the output. Thus $\sigma(L(T))=\sigma(V(\G))$ and
  $|\sigma(L(T))|=|\sigma(V(\G))|=2$.  It remains to show
  $E(\G)=E(\G(T,\sigma))$.

  Note first that neither $(\G,\sigma)$ nor $\G(T,\sigma)$ contain arcs
  between vertices of the same color. Moreover, since
  Alg.~\ref{alg:2-col-BMG} eventually returns a tree, we have
  $S^{(1)}=S^{(2)}\ne\emptyset$ in every recursion step. Throughout the
  remainder of the proof, we will write $S^{(1)}_i$ and $S^{(2)}_i$ for the
  sets $S^{(1)}$ and $S^{(2)}$ of the $i^{th}$ recursion step.  Likewise,
  in every step, each connected component $(\G_v,\sigma\rst)$ computed in
  Line~\ref{line:wcc} must contain at least two vertices (cf.\
  Line~\ref{line:exit-singleton}), and thus $|\sigma(V(\G_v))|=2$ because
  $(\G,\sigma)$ is properly 2-colored.

  First, let $S$ be the support set of $\G(T,\sigma)$ and $x\in S$ be
  arbitrary.  Note that the support set is computed in the first iteration
  step of the algorithm as $S=S^{(2)}_1$, hence
  $S=S^{(2)}_1\ne\emptyset$. By construction of $T$, $x$ is attached as a
  leaf to $\rho$, i.e.\ $\lca_T(x,y)=\rho$. Consequently, $(x,y)$ is an arc
  in $\G(T,\sigma)$ for all $y\in V(\G)$ with $\sigma(y)\neq\sigma(x)$. By
  construction of $S$ in Alg.~\ref{alg:2-col-BMG}, we have
  $x\in S\subseteq U$, i.e.\ $x$ is an umbrella vertex in $(\G,\sigma)$ and
  has out-arcs to every vertex $y\in V(\G)$ with $\sigma(y)\ne\sigma(x)$.
  Hence, all arcs of the form $(x,y)$ with $x\in S$ and
  $\sigma(x)\neq\sigma(y)$ exist both in $(\G,\sigma)$ and in
  $\G(T,\sigma)$. The latter property is in particular satisfied for all
  vertices in $S$ and hence, all arcs between differently colored elements
  in $S$ exist both in $(\G,\sigma)$ and in $\G(T,\sigma)$.  Now consider
  an arbitrary vertex $y\in V(\G)\setminus S$.  Clearly, 
  all in-neighbors in $(\G,\sigma)$ of the elements in
  $S=S^{(2)}_1$ must be contained in $S$,
  as a consequence of the condition $S^{(1)}_1=S^{(2)}_1$ (cf.\
  Line~\ref{line:exit-cond}) and the construction of $S^{(1)}_1$ and
  $S^{(2)}_1$.  Hence, $y\notin S$ and $x\in S$ implies that $(y,x)$ is not
  an arc in $(\G,\sigma)$.  Moreover, $y\notin S$ also implies that $y$ is
  part of some connected component $(\G_v,\sigma\rst)$ of
  $(\G-S,\sigma\rst)$.  Therefore, and because Alg.~\ref{alg:2-col-BMG}
  returns $T$, we must have $y\in V(\G_v)= L(T(v))$ for some inner vertex
  $v\in\child_T(\rho)$.  As argued above, $(\G_v,\sigma\rst)$ and thus also
  the subtree $T(v)$ contain both colors. Together with
  Obs.~\ref{obs:full-color-subtree} and $x\notin L(T(v))$, this implies
  that $\G(T,\sigma)$ does not contain the arc $(y,x)$.  By the same
  arguments, there is no arc $(y,x')$ in $\G(T,\sigma)$ such that the
  vertex $x'$ is contained in a different connected component
  $(\G_{v'},\sigma\rst)\ne (\G_v,\sigma\rst)$ of $(\G-S,\sigma\rst)$ than
  $y$. Since $x\in S$ and $y\notin S$ were chosen arbitrarily, we conclude
  that (i) any arc incident to some vertex in $S$ exists in $(\G,\sigma)$
  if and only if it exists in $\G(T,\sigma)$, and (ii) $\G(T,\sigma)$
  contains no arcs between distinct connected components of
  $(\G-S,\sigma\rst)$.  Hence, it remains to consider the arcs within a
  connected component $(\G_v,\sigma\rst)$ of $(\G-S,\sigma\rst)$.

  Alg.~\ref{alg:2-col-BMG} recurses on each such connected component
  $(\G_v,\sigma\rst)$ using a newly created vertex $v\in\child_T(\rho)$ to
  initialize the tree $T(v)$.  By Lemma~\ref{lem:subgraph}, it clearly
  holds that, for any $x,y\in L(T(v))=V(\G_v)$, $(x,y)$ is an arc in
  $\G(T,\sigma)$ if and only it is an arc in $\G(T(v),\sigma)$.  Thus, it
  suffices to consider only the subtree $T(v)$.  Now, we can apply the same
  arguments as in the previous recursion step to conclude that all arcs
  incident to the support set $S^{(2)}_2$ constructed in the current
  recursion step are the same in $(\G,\sigma)$ and $\G(T,\sigma)$ and that
  neither $(\G,\sigma)$ nor $\G(T,\sigma)$ contain arcs between distinct
  connected components of $(\G_v-S^{(2)}_2,\sigma\rst)$. Hence, it suffices
  to consider the connected components of
  $(\G_v-S^{(2)}_2,\sigma\rst)$. Repeated application of this argumentation
  results in a chain of connected components that are contained in each
  other. Since Alg.~\ref{alg:2-col-BMG} finally returns a tree, this chain
  is finite, say with a last element $(\G_w-S^{(2)}_k,\sigma\rst)$, and
  thus $S^{(2)}_k=V(\G_w)$. In particular, therefore, every vertex in
  $V(\G)$ is contained in the support set of some recursion step.

  In summary, we have shown that $\G(T,\sigma)=(\G,\sigma)$.  Hence,
  $(\G,\sigma)$ is a connected 2-BMG and, by
  Lemma~\ref{lem:build-2col-LRT}, $T$ is the unique least resolved tree of
  $(\G,\sigma)$.
\end{proof}

The construction in Lines \ref{line:start}-\ref{line:S} in
Alg.~\ref{alg:2-col-BMG} naturally produces two cases, $U=S^{(1)}=S^{(2)}$
and $S^{(2)}\subseteq S^{(1)}\subsetneq U$.  The following result shows
that the latter case implies that the corresponding interior node in the
LRT has only a single non-leaf descendant:

\begin{lemma}
  Let $(\G,\sigma)$ be a 2-BMG and $S_\rho$ the support leaves of the root
  $\rho$ of its LRT $(T,\sigma)$.  If
  $W\coloneqq U(\G,\sigma)\setminus S_\rho \ne \emptyset$, then the
  following statements are true:
  \begin{enumerate}
  \item $S_\rho\ne\emptyset$, $\G$ is connected, and $\G-S_\rho$ is
    connected.
  \item All vertices in $U(\G,\sigma)=S_\rho \cupdot W$ have the same color,
  \item The set of support leaves $S_v$ of the unique inner vertex  
    child $v$ of $\rho$ contains vertices of both colors, 
    and 
  \item $W\subsetneq S_v$. 
  \end{enumerate}
  \label{lem:W}
\end{lemma}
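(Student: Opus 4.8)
The plan is to establish the four claims in order, leaning on Theorem~\ref{thm:support-leaves-are-S} (so $S_\rho=S$), the redundancy criterion Lemma~\ref{lem:redundant_edges}, and the observation that an umbrella vertex has a ``common lca''. Throughout write $\G=\G(T,\sigma)$, and for $x\in U(\G,\sigma)$ set $m_x\coloneqq\lca_T(x,y')$, which by the best-match definition is the \emph{same} vertex for every $y'$ with $\sigma(y')\ne\sigma(x)$; note $x\preceq_T m_x$ and $y'\preceq_T m_x$ for all such $y'$, so $|L(T(m_x))|\ge 2$ and $m_x$ is an inner vertex. Eventually $c$ will denote the common colour of $U(\G,\sigma)$.

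For Part~1: since $W\ne\emptyset$ we have $U(\G,\sigma)\ne\emptyset$; if $\G$ were disconnected, each component would be a $2$-BMG (Prop.~\ref{prop:same-colorset}), hence contain both colours, and since there are no arcs between components no vertex could reach every differently coloured vertex, forcing $U(\G,\sigma)=\emptyset$ --- a contradiction. So $\G$ is connected and $S_\rho\ne\emptyset$ by Cor.~\ref{cor:S_rho_empty}. Picking $x\in W$, $x$ is a leaf of $T$ not in $S_\rho$, so it lies below an inner child of $\rho$; and if $\rho$ had two inner children $v_1\succeq_T x$ and $v_2\ne v_1$, then $T(v_1)$ carries both colours (Cor.~\ref{cor:inner-vertex-mult-col}), so all out-neighbours of $x$ lie in $L(T(v_1))$ (Obs.~\ref{obs:full-color-subtree}), while $T(v_2)$ contains a vertex differently coloured from $x$ that $x$ cannot reach, contradicting $x\in U(\G,\sigma)$. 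Hence $\rho$ has a unique inner child $v$, and by Lemma~\ref{lem:remove-support-leaves} together with Lemma~\ref{lem:LRT-subtree-connected}, $\G-S_\rho=\G(T(v),\sigma\rst)$ is connected.

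For Part~2 I would show $\sigma(S_\rho)=\{\sigma(x)\}$ for each $x\in W$; since $S_\rho\ne\emptyset$ this forces $W$, and hence $U(\G,\sigma)=S_\rho\cupdot W$ (using $S_\rho=S\subseteq U(\G,\sigma)$), to be monochromatic. Indeed $m_x\ne\rho$, for otherwise the child of $\rho$ above $x$ would meet no vertex of colour $\ne\sigma(x)$, be monochromatic, hence a leaf (Cor.~\ref{cor:inner-vertex-mult-col}), making $x\in S_\rho$. And every differently coloured $y'$ satisfies $y'\preceq_T m_x\prec_T\rho$, so $y'$ cannot be a leaf-child of $\rho$ (that would force $m_x=y'$, impossible since $x\preceq_T m_x$ and $x\ne y'$). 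As $m_x\prec_T\rho$ with $m_x$ inner, the child of $\rho$ above $m_x$ must be inner, hence equal to $v$; so $m_x\preceq_T v$.

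The crux, and the step I expect to be the main obstacle, is the claim $m_x=v$ for every $x\in W$; granting it, Parts~3 and~4 follow quickly. Suppose instead $m_x\prec_T v$ and let $d$ be the child of $v$ with $x\preceq_T d$; since every differently coloured vertex is $\preceq_T m_x\preceq_T d$, the subtree $T(d)$ carries both colours. As $(T,\sigma)$ is least resolved the inner edge $\rho v$ is not redundant, so Lemma~\ref{lem:redundant_edges} yields an arc $(a,b)\in E$ with $\lca_T(a,b)=v$ and $\sigma(b)\in\sigma(S_\rho)=\{c\}$; then $\sigma(a)\ne c$ forces $a\preceq_T m_x\preceq_T d$, so $\lca_T(a,x)\preceq_T d\prec_T v$, contradicting that the $c$-coloured vertex $b$ is a best match of $a$. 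Hence $m_x=v$, which means the child of $v$ above $x$ meets no differently coloured vertex, is therefore monochromatic, hence a leaf (Cor.~\ref{cor:inner-vertex-mult-col}); so $x$ is a leaf-child of $v$, i.e.\ $W\subseteq S_v$ and $\sigma(W)=\{c\}$. Finally, applying Lemma~\ref{lem:redundant_edges} to $\rho v$ once more gives an arc $(a,b)$ with $\lca_T(a,b)=v$ and $\sigma(b)=c$, hence $\sigma(a)\ne c$; if $a$ were not a leaf-child of $v$ it would lie in an inner child-subtree of $v$, which carries both colours and hence contains a $c$-coloured $b'$ with $\lca_T(a,b')\prec_T v=\lca_T(a,b)$, again contradicting the best-match property of $b$. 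So $a\in S_v$ has colour $\ne c$: $S_v$ contains vertices of both colours (Part~3), and $a\in S_v\setminus W$, so $W\subsetneq S_v$ (Part~4). Everything besides the identity $m_x=v$ is routine bookkeeping around Theorem~\ref{thm:support-leaves-are-S}, Cor.~\ref{cor:inner-vertex-mult-col}, and the best-match definition; it is precisely least resolvedness of $T$ (non-redundancy of $\rho v$) that rules out $m_x\prec_T v$.
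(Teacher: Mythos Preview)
Your proof is correct and uses the same core ingredients as the paper: the umbrella property forces a unique inner child $v$ of $\rho$, and non-redundancy of the edge $\rho v$ via Lemma~\ref{lem:redundant_edges} does the heavy lifting. The organization differs, however. You introduce the common last-common-ancestor $m_x$ and make the identity $m_x=v$ the pivot: once this is established (via non-redundancy of $\rho v$), $W\subseteq S_v$ follows immediately, and a second appeal to Lemma~\ref{lem:redundant_edges} produces an $a\in S_v$ of colour $\ne c$, giving Statements~3 and~4 together. The paper instead proves Statement~3 first by a two-case contradiction (ruling out that $S_v$ is monochromatic of either colour separately, with the ``same colour'' case invoking redundancy of $\rho v$), and only then argues $W\subseteq S_v$. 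Your route avoids the case split and is a bit more streamlined; the paper's is more modular, treating the four statements independently. Both arguments rest on exactly the same non-redundancy step, so the difference is one of packaging rather than substance.
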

\begin{proof}
  First recall that, by Thm.~\ref{thm:support-leaves-are-S} and the
  definition of the support set $S$ of $(\G,\sigma)$, we have
  $S_\rho=S\subseteq U(\G,\sigma)$, and thus
  $U(\G,\sigma)=S_\rho \cupdot W$.  Moreover, by
  Lemma~\ref{lem:remove-support-leaves}, the connected components of
  $(\G-S_{\rho},\sigma\rst)$ are exactly the BMGs $\G(T(v),\sigma\rst)$
  with $v\in\child(\rho)\setminus S_{\rho}$.  The vertices
  $v\in\child(\rho)\setminus S_{\rho}$ are all inner vertices of $T$ since,
  by definition, the support leaves $S_{\rho}$ are exactly the children of
  $\rho$ that are leaves.  Together with the contraposition of
  Lemma~\ref{lem:single-color-is-leaf} this implies that $T(v)$ contains
  both colors.
  
  \smallskip
  \noindent \textit{Statement 1:} Let $x\in W$, which exists due to the
  assumption $W\coloneqq U(\G,\sigma)\setminus S_\rho\ne\emptyset$.  Since
  $x\notin S_{\rho}$, it must be part of some connected component of
  $(\G-S_{\rho},\sigma\rst)$, say $\G(T(v),\sigma\rst)$ for some
  $v\in\child_T(\rho)\setminus S_{\rho}$.
  Now assume, for contradiction, that $\G-S_\rho$ consists of more than one
  connected component.  By Lemmas~\ref{lem:remove-support-leaves} and
  \ref{lem:single-color-is-leaf}, there is a vertex
  $v'\in\child_T(\rho)\setminus S_{\rho}$ such that $v\ne v'$ and both
    subtrees $T(v)$ and $T(v')$ contain both colors.  Hence, there are
  distinct $y\in L(T(v))$ and $y'\in L(T(v'))$ with
  $\sigma(y)=\sigma(y')\ne\sigma(x)$.  Together with $x\in L(T(v))$, we
  therefore have $\lca_T(x,y)\preceq_{T}v\prec_{T}\rho=\lca_{T}(x,y')$,
  which implies $(x,y')\notin E(\G)$.  However,
  $x\in W\subseteq U(\G,\sigma)$ and $\sigma(y')\ne\sigma(x)$ imply
  $(x,y')\in E(\G)$; a contradiction.  Hence, we conclude that $\G-S_\rho$
  has exactly one connected component, and thus $\rho$ has a single inner
  vertex child $v$.  Since $T$ is phylogenetic, the latter implies that
  $\rho$ must be incident to at least one leaf, i.e.\ $S_\rho\ne\emptyset$.
  Together with Thm.~\ref{thm:support-leaves-are-S} this in turn implies
  that $\G$ is connected.  In summary, Statement~1 is true.

  \smallskip
  \noindent \textit{Statement 2:} Let $x\in W$ as in the proof of Statement
  1. By arguments analogous to those used for Statement 1, we conclude that
  $\sigma(x)=\sigma(y)$ for every $y\in S_{\rho}$, since otherwise we would
  obtain $(x,y)\notin E(\G)$, and thus a contradiction to
  $x\in U(\G,\sigma)$.  Since $x\in W$ was chosen arbitrarily and
  $S_{\rho}$ is non-empty, we immediately obtain that all vertices in
  $U(\G,\sigma)=S_\rho \cupdot W$ have the same color, i.e., Statement~2 is
  true.
  
  \smallskip
  \noindent \textit{Statement 3:} Now consider the single inner vertex
  child $v$ of $\rho$, and its set of support leaves $S_v$, which must be
  non-empty by Lemma~\ref{lem:support-leaves}.  Note that $W$ must be
  entirely contained in $L(T(v))$ and recall that all vertices in
  $S_\rho \cupdot W$ are of the same color (cf.\ Statement 2).  First
  suppose, for contradiction, that $S_v$ only contains vertices of the
  \emph{opposite} color as the vertices in $S_\rho \cupdot W$. This
  immediately implies $S_v \cap W =\emptyset$, thus every vertex $x\in W$
  must be located in a subtree $T(w)$ of some inner vertex child $w$ of
  $v$. Again by contraposition of Lemma~\ref{lem:single-color-is-leaf},
  every such $T(w)$ contains both colors.  However, this contradicts
  $(x,y)\in E(\G)$ for every $y\in S_v$, which must hold as a consequence
  of $x\in W\subset U(\G,\sigma)$ and $\sigma(y)\ne\sigma(x)$.  Next
  suppose, for contradiction, that $S_v$ only contains vertices of the
  \emph{same} color as the vertices in $S_\rho \cupdot W$.  In this case,
  we obtain that the edge $\rho v$ is redundant w.r.t.\ $(\G,\sigma)$.  To
  see this, consider an arc $(x,y)\in E(\G)$ such that $\lca_{T}(x,y)=v$.
  Clearly, $x$ must be directly incident to $v$, since otherwise the
  subtree below $v$ to which $x$ belongs would contain both colors, and
  thus contradict $(x,y)\in E(\G)$. In other words, every such vertex $x$
  is a support leaf of $v$, thus $\sigma(x)=\sigma(S_v)=\sigma(S_\rho)$ and
  $\sigma(y)\neq\sigma(S_\rho)$. In particular, there exists no arc
  $(x,y)\in E(\G)$ such that $\lca_T(x,y)=v$ and
  $\sigma(y)\in \sigma(L(T)\setminus L(T(v)))=\sigma(S_\rho)$ and
  therefore, by Lemma~\ref{lem:redundant_edges}, the inner edge $\rho v$ is
  redundant. However, this contradicts the fact that $T$ is least resolved.
  In summary, only the case in which $S_v\ne\emptyset$ contains vertices of
  both colors is possible, and thus Statement~3 is true.
  
  \smallskip
  \noindent \textit{Statement 4:} First, recall from the proof of Statement
  3 that $W\subseteq L(T(v))$ for the single inner vertex child $v$ of
  $\rho$. In order to see that $W\subseteq S_v$, assume for contradiction
  that this is not the case. By similar arguments as used for showing
  Statement 3, this implies that some $x\in W$ lies in a 2-colored subtree
  $T(w)$ for some $w\in\child_T(v)\setminus S_v$. Together with the above
  established fact that $S_v$ contains both colors, this contradicts
  $x\in U(\G,\sigma)$.  Finally, $W\neq S_v$ is a consequence of the fact
  that $S_v$ contains both colors (Statement 3) but
  $W\subseteq S_\rho \cupdot W$ contains only one color (Statement 2).
\end{proof}

\begin{figure}[t]
  \begin{center}
    \includegraphics[width=0.85\textwidth]{./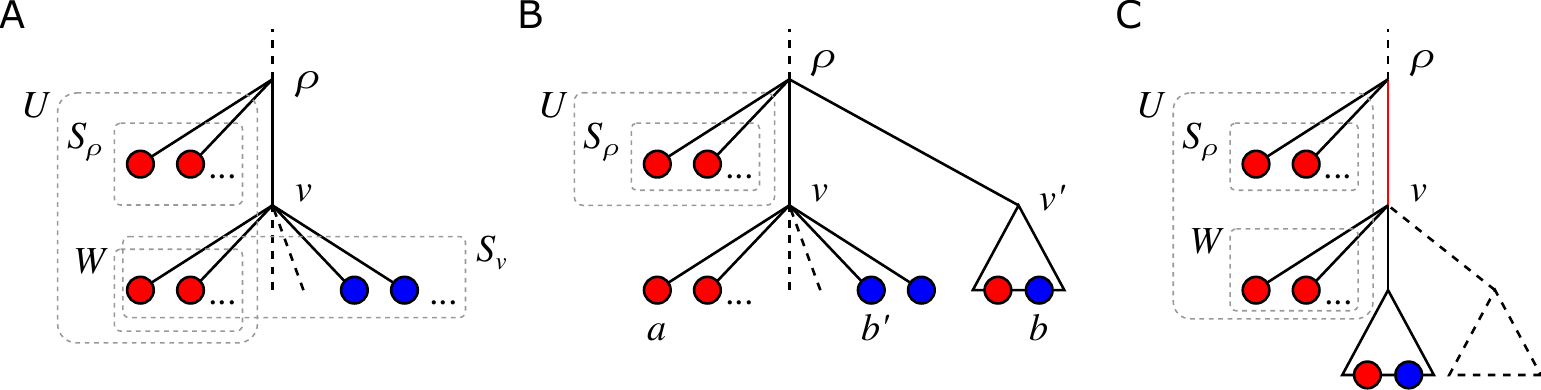}
  \end{center}
  \caption{Illustration of Lemma~\ref{lem:W}. (A) The (local) situation if
    $W=U\setminus S_{\rho}\ne\emptyset$ as implied by Lemma~\ref{lem:W}. In
    particular, $\rho$ only as a single inner vertex child $v$, all
    vertices in $U=S_{\rho}\cupdot W$ have the same color, $S_v$ contains
    vertices of both colors, and $W\subsetneq S_v$.  (B) There cannot be a
    second inner vertex child $v'$, since then none of the vertices except
    those in $S_{\rho}$ can be umbrella vertices, e.g.\ $(a,b)$ is not an
    arc in the graph explained by the tree in (B). Hence, this situation is
    not possible for $W\ne\emptyset$. (C) If $S_v$ does not contain
    vertices of both colors, then the edge $\rho v$ is redundant in the
    tree, contradicting that $(T,\sigma)$ in Lemma~\ref{lem:W} is the LRT.}
  \label{fig:U-and-S2-differ}
\end{figure}

We now use this result to consider the performance of
Alg.~\ref{alg:2-col-BMG}.
\begin{lemma}
  Alg.~\ref{alg:2-col-BMG} can be implemented to run in
  $O(|E|\log^2|V|)$ time for a connected input graph.
  \label{lem:connG}
\end{lemma}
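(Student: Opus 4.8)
The plan is to bound the cost of a single recursion step and then amortize over the recursion tree. In one call on a connected 2-colored digraph $(\G_v,\sigma\rst)$ with vertex set $L_v$ and arc set $E_v$, the dominant work is: (a) computing the umbrella vertices $U$; (b) computing $S^{(1)}$ and $S^{(2)}$; (c) checking the exit conditions; (d) splitting $\G_v - S^{(2)}$ into connected components. Step (a) needs, for each $x\in L_v$, the out-degree and the number of vertices of the opposite color, both of which are read off in $O(|L_v|+|E_v|)$ total time if we precompute the color class sizes. Step (b) is two passes over the in-neighborhoods, i.e.\ over $E_v$, each a linear scan testing membership in $U$ resp.\ $S^{(1)}$; so $O(|L_v|+|E_v|)$. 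Step (c) is $O(|L_v|)$. Step (d) is a standard connected-components computation on the underlying undirected graph of $\G_v - S^{(2)}$, again $O(|L_v|+|E_v|)$. Hence one recursion step runs in time linear in the size of its input graph.

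The subtlety is that the recursion tree can have depth $\Omega(|V|)$ — precisely the caterpillar-like situation analyzed in Lemma~\ref{lem:W}, where $U\supsetneq S^{(2)}$ and $\rho$ has a single inner-vertex child $v$ whose subtree still contains almost all vertices. Naively, a chain of $k$ such nested components each of size $\Theta(n)$ costs $\Theta(kn)=\Theta(n^2)$. The key observation that rescues the running time is Lemma~\ref{lem:W}(4): whenever $W=U\setminus S^{(2)}\ne\emptyset$, \emph{all} of $W$, together with $S^{(2)}$, lies in $S_v$ — i.e.\ the support leaves of the next recursion level. Since $S_v$ is the set $S^{(2)}$ computed in the child call, this means the child call removes $|S^{(2)}|+|W|=|U|$ vertices. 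More importantly, it tells us the child's support set can be computed without recomputing $U$ from scratch: in the child call on $\G_v - S^{(2)}$, the set $U(\G_v - S^{(2)},\sigma\rst)$ contains $W$ (by Statement 4, $W\subseteq S_v\subseteq U$ of the child), and the newly appearing umbrella vertices of the opposite color are exactly those whose opposite-color out-neighborhood, already full except for the arcs into the removed set $S^{(2)}$, has now become full. So the incremental update of $U$ when passing to the child costs only $O(\sum_{x\in S^{(2)}}\indegree(x) + |W|)$, i.e.\ is paid for by arcs incident to the \emph{removed} vertices, which never recur.

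The plan is therefore to charge the work of each recursion step to the vertices and arcs that are \emph{eliminated} in that step — either placed into a support set and attached as leaves, or incident to such a vertex. A vertex enters a support set exactly once over the whole run, and when it does, its incident arcs are scanned $O(1)$ amortized times; together with the $O(|L_v|+|E_v|)$ connected-components split, which is charged to the partition of the edge set across sibling components, this gives a total of $O(|V|+|E|)$ over all recursion steps \emph{for the combinatorial work}. The extra $\log^2|V|$ factor in the claimed bound comes from the data structures needed to maintain the dynamically shrinking graph under vertex deletions while answering the adjacency and degree queries above — concretely, a balanced-tree (or similar) representation of each vertex's adjacency list supporting deletion and size queries in $O(\log|V|)$, plus the overhead of repeatedly restricting to subgraphs, each layer contributing a logarithmic factor. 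The main obstacle is precisely this bookkeeping: one must show that the per-step input size, summed with multiplicity over the recursion tree, is $O((|V|+|E|)\log|V|)$ rather than $O((|V|+|E|)\cdot\text{depth})$, and Lemma~\ref{lem:W} is exactly the structural fact that prevents a vertex from being "carried along" through a long chain without being eliminated — a vertex can be merely copied into a deeper recursion at most $O(\log|V|)$ times before the component containing it shrinks by a constant factor, because in the bad ($W\ne\emptyset$) case the single surviving component is strictly smaller and in the good case the number of components forces a split. Assembling these pieces yields the $O(|E|\log^2|V|)$ bound for connected input.
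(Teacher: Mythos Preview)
Your proposal correctly identifies the central obstacle---the caterpillar case where the recursion has depth $\Theta(|V|)$---but the resolution you offer does not work. The claim that ``a vertex can be merely copied into a deeper recursion at most $O(\log|V|)$ times before the component containing it shrinks by a constant factor'' is false. Lemma~\ref{lem:W} does \emph{not} guarantee geometric shrinking: in the case $W\ne\emptyset$ it only says the unique surviving component loses the set $S_\rho$, which may have size~$1$; in the case $W=\emptyset$ the surviving components need not be balanced at all. A caterpillar LRT in which each inner vertex has a single support leaf gives recursion depth $\Theta(|V|)$, and the leaves near the bottom are carried through $\Theta(|V|)$ levels. So the per-step $O(|L_v|+|E_v|)$ connected-components scan cannot be amortized the way you suggest, and your charging argument for step~(d) breaks down.

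The paper's proof does not try to bound how many times a vertex is ``carried along.'' Instead it avoids recomputing connected components from scratch by maintaining the underlying undirected graph in the decremental dynamic connectivity structure of Holm, de~Lichtenberg and Thorup, which supports edge deletions in $O(\log^2|V|)$ amortized time and constant-time component-size queries. Every arc is deleted at most once over the whole run, giving the $O(|E|\log^2|V|)$ bound; this is precisely where the $\log^2|V|$ factor comes from, not from balanced adjacency lists or from a logarithmic recursion-depth bound. To find the umbrella set at each step without a full scan, the paper exploits the fact that out-degrees of surviving vertices never change (removed support vertices have all their in-neighbors in $U$), so it suffices to maintain, per component and per color, a balanced BST indexed by out-degree; these are updated via the smaller-half technique when a component splits, costing $O(|V|\log^2|V|)$ in total. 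Lemma~\ref{lem:W} is used, but only to argue that each vertex appears in at most two umbrella sets $U$, so the total work for constructing all the sets $S^{(1)},S^{(2)}$ is $O(|E|)$.
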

\begin{proof}
  Since $\G$ is connected by assumption, we have $|V|\in O(|E|)$.  Starting
  from $(\G,\sigma)$, the list of out-degrees can be constructed in
  $O(|E|)$. The initial umbrella set $U$ is then obtained by listing
  the vertices with maximal out-degree in the color class. The initial
  set $S^{(1)}$ is constructed by checking, for each $u\in U$, the
  in-neighbors of $u$ for membership in $U$ in $O(|V|+|E|)$ operations.
  Then $S^{(2)}$ is obtained in the same manner from $S^{(1)}$,
  requiring $O(|V|+|E|)$ operations. The initial umbrella set $U$ and the
  sets $S^{(1)}$ and $S^{(2)}$ thus can be constructed in linear time.
  In each recursive call of \texttt{Build2ColLRT}, at least one leaf is
  split off, hence the recursion depth is $|V|-1$ in the worst case. Since
  the support vertices removed in each step have all of their in-neighbors
  in $U$, their removal does not affect the out-neighborhood for any
  $x\in V(\G-U)\subseteq V(\G-S^{(2)})$, and hence, $\outdegree(x)$ does
  not require updates. The in-neighborhoods $N^-(x)$ can be updated by
  removing the arcs between $\G-S^{(2)}$ and $S^{(2)}$ as a consequence of
  Lemma~\ref{lem:remove-support-leaves} and
  Thm.~\ref{thm:support-leaves-are-S}. Since every arc appears exactly once
  in the removal, the total effort for these updates is $O(|E|)$.

  We continue by showing that every vertex needs to be considered as an
  umbrella vertex at most twice, and that the total effort of constructing
  all sets $S^{(1)}$ and $S^{(2)}$ is $O(|E|)$, given that the umbrella
  vertices $U$ can be obtained efficiently, which we discuss afterwards. To
  this end, we distinguish, for each of the single recursion steps, two
  cases: $S^{(1)}=U$ and $S^{(1)} \subsetneq U$.  First if $S^{(1)}=U$, and
  thus also $S^{(2)}=S^{(1)}=U$, we consider each in-arc of $x\in U$. Since
  these vertices and their corresponding arcs are removed when constructing
  $\G-S^{(2)}$, they are not considered again in a deeper recursion
  step. In the second case, we have $S^{(1)} \subsetneq U$, which together
  with $S^{(2)}=S^{(1)}$ implies
  $W\coloneqq U\setminus S^{(2)} \ne \emptyset$, and only the vertices in
  $U\setminus W$ are removed. However, Lemma~\ref{lem:W} guarantees that,
  for a 2-BMG as input graph, the vertices in $W$ appear as support leaves
  in the next step and thus appear in the update of $U$, $S^{(1)}$, and
  $S^{(2)}$ no more than a second time.  In order to use the properties in
  Lemma~\ref{lem:W} for the general case (i.e.\ $(\G,\sigma)$ is not
  necessarily a BMG), we can, whenever $W\ne\emptyset$, (i) check that
  $\G-S^{(2)}$ only has a single connected component $\G_v$, and (ii) pass
  down the set $W$ to the recursion step on $\G_v$ in which the condition
  $W\subsetneq S^{(2)}$ is checked. If any of these checks fails, we can
  exit false. This way, we ensure that every vertex appears at most two
  times as an umbrella vertex in the general case.  To construct $S^{(1)}$
  from $U$, we have to scan the in-neighborhood $N^-(x)$ of each vertex
  $x\in U$ and check whether $N^-(x)\subset U$. We repeat this step to
  construct $S^{(2)}$ from $S^{(1)}$. Membership in $U$ and $S^{(1)}$,
  resp., can be checked in constant time (e.g.\ by marking the vertices in
  the current set $U$). Since we have to consider each vertex, and hence,
  each in-neighborhood at most twice, all sets $S^{(1)}$ and $S^{(2)}$ can
  be obtained with a total effort of $O(|E|)$.
  
  It remains to show that the input graph can be decomposed efficiently in
  such a way that the connectivity information is maintained and the
  candidates for umbrella vertices in each component are updated. The
  connected components $\G_v$ can be obtained by using the dynamic data
  structure described in \cite{Holm:01}, often called HDT data structure.
  It maintains a maximal spanning forest representing the underlying
  undirected graph with edge set
  $\widetilde{E}=\{xy \mid (x,y)\in E \textrm{ or } (y,x)\in E\}$, and
  allows deletion of all $|\widetilde E|\in O(|E|)$ edges with amortized
  cost $O(\log^2|V|)$ per edge deletion. The explicit traversal of the
  connected components to compute $U$ can be avoided as follows: Since
  $\outdegree(x)$ does not require updates, we can maintain a doubly-linked
  list of vertices $x$ for each color $i\in\{1,2\}$, and each value of
  $\outdegree(x)$ where $\sigma(x)=i$.  In order to be able to access the
  highest value of the out-degrees, we maintain these values together with
  pointer to the respective doubly-linked list in balanced binary search
  trees (BST), one for each color and each connected component.  The BSTs
  for the two colors are computed first for $(\G,\sigma)$ in
  $O(|V|\log(|V|))$ time and afterwards updated to fit with the out-degree
  of the currently considered component $\G_v$. To update these lists and
  BSTs for $\G_v$, observe first that $\G_v$ can be obtained from $G$ by
  stepwise deletion of single arcs, i.e.\ edges in the HDT data structure
  representing the underlying undirected versions.  We update, resp.,
  construct the pair of BSTs (one for each color) for each connected
  component as follows: Since a single arc deletion splits a connected
  component $\G'$ into at most two connected components $\G_1$, and $\G_2$,
  we can apply the well-known technique of traversing the smaller component
  \cite{Shiloach:81}.  The size of each connected component can be queried
  in $O(1)$ time in the HDT data structure.  Suppose w.l.o.g.\ that
  $|V(\G_1)|\leq |V(\G_2)|$. We construct a new pair of BSTs for $\G_1$,
  and delete the vertices $V(\G_1)$ and the respective degrees from the two
  original BSTs for $\G$, which then become the BSTs for $\G_2$.  More
  precisely, we delete each vertex $x\in V(\G_1)$ in the respective list
  corresponding to $\outdegree(x)$, and if the length of this list drops to
  zero, we also remove the corresponding out-degree in the BST. Likewise,
  we insert the out-degree of $x$ and an empty doubly-linked list into the
  newly-created BST for $\G_1$, if it is not yet present, and append $x$ to
  this list.  Note that the number of out-degree deletions and insertions
  does not exceed $|V(\G_1)|$.  Due to the technique of traversing the
  smaller component, every vertex is deleted and inserted at most
  $\lfloor\log |V| \rfloor$ times.  Therefore, we obtain an overall
  complexity of $O(|V|\log^2 |V|)$ for the maintenance of the BSTs where
  the additional log-factor originates from rebalancing the BSTs whenever
  necessary.

  In each recursion step, the set $U$ can now be obtained by listing (at
  most) the vertices with the maximal out-degree for each of the two
  colors. Finding the two out-degrees and corresponding lists in the BSTs
  requires $O(\log |V|)$ in each step, and thus $O(|V| \log |V|)$ in total.
  In order to determine whether these candidates $x$ are actually umbrella
  vertices, we have to check whether
  $\outdegree(x) = |V(G_v)|-|V(G_v)[\sigma(x)]|$.  The HDT data structure
  allows constant-time query of the size of a given connected component,
  since this information gets updated during the maintenance of the
  spanning forest. By the same means, we can keep track of the number of
  vertices of a specific color in each connected components.  Note that we
  only need to do this for one color $r$ since
  $|V(G_v)[s]|=|V(G_v)|-|V(G_v)[r]|$. This does not increase the overall
  effort for maintaining the data structure since it happens alongside the
  update of $|V(G_v)|$.

  In summary, the total effort is dominated by maintaining the
  connectedness information while deleting $O(|E|)$ arcs, i.e.,
  $O(|E|\log^2 |V|)$ time.
\end{proof}

As a direct consequence of Thm.~\ref{thm:support-leaves-are-S} the LRT of a
disconnected graph $\G$ is obtained by connecting the roots of the LRTs
$T_v$ of the connected components $G_v$ to an additional root vertex, see
also \cite[Cor.~4]{Geiss:19a}. Lemma~\ref{lem:connG} thus implies
\begin{theorem}
  \label{thm:2BMG-general-compl}
  The LRT of a 2-BMG can be computed in $O(|V| + |E|\log^2|V|)$. 
\end{theorem}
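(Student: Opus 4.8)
The plan is to reduce everything to the connected case, which is already settled by Lemma~\ref{lem:connG}. Suppose first that $(\G,\sigma)$ is connected. Then every vertex has out-degree at least one (sink-freeness), so $|E|\ge|V|$ and hence $|V|\in O(|E|)$; Lemma~\ref{lem:connG} therefore produces the LRT in time $O(|E|\log^2|V|)=O(|V|+|E|\log^2|V|)$, and we are done in this case.

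Now suppose $(\G,\sigma)$ is disconnected. First I would compute the connected components $(\G_1,\sigma\rst),\dots,(\G_c,\sigma\rst)$ of $\G$, together with their induced colorings, by a single linear-time graph traversal at cost $O(|V|+|E|)$. By Prop.~\ref{prop:same-colorset}, each $(\G_i,\sigma\rst)$ is again a connected 2-BMG, so Lemma~\ref{lem:connG} applies to each of them and yields the LRT $T_i$ of $(\G_i,\sigma\rst)$ in time $O(|E(\G_i)|\log^2|V(\G_i)|)$. Summing over $i$ and using $|V(\G_i)|\le|V|$ together with $\sum_i|E(\G_i)|=|E|$, the combined cost of these calls is $O\big(\sum_i|E(\G_i)|\log^2|V|\big)=O(|E|\log^2|V|)$. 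Finally, as observed immediately before the statement — a direct consequence of Thm.~\ref{thm:support-leaves-are-S}, cf.\ also \cite[Cor.~4]{Geiss:19a} — the LRT of $(\G,\sigma)$ is obtained by creating a fresh root $\rho$ and attaching the roots of $T_1,\dots,T_c$ as its children; this assembly step costs $O(c)\subseteq O(|V|)$. Adding the three contributions gives the claimed bound $O(|V|+|E|\log^2|V|)$.

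I do not expect a genuine obstacle here: the argument is essentially bookkeeping on top of Lemma~\ref{lem:connG}. The single point that deserves a line of justification is that the per-component invocations of Lemma~\ref{lem:connG} aggregate correctly — namely that applying the connected-case algorithm separately to each $\G_i$ (rather than once to all of $\G$) is exactly what lets the additive $\log^2$-factor be charged against $|V|$ instead of the (potentially smaller) $|V(\G_i)|$, while the cost of the initial decomposition into components and of the final re-rooting is absorbed by the additive $O(|V|+|E|)$ term. No properties of 2-BMGs beyond Prop.~\ref{prop:same-colorset} and the decomposition remark preceding the theorem are required.
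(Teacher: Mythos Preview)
Your proof is correct and follows essentially the same route as the paper: decompose into connected components in $O(|V|+|E|)$, apply Lemma~\ref{lem:connG} to each, bound $\log^2|V(\G_i)|$ by $\log^2|V|$, and sum. The separate treatment of the connected case is redundant (it is the case $c=1$ of your disconnected argument), but everything is correct and matches the paper's argument.
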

\begin{proof}
  The connected components $G_i=(V_i,E_i)$ of $G=(V,E)$ can be enumerated
  in $O(|V|+|E|)$ operations, e.g.\ using a breadth-first search on the
  underlying undirected graph. By Lemma~\ref{lem:connG},
  $O(|E_i|\log^2|V_i|)\le O(|E_i|\log^2|V|)$ operations are required for
  each $G_i$.  Hence, the total effort is
  $O(|V|+|E|+\log^2|V|\sum_i|E_i|)=O(|V|+|E|\log^2|V|)$.
\end{proof}

In order to illustrate the improved complexity for the construction of LRTs
of 2-BMGs, we implemented both the well-known triple-based approach,
i.e., the application of \texttt{BUILD} \cite{Aho:81} with the
informative triples defined in Eq.~(\ref{eq:informative-triples}) as
input, and the new approach of Alg.~\ref{alg:2-col-BMG}.  As input, we
used 2-BMGs that where randomly generated as follows: First, we simulate
random trees $T$ recursively, starting from a single vertex, by
attaching to a randomly chosen vertex $v$ either a single leaf if $v$ is
an inner vertex of $T$ or a pair of leaves if $v$ was a leaf. The
construction stops when the desired number of leaves is reached. Note
that the resulting tree is phylogenetic by construction.  Each leaf is
then colored by selecting at random one of the two colors. Finally, we
compute the 2-BMG $\G(T,\sigma)$ from each of the simulated leaf-colored
trees $(T,\sigma)$. 

\begin{figure}[t]
  \begin{center}
    \includegraphics[width=0.85\textwidth]{./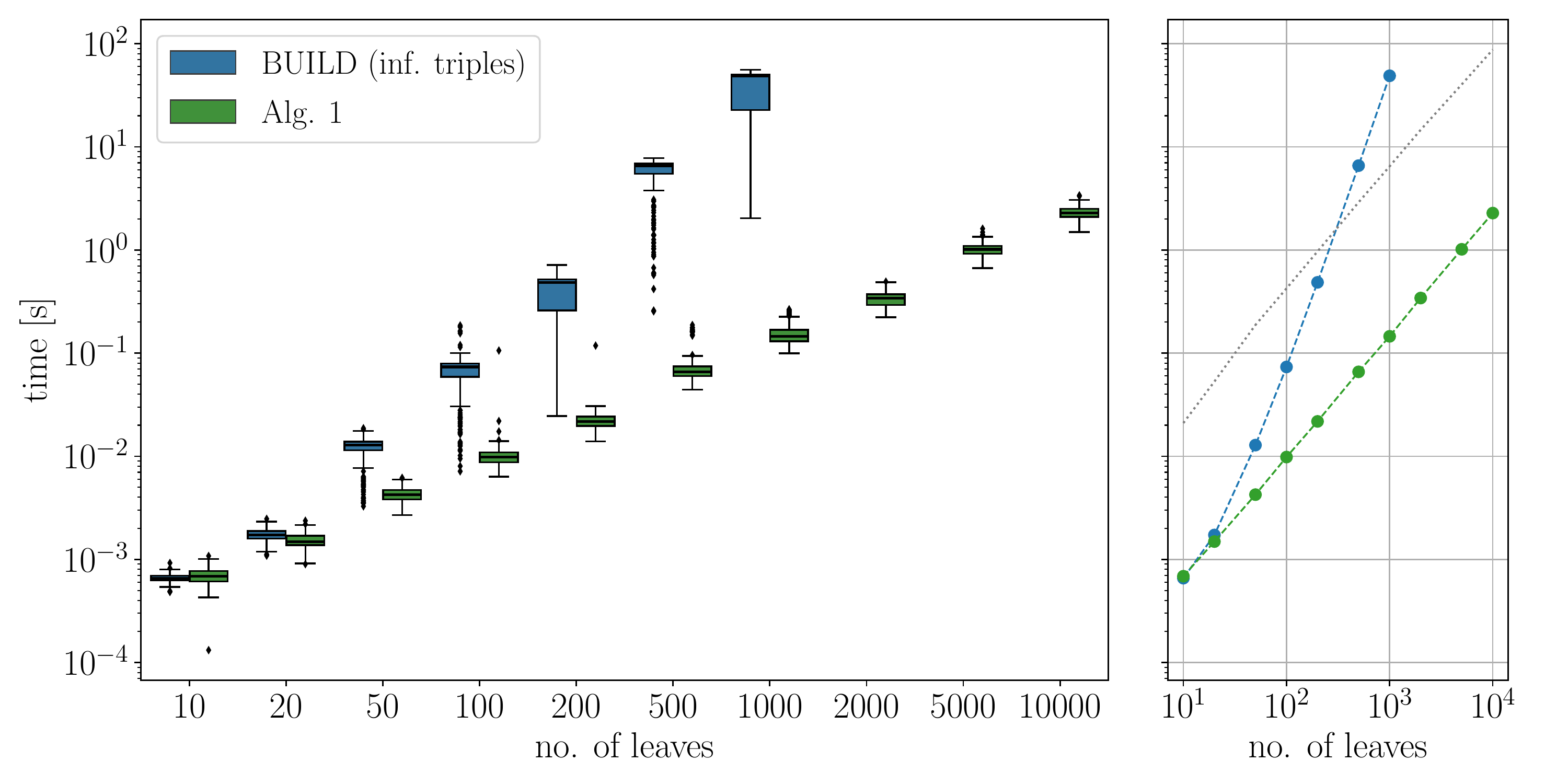}
  \end{center}
  \caption{Running time comparison of the general approach for constructing
    an LRT using \texttt{BUILD} (blue) vs.\ Alg.~\ref{alg:2-col-BMG}
    (green).  For each number of leaves, 200 2-BMGs where generated as
    described in the text. In the left panel, the median values are shown
    with logarithmic axes.  The additional dotted line indicates the median
    values of the size of the simulated BMGs, i.e.\ the number of arcs,
    scaled by a factor $10^{-3}$.  We did not compute the LRTs with the
    first method for instances with more than 1000 leaves because of the
    excessive computational cost.}
  \label{fig:runtime-boxplots}
\end{figure}

Both methods for the LRT computation were implemented in Python. Moreover,
we note that we did not implement the sophisticated dynamic data structures
used in the proof of Lemma~\ref{lem:connG}, but a rather na{\"i}ve
implementation of Alg.~\ref{alg:2-col-BMG}.  Nevertheless,
Fig.~\ref{fig:runtime-boxplots} shows a remarkable improvement of the
running time when compared to the general $O(|V|\,|E| \log^2 |V|)$ approach
for $\ell$-BMGs detailed in \cite{Geiss:19a}. Empirically, we observe that
the running time of Alg.~\ref{alg:2-col-BMG} indeed scales nearly linearly
with the number of edges. 

\section{Binary-explainable 2-BMGs}
\label{sect:be-2BMG}

Binary phylogenetic trees are of particular interest in practical
applications. Not every 2-BMG can be explained by a binary tree. The
subclass of \emph{binary-explainable ($\ell$-)BMG} are characterized
among all BMGs by the absence of single forbidden subgraph called
\emph{hourglass} \cite{Schaller:20p,Schaller:20x}, illustrated in
Fig.~\ref{fig:hourglass}.  In this section we briefly describe a
modification of Alg.~\ref{alg:2-col-BMG} that allows the efficient
recognition of binary-explainable 2-BMGs.
\begin{definition}
  An \emph{hourglass} in a properly vertex-colored graph $(\G,\sigma)$,
  denoted by $[xy \hourglass x'y']$, is a subgraph $(\G[Q],\sigma_{|Q})$
  induced by a set of four pairwise distinct vertices
  $Q=\{x, x', y, y'\}\subseteq V(\G)$ such that (i)
  $\sigma(x)=\sigma(x')\ne\sigma(y)=\sigma(y')$, (ii) $(x,y),(y,x)$ and 
  $(x'y'),(y',x')$ are bidirectional arcs
  in $\G$, (iii) $(x,y'),(y,x')\in E(\G)$, and (iv)
  $(y',x),(x',y)\notin E(\G)$.
\end{definition}

\begin{figure}[t]
  \begin{center}
    \includegraphics[width=0.45\linewidth]{./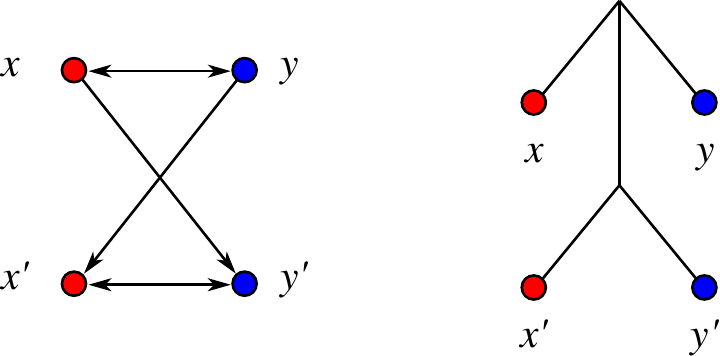}
  \end{center}
  \caption{The tree on the r.h.s.\ explains the hourglass graph on the
    l.h.s.}
  \label{fig:hourglass}
\end{figure}

A graph $(\G,\sigma)$ is called \emph{hourglass-free} if it does not contain 
an hourglass as an induced subgraph.
We summarize Lemma~31 and Prop.~8 in \cite{Schaller:20x} as
\begin{proposition}{}
  \label{prop:binary-iff-hourglass-free-iff-colors}
  For every BMG $(\G,\sigma)$, the following three statements are equivalent:
  \begin{enumerate}
    \item $(\G,\sigma)$ is binary-explainable.
    \item $(\G,\sigma)$ is hourglass-free.
    \item If $(T,\sigma)$ is a tree explaining $(\G,\sigma)$, then there is
      no vertex $u\in V^0(T)$ with three distinct children $v_1$, $v_2$,
      and $v_3$ and two distinct colors $r$ and $s$ satisfying
    \begin{enumerate}
      \item $r\in\sigma(L(T(v_1)))$, 
      $r,s\in\sigma(L(T(v_2)))$, 
      and $s\in\sigma(L(T(v_3)))$, and
      \item $s\notin\sigma(L(T(v_1)))$, and 
      $r\notin\sigma(L(T(v_3)))$.
    \end{enumerate}
  \end{enumerate}
\end{proposition}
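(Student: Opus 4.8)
The plan is to establish the cyclic chain $(1)\Rightarrow(2)\Rightarrow(3)\Rightarrow(1)$, since each link admits a concrete witness-based argument. For $(1)\Rightarrow(2)$ I would argue by contraposition: assume $(\G,\sigma)$ contains an induced hourglass $[xy\hourglass x'y']$ and let $(T,\sigma)$ be an arbitrary explaining tree; I claim $v\coloneqq\lca_T(x,y)$ has at least three children, so $T$ is not binary, and hence $(\G,\sigma)$ cannot be binary-explainable. Indeed, since $(x,y),(x,y')\in E$ and $y,y'$ both realise the minimal color-$\sigma(y)$ lca with $x$, one gets $\lca_T(x,y)=\lca_T(x,y')=v$, and symmetrically $\lca_T(y,x)=\lca_T(y,x')=v$. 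From $(x',y')\in E$ but $(x',y)\notin E$ one obtains $\lca_T(x',y')\prec\lca_T(x',y)=v$, so $x'$ and $y'$ both lie below a single child $v_c$ of $v$; moreover $x,y\notin T(v_c)$, since otherwise $\lca_T(x,y')$ (resp.\ $\lca_T(y,x')$) would drop strictly below $v$. Thus $v$ has the three distinct children containing $x$, containing $y$, and $v_c$.

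For $(2)\Rightarrow(3)$ I would again use contraposition: suppose some explaining tree $(T,\sigma)$ has a vertex $u$ with distinct children $v_1,v_2,v_3$ and colors $r\neq s$ as in (3)(a),(b). Choose $x\in L(T(v_1))$ with $\sigma(x)=r$ and $y\in L(T(v_3))$ with $\sigma(y)=s$, and pick $x'\in L(T(v_2))$, $\sigma(x')=r$, $y'\in L(T(v_2))$, $\sigma(y')=s$ minimising $\lca_{T(v_2)}(\cdot,\cdot)$ over all color-$r$/color-$s$ pairs inside $T(v_2)$; a short minimality argument (using $\G(T(v_2),\sigma\rst)=\G[L(T(v_2))]$) gives $(x',y'),(y',x')\in E$. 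Because $s\notin\sigma(L(T(v_1)))$ and $r\notin\sigma(L(T(v_3)))$, every color-$s$ leaf has lca $\succeq u$ with $x$ and every color-$r$ leaf has lca $\succeq u$ with $y$, which yields $(x,y),(y,x),(x,y'),(y,x')\in E$; and since $x',y'$ lie strictly below $u$ together, $x$ is not a best match of $y'$ and $y$ is not a best match of $x'$, i.e.\ $(y',x),(x',y)\notin E$. These are exactly the arcs and non-arcs of an induced hourglass $[xy\hourglass x'y']$.

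For $(3)\Rightarrow(1)$ I would apply (3) to the (unique) least resolved tree $(T^{*},\sigma)$, which is an explaining tree. At a polytomy $u$ of $T^{*}$ with children $v_1,\dots,v_k$, write $C_s$ for the set of children whose subtree contains color $s$. Condition (3)(a),(b) at $u$ says precisely that for any two colors $r,s$ one cannot have $C_r\setminus C_s$, $C_r\cap C_s$, and $C_s\setminus C_r$ all nonempty, i.e.\ the family $\{C_s\}$ is laminar. One can then build a binary tree $B$ on the $k$ children that realises this laminar family, so that each $C_s$ is the leaf set of a subtree of $B$; the refinement step replaces $u$ by $B$ while keeping the original subtrees $T^{*}(v_i)$ attached, and this is carried out independently at every polytomy of $T^{*}$, producing a binary tree $T$. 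It remains to check $\G(T,\sigma)=(\G,\sigma)$: the only lca's that change are those of pairs $x\in T^{*}(v_i)$, $y\in T^{*}(v_j)$ with $i\neq j$ (lca $=u$ in $T^{*}$); for such a pair $y$ is a best match of $x$ in $T^{*}$ exactly when $\sigma(y)\notin\sigma(L(T^{*}(v_i)))$, and in that case $v_i\notin C_{\sigma(y)}$, so $\lca_B(v_i,v_j)$ is the same for every $v_j\in C_{\sigma(y)}$ and the same equivalence holds in $T$ — no best-match arc is gained or lost.

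I expect the main obstacle to be the $(3)\Rightarrow(1)$ step: spotting that the seemingly ad hoc condition (3) is exactly a laminarity condition on the color-sets of the children of each polytomy, constructing the binary realisation, and then doing the (somewhat delicate) bookkeeping that the polytomy-by-polytomy refinement preserves the BMG — in particular that resolving one polytomy does not disturb the color-sets governing another, and that no pair whose lca is unchanged acquires or loses a best match. By comparison, the two hourglass-to-tree translations in $(1)\Rightarrow(2)$ and $(2)\Rightarrow(3)$ are mechanical once the lca identities forced by the best-match relations are written down.
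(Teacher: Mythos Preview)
The paper does not give its own proof of this proposition; it merely records it as a summary of Lemma~31 and Prop.~8 of \cite{Schaller:20x}. So there is no in-paper argument to compare against, and your proposal has to be judged on its own.

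Your cyclic chain is sound. The implications $(1)\Rightarrow(2)$ and $(2)\Rightarrow(3)$ are carried out exactly as one would expect: in the first you correctly force three distinct children of $v=\lca_T(x,y)$ from the hourglass arcs/non-arcs, and in the second your ``minimal-lca pair'' selection inside $T(v_2)$ does produce the reciprocal edges $(x',y'),(y',x')$, after which the remaining (non-)arcs follow from the color constraints (a),(b). Both arguments are complete as stated.

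For $(3)\Rightarrow(1)$ your key observation---that condition~(3) at a vertex $u$ is precisely laminarity of the family $\{C_s\}_s$ of children containing each color---is correct and is the right structural reformulation. Two points deserve a line of extra care when you write it out in full. First, ``build a binary tree $B$ realising the laminar family'' is fine, but say explicitly why: complete $\{C_s\}$ to a hierarchy by adding the singletons and the full child set, take the associated (possibly non-binary) tree, and then refine arbitrarily to binary; refinement only adds clusters, so each $C_s$ remains a cluster of $B$. Second, your BMG-preservation paragraph only treats pairs whose $\lca$ equals $u$ in $T^{*}$; you should also remark that for a pair $x,y$ with $\lca_{T^{*}}(x,y)\neq u$ the best-match status is unchanged because every comparison $\lca(x,y)\preceq\lca(x,y')$ involved either stays entirely inside some $T^{*}(v_i)$ or compares a value $\preceq v_i$ with one $\succ v_i$, and both relations survive replacing $u$ by $B$. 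With these two sentences added, your argument that resolving one polytomy preserves $\G(T,\sigma)$ is airtight; since distinct polytomies of $T^{*}$ sit in disjoint parts of the tree, the refinements compose without interference, and the final binary tree still explains $(\G,\sigma)$.
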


The following Lemma shows that the third condition in
Prop.~\ref{prop:binary-iff-hourglass-free-iff-colors} can be translated to
a much simpler statement in terms of the support leaves of its LRT.
\begin{lemma}
  \label{lem:be-2BMG-complexity-}
  A 2-BMG $(\G,\sigma)$ contains an induced hourglass if and only if its LRT
  $(T,\sigma)$ contains an inner vertex $u$ such that $S_{u}$ contains support
  vertices of both colors and $V(\G(T(u))-S_{u})\ne\emptyset$.
\end{lemma}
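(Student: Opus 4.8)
The plan is to read the condition on the LRT $(T,\sigma)$ off the third equivalent characterization of hourglass-freeness in Prop.~\ref{prop:binary-iff-hourglass-free-iff-colors}, using that only two colors $r\ne s$ occur. The link is the following translation, which I would establish first: an inner vertex $u$ of the LRT satisfies ``$S_u$ contains support leaves of both colors and $V(\G(T(u))-S_u)\ne\emptyset$'' if and only if $u$ has three pairwise distinct children $v_1,v_2,v_3$ with $\sigma(L(T(v_1)))=\{r\}$, $\sigma(L(T(v_2)))=\{r,s\}$, and $\sigma(L(T(v_3)))=\{s\}$ --- i.e.\ exactly the configuration forbidden in Prop.~\ref{prop:binary-iff-hourglass-free-iff-colors}(3). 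Indeed, a child of $u$ with a monochromatic subtree is a single leaf by Lemma~\ref{lem:single-color-is-leaf} (note $v_i\prec_T u\preceq_T\rho_T$), whereas a child that is an inner vertex has a two-colored subtree by Cor.~\ref{cor:inner-vertex-mult-col}; hence two differently colored support leaves of $u$ play the role of $v_1,v_3$ and any inner-vertex child $v_2$ of $u$ accounts for the nonempty remainder $L(T(u))\setminus S_u$, and conversely $v_1,v_3$ are forced to be (differently colored) leaves in $S_u$ while $v_2$ is an inner vertex with $L(T(v_2))\subseteq L(T(u))\setminus S_u$.

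With this in hand the direction ``$\Leftarrow$'' is short: if the LRT has such a $u$, it is an explaining tree of $(\G,\sigma)$ that contains the configuration forbidden by Prop.~\ref{prop:binary-iff-hourglass-free-iff-colors}(3), so $(\G,\sigma)$ is not hourglass-free, i.e.\ it has an induced hourglass. (One can also build the hourglass by hand: pick $x,y\in S_u$ with $\sigma(x)=r$, $\sigma(y)=s$, and a reciprocal pair $x',y'\in L(T(v_2))$ with $\sigma(x')=r$, $\sigma(y')=s$ --- such a pair exists since $\G(T(v_2),\sigma\rst)$ is a connected $2$-BMG meeting both colors, e.g.\ by taking a $\preceq_T$-minimal vertex of $T(v_2)$ with two-colored subtree --- and check, using Def.~\ref{def:bm} together with $\lca_T(x,z)=u$ for all $z\in L(T(u))\setminus\{x\}$ and $\lca_T(x',z')\preceq_T v_2\prec_T u$ for all $z'\in L(T(v_2))$, that $[xy\hourglass x'y']$ is induced.)

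The direction ``$\Rightarrow$'' is the substantive part, and here Prop.~\ref{prop:binary-iff-hourglass-free-iff-colors}(3) is of limited help since it only produces the bad configuration in \emph{some} explaining tree, not necessarily the LRT; so I would argue directly. Let $[ab\hourglass a'b']$ be an induced hourglass with $\sigma(a)=\sigma(a')=r\ne s=\sigma(b)=\sigma(b')$ and put $u:=\lca_T(a,b)$, an inner vertex because $a\ne b$. From $(a,b),(a,b')\in E(\G)$ and Def.~\ref{def:bm} (both $b,b'$ are best matches of $a$) one gets $\lca_T(a,b)=\lca_T(a,b')=u$, and symmetrically $(b,a),(b,a')\in E(\G)$ gives $\lca_T(a',b)=u$; the non-arc $(a',b)\notin E(\G)$ together with $(a',b')\in E(\G)$ forces $\lca_T(a',b')\prec_T u$. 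Hence $a$, $b$, and the pair $a',b'$ sit in three pairwise distinct subtrees below children $v_1,v_3,v_2$ of $u$, respectively. A further best-match argument shows $T(v_1)$ is monochromatic of color $r$ and $T(v_3)$ monochromatic of color $s$ (an oppositely colored leaf in $T(v_1)$ would be, or beat, a best match of $a$ with $\lca$ strictly below $u$, contradicting $\lca_T(a,b)=u$; symmetrically for $b$), while $T(v_2)$ contains $a'$ and $b'$ and is therefore two-colored. By Lemma~\ref{lem:single-color-is-leaf}, $v_1$ and $v_3$ are leaves, so $v_1=a$ and $v_3=b$ lie in $S_u$ and give $S_u$ vertices of both colors; and $v_2$ is an inner-vertex child of $u$, so $a'\in L(T(v_2))\subseteq L(T(u))\setminus S_u$ and thus $V(\G(T(u))-S_u)\ne\emptyset$. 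So $u$ is the required vertex.

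I expect the only delicate part to be the $\lca$-bookkeeping in the last paragraph --- deducing $\lca_T(a,b)=\lca_T(a,b')=\lca_T(a',b)\succ_T\lca_T(a',b')$ from the four bidirectional arcs and the two missing arcs, and turning the informal ``best match strictly below $u$'' step into the precise statement that $T(v_1)$ and $T(v_3)$ are monochromatic. Everything else is a routine application of Lemma~\ref{lem:single-color-is-leaf}, Cor.~\ref{cor:inner-vertex-mult-col}, and Prop.~\ref{prop:binary-iff-hourglass-free-iff-colors}.
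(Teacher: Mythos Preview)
Your proof is correct. The ``$\Leftarrow$'' direction and the preliminary translation between the support-leaf condition on $u$ and the three-child configuration of Prop.~\ref{prop:binary-iff-hourglass-free-iff-colors}(3) match the paper's argument essentially verbatim.

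For ``$\Rightarrow$'' you take a genuinely different route. The paper simply invokes Prop.~\ref{prop:binary-iff-hourglass-free-iff-colors} to assert that the forbidden configuration already occurs in the LRT $(T,\sigma)$. Your observation that, as stated in this paper, condition~(3) is universally quantified over explaining trees---so its failure a priori only yields the configuration in \emph{some} explaining tree---is formally correct; the paper is tacitly relying on the stronger per-tree version proved in the cited source. Rather than appeal to that, you locate the configuration directly in the LRT: from the arcs and non-arcs of $[ab\hourglass a'b']$ you derive $\lca_T(a,b)=\lca_T(a,b')=\lca_T(a',b)=u\succ_T\lca_T(a',b')$, use best-match minimality to force the subtrees below $u$ containing $a$ and $b$ to be monochromatic, and then apply Lemma~\ref{lem:single-color-is-leaf} to conclude $a,b\in S_u$ while $a',b'$ lie below an inner-vertex child of $u$. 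This buys you a self-contained argument that does not depend on how Prop.~\ref{prop:binary-iff-hourglass-free-iff-colors} is phrased, at the cost of the extra $\lca$ bookkeeping; the paper's version is shorter but leans on the external reference for exactly the step you flag.
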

\begin{proof}
  By Thm.~\ref{thm:2BMG-algo-works}, Alg.~\ref{alg:2-col-BMG} returns the LRT 
  $(T,\sigma)$ for $(\G,\sigma)$ if and only if $(\G,\sigma)$ is a 2-BMG.
  Hence, we assume in the following that the latter is satisfied.
  As a consequence of Prop.~\ref{prop:binary-iff-hourglass-free-iff-colors}
  and the fact that $(T,\sigma)$ explains $(\G,\sigma)$, we know that
  $(\G,\sigma)$ is binary-explainable if and only if there is no vertex
  $u\in V^0(T)$ with three distinct children $v_1$, $v_2$, and $v_3$ and
  two distinct colors $r$ and $s$ satisfying (a) $r\in\sigma(L(T(v_1)))$,
  $r,s\in\sigma(L(T(v_2)))$, and $s\in\sigma(L(T(v_3)))$, and (b)
  $s\notin\sigma(L(T(v_1)))$, and $r\notin\sigma(L(T(v_3)))$.

  First, suppose that $(\G,\sigma)$ contains an hourglass, i.e., by
  Prop.~\ref{prop:binary-iff-hourglass-free-iff-colors} there is a vertex
  $u\in V^0(T)$ with distinct children $v_1$, $v_2$, and $v_3$ and two
  distinct colors $r$ and $s$ satisfying (a) and (b).  Since $(\G,\sigma)$
  is 2-colored and $(T,\sigma)$ its LRT,
  Lemma~\ref{lem:single-color-is-leaf} together with
  $s\notin\sigma(L(T(v_1)))$ and $r\notin\sigma(L(T(v_3)))$ implies that
  $v_1$ of color $r$ and $v_2$ of color $s$, respectively, are both leaves.
  In particular, therefore, we know that $v_1, v_2\in S_{u}$ are support
  leaves.  By Lemma~\ref{lem:remove-support-leaves} and since
  $\G(T(u),\sigma\rst)$ is also a BMG, the connected components of
  $(\G(T(u))-S_{u},\sigma\rst)=(\G[L(T(u))]-S_{u},\sigma\rst)$ (cf.\
  Lemma~\ref{lem:subgraph}) are exactly the BMGs $\G(T(v),\sigma\rst)$ with
  $v\in\child(u)\setminus S_{u}$.  Together with the fact that
  $v_2\in V^0(T)$ as a consequence of $L(T(v_2))$ containing both colors
  $r$ and $s$, this implies that $(\G(T(u))-S_{u},\sigma\rst)$ is not the
  empty graph.

  Conversely, suppose there a vertex $u\in V^0(T)$ such that $S_{u}$
  contains support vertices $v_1$ and $v_3$ with distinct colors
  $\sigma(v_1)\ne\sigma(v_3)$ and $V(\G(T(u))-S_{u})\ne\emptyset$, i.e.,
  $u$ has a child $v_2\in \child(u)\setminus S_u$ that is not a support
  leaf and hence satisfies $v_2\in V^0(T)$.
  Lemma~\ref{lem:single-color-is-leaf} implies that $L(T(v_2))$ contains
  both colors since $v_2\in V^0(T)$. Hence, the three children $v_1$,
  $v_2$, and $v_3$ of $u$ satisfy conditions (a) and (b) of
  Prop.~\ref{prop:binary-iff-hourglass-free-iff-colors}(3), and thus
  $(\G,\sigma)$ contains an induced hourglass.
\end{proof}

\begin{corollary}
  \label{cor:be-2BMG-complexity}
  It can be checked in $O(|V| + |E|\log^2|V|)$ whether or not a properly
  2-colored graph $(\G,\sigma)$ is a binary-explainable BMG.
\end{corollary}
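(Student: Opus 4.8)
The plan is to run Alg.~\ref{alg:2-col-BMG} essentially unchanged and to piggyback the hourglass test onto its recursion, invoking Lemma~\ref{lem:be-2BMG-complexity-} for the correctness of this test. First I would handle the disconnected case exactly as in Thm.~\ref{thm:2BMG-general-compl}: split $(\G,\sigma)$ into its connected components $G_i$ in $O(|V|+|E|)$ time, and observe that $(\G,\sigma)$ is a BMG iff each $G_i$ is, and that the LRT is obtained by hanging the component LRTs under a common root. Since this common root has only inner-vertex children, it never gives rise to a nonempty support set, so $(\G,\sigma)$ is binary-explainable iff each $(G_i,\sigma\rst)$ is; hence it suffices to handle the connected case within the claimed bound, summing over components as in the proof of Thm.~\ref{thm:2BMG-general-compl}.

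For a connected input, I would first note that Alg.~\ref{alg:2-col-BMG} already either exits false (and then $(\G,\sigma)$ is not a 2-BMG, so a fortiori not a binary-explainable BMG) or returns the LRT $(T,\sigma)$ in $O(|E|\log^2|V|)$ time by Lemma~\ref{lem:connG} and Thm.~\ref{thm:2BMG-algo-works}. By Lemma~\ref{lem:be-2BMG-complexity-}, $(\G,\sigma)$ contains an induced hourglass iff some inner vertex $u$ of $T$ has $S_u$ containing support leaves of both colors while $V(\G(T(u))-S_u)\neq\emptyset$. The key point is that every inner vertex $u$ of $T$ is created in exactly one recursion step of Alg.~\ref{alg:2-col-BMG}, and at that step the algorithm has already computed the set $S^{(2)}=S_u$ of support leaves of $u$; moreover $\G(T(u))-S_u = \G_v - S^{(2)}$ is precisely the (possibly empty) graph on which it recurses (via Lemma~\ref{lem:remove-support-leaves} and Lemma~\ref{lem:subgraph}). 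So in each recursion step I would simply check (i) whether $S^{(2)}$ contains vertices of both colors — trivially $O(1)$ amortized if we track the two color-counts of $S^{(2)}$ as it is built — and (ii) whether the residual graph $\G-S^{(2)}$ is nonempty, which is immediate from the vertex bookkeeping already maintained (the HDT data structure and the BSTs of Lemma~\ref{lem:connG} give component sizes in $O(1)$). If both hold for some step, exit false (not binary-explainable); otherwise, after the recursion terminates with the LRT, report that $(\G,\sigma)$ is binary-explainable.

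The correctness then follows directly: by Lemma~\ref{lem:be-2BMG-complexity-} the algorithm reports a violation iff $T$ contains an inner vertex witnessing an induced hourglass, i.e.\ iff $(\G,\sigma)$ is not hourglass-free, which by Prop.~\ref{prop:binary-iff-hourglass-free-iff-colors} is equivalent to not being binary-explainable; and if Alg.~\ref{alg:2-col-BMG} exits false the graph is not even a 2-BMG. For the complexity, the added per-step work is dominated by already-accounted-for operations: maintaining the two color-counters of $S^{(2)}$ costs $O(1)$ per vertex inserted into a support set, and each vertex enters a support set at most twice by Lemma~\ref{lem:W} (in the BMG case; in the general case the $W$-passing trick of Lemma~\ref{lem:connG} caps it at two as well), and the residual-nonemptiness test is $O(1)$ per step with the already-maintained data structures. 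Hence the total overhead is $O(|V|+|E|)$, and the overall bound remains $O(|V|+|E|\log^2|V|)$ by summing Lemma~\ref{lem:connG} over components as in Thm.~\ref{thm:2BMG-general-compl}. I expect the only delicate point to be making explicit that the ``$V(\G(T(u))-S_u)\neq\emptyset$'' condition of Lemma~\ref{lem:be-2BMG-complexity-} coincides with ``$\G-S^{(2)}$ nonempty'' at the corresponding recursion step, and that in a step where the algorithm would otherwise recurse on a singleton component (and thus exit false), non-binary-explainability is moot because the graph is already not a 2-BMG — so the two failure modes can be merged into a single \textbf{exit false} without ambiguity.
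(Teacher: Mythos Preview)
Your proposal is correct and follows essentially the same approach as the paper: piggyback the hourglass test of Lemma~\ref{lem:be-2BMG-complexity-} onto the recursion of Alg.~\ref{alg:2-col-BMG}, checking in each step whether $S^{(2)}$ contains both colors and whether the residual graph is nonempty, and bound the added work by the ``each vertex appears in $U$ at most twice'' argument from Lemma~\ref{lem:connG}. Your write-up is somewhat more explicit than the paper's about the disconnected case and about matching the recursion-step data to the quantities in Lemma~\ref{lem:be-2BMG-complexity-}; the only minor imprecision is that a vertex enters a \emph{support} set $S^{(2)}$ at most once (it is then removed), whereas the ``at most twice'' bound applies to the umbrella set $U$---but since $S^{(2)}\subseteq U$ your bound is still valid.
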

\begin{proof}
  Recall that there is a one-to-one correspondence between the recursion
  step in Alg.~\ref{alg:2-col-BMG} and the inner vertices $u\in V^0(T)$.
  As argued in the proof of Lemma~\ref{lem:connG}, every vertex appears at
  most twice in an umbrella set $U$. Therefore, it can be checked in
  $O(|V|)$ total time whether $S=S^{(2)}$ contains vertices of both
  colors. Since the vertex set of $\G_u-S_u$ is maintained in the dynamic
  graph HDT data structure, it can be checked in constant time for each $u$
  whether $\G_u-S_u$ is non-empty. The additional effort to check the
  condition of Lemma~\ref{lem:be-2BMG-complexity-} is therefore only
  $O(|V|)$.  Hence, we still require a total effort of
  $O(|V| + |E|\log^2|V|)$ (cf.\ Thm.~\ref{thm:2BMG-general-compl}).
\end{proof}
Cor.~\ref{cor:be-2BMG-complexity} improves the complexity for the decision 
whether a 2-BMG is binary-explainable as compared to the $O(|V|^3 \log^2 
|V|)$-time algorithm for (general) BMGs presented in \cite{Schaller:20p}.

\section{Concluding Remarks} 

We have shown here that 2-BMGs have a recursive structure that is reflected
in certain induced subgraphs that correspond to subtrees of the LRT. The
leaves connected directly to the root of a given subtree play a special
role as support vertices in the corresponding subgraph of the 2-BMG. Since
the support vertices of the root can be identified efficiently in a given
input graph, there is a recursive decomposition of $(\G,\sigma)$ that
directly yields the LRT. With the help of a dynamic data structure to
maintain connectedness information \cite{Holm:01}, this provides an
$O(|V|+|E|\log^2 |V|)$ algorithm to recognize both 2-BMGs and binary
explainable 2-BMGs and to construct the corresponding LRT. This provides
a considerable speed-up compared to the previously known
$O(|V||E| \log^2 |V|)$ and $O(|V|^3)$ algorithms. Empirically, we observe
a substantial speed-up even if simpler data structures are used to
implement Alg.~\ref{alg:2-col-BMG}.

Both the theoretical insights and Alg.~\ref{alg:2-col-BMG} itself have
potential applications to the analysis of gene families in computational
biology. Real-life data necessarily contain noise, and thus likely will
deviate from perfect BMGs, naturally leading to graph editing problems
for BMGs. Like many combinatorial problems in phylogenetics, these are
NP-complete \cite{Schaller:20y} and hence require approximation algorithms
and heuristics. The support leaves introduced here provide an avenue to a
new class of heuristics, conceptually distinct from approaches that
attempt to extract consistent subsets of triples from
$\mathscr{R}(\G,\sigma)$.


\section*{Acknowledgments}

This work was supported in part by the Austrian Federal Ministries BMK and
BMDW and the Province of Upper Austria in the frame of the COMET Programme
managed by FFG, and the \emph{Deutsche Forschungsgemeinschaft}. 


\clearpage
\bibliography{preprint-LRT-2BMG}
\bibliographystyle{abbrvurl}

\end{document}